\documentclass[reqno]{amsart}
\usepackage{graphicx}

\usepackage{enumitem}
\usepackage[numbers,sort&compress]{natbib}
\usepackage[foot]{amsaddr}
\usepackage[final,color,notref,notcite]{showkeys}
\usepackage{booktabs}
\usepackage{dsfont}
\usepackage{xcolor}
\usepackage{colortbl}
\usepackage{multirow,array}
\usepackage{slashed}

\usepackage{hyperref}
\hypersetup{colorlinks,citecolor=green}
\usepackage[capitalise]{cleveref}
\definecolor{labelkey}{rgb}{0,.56,.7}
\definecolor{tablerows}{rgb}{0.906, 0.8, .949}

\usepackage[charter]{mathdesign}
\usepackage[scaled=.96,lf]{XCharter}

\usepackage{algorithm}
\usepackage{algpseudocode}

\theoremstyle{plain}
\newtheorem{thm}{Theorem}

\newtheorem{cor}[thm]{Corollary}
\newtheorem{lem}[thm]{Lemma}
\newtheorem{prop}[thm]{Proposition}

\theoremstyle{definition}
\newtheorem{defi}{Definition}
\newtheorem*{exa}{Example}

\theoremstyle{remark}
\newtheorem*{rem}{\bf Remark}

\let\x\times

\let\ol\overline
\mathchardef\myhyph="2D

\makeatletter
\newcommand{\looongrightarrow}[1][]{\ext@arrow 0099\rightarrowfill@{#1}}
\makeatother

\def\bbZ{\mathbb{Z}}
\def\bbR{\mathbb{R}}
\def\bbN{\mathbb{N}}

\def\bbE{\mathbb{E}}
\def\bbP{\mathbb{P}}

\def\L{\mathcal{L}}

\DeclareSymbolFont{Eulerscripteusm10}{U}{eus}{m}{n}
\SetSymbolFont{Eulerscripteusm10}{bold}{U}{eus}{b}{n}
\DeclareMathSymbol{\euB}{\mathord}{Eulerscripteusm10}{"42}
\DeclareMathSymbol{\euC}{\mathord}{Eulerscripteusm10}{"43}
\DeclareMathSymbol{\euF}{\mathord}{Eulerscripteusm10}{"46}
\DeclareMathSymbol{\euI}{\mathord}{Eulerscripteusm10}{"4A}
\DeclareMathSymbol{\euK}{\mathord}{Eulerscripteusm10}{"4B}
\DeclareMathSymbol{\euL}{\mathord}{Eulerscripteusm10}{"4C}
\DeclareMathSymbol{\euQ}{\mathord}{Eulerscripteusm10}{"51}
\DeclareMathSymbol{\euR}{\mathord}{Eulerscripteusm10}{"52}
\DeclareMathSymbol{\euS}{\mathord}{Eulerscripteusm10}{"53}
\DeclareMathSymbol{\euX}{\mathord}{Eulerscripteusm10}{"58}
\DeclareMathSymbol{\euZ}{\mathord}{Eulerscripteusm10}{"5A}

\DeclareMathSymbol{\rH}{\mathord}{Eulerscripteusm10}{"48}

\def\df{\overset{\mathrm{df}}{=}}

\newcommand{\wt}{\mathop{{\mathrm{wt}}}\nolimits}

\newcommand{\rank}{\mathop{{\mathrm{rank}}}\nolimits}
\newcommand{\im}{\mathop{{\mathrm{im}}}\nolimits}

\newcommand{\inter}{\mathop{{\mathrm{int}}}\nolimits}

\newcommand{\id}{\mathop{{\mathrm{id}}}\nolimits}

\def\a{\alpha}
\def\b{\beta}
\def\g{\gamma}
\def\d{\delta}

\def\la{\lambda}

\def\bbeta{\boldsymbol{\b}}

\allowdisplaybreaks

\makeatletter
\renewcommand\subsection{\@startsection{subsection}{2}%
  \z@{.5\linespacing\@plus.7\linespacing}{.5\linespacing}%
  {\normalfont\scshape}}
\renewcommand\subsubsection{\@startsection{subsubsection}{3}%
  \z@{.5\linespacing\@plus.7\linespacing}{.5\linespacing}%
  {\normalfont\scshape}}
\makeatother

\begin{document}

\title{CSS codes from the Bruhat order of Coxeter groups}

\author{Kamil Br\'adler}

%\address{}

%\date{\today}

\begin{abstract}
I introduce a  method to generate families of CSS codes with interesting code parameters. The object of study is Coxeter groups, both finite and infinite (reducible or not), and a geometrically motivated partial order of Coxeter group elements named after Bruhat. The Bruhat order is  known to provide a link to algebraic topology -- it doubles as a face poset capturing the inclusion relations of the $p$-dimensional cells of a regular CW~complex and that is what makes it interesting for QEC code design. Assisted by the  Bruhat face poset interval structure unique to Coxeter groups I show that the corresponding chain complexes can be turned into multitudes of CSS codes. Depending on the approach, I obtain  CSS codes (and their families) with controlled stabilizer weights, for example $[6006, 924, \{{\leq14},{\leq7}\}]$ (stabilizer weights~14 and 9) and $[22880,3432,\{{\leq8},{\leq16}\}]$ (weights 16 and 10), and CSS codes with highly irregular stabilizer weight distributions such as $[571,199,\{5,5\}]$. For the latter, I develop a weight-reduction method to deal with rare heavy stabilizers. Finally, I show how to extract four-term (length three) chain complexes that can be interpreted as CSS codes with a metacheck.
\end{abstract}

\maketitle

%\begin{asydef}
%import graph;
%import settings;
%\end{asydef}

\thispagestyle{empty}

\section{Introduction}\label{sec:intro}

Search and study of new quantum error-correction (QEC) codes has been a significant part of quantum computing theory research since its early days. Stabilizer codes~\cite{gottesman1997stabilizer} as a quantum equivalent of classical linear codes are the main object of interest together with an important subclass of CSS codes~\cite{steane1996multiple,calderbank1996good}, which is a type of QEC codes I focus here too. The discovery of new codes is of a great theoretical interest but nowadays the main motivator is the quest for a utility-scale universal fault-tolerant quantum computer pursued (mostly) by private sector. Physical and technological constraints of the prospective quantum computing platforms dictate the desired properties of the used codes and their families. There is  a very long list of the code properties that needs to be evaluated  from the scalability point of view with regard to a given platform before a decision is taken to implement it in the quantum hardware and software. If the main figure of merit is just  an asymptotically constant encoding rate and linearly scaling  distance while the stabilizer weight is bounded then the answer points to good quantum LDPC (low-density parity check) codes whose (existential) discovery is one of the most important advances in recent quantum computing~\cite{panteleev2022asymptotically,leverrier2022quantum,dinur2023good,evra2022decodable,fawzi2020constant,breuckmann2021balanced}. Whether they turn out to be practically usable code families in a finite regime is so far an open question. While waiting for the answer, researchers came up with constructive families~\cite{breuckmann2021quantum,vasic2025quantum} of `pretty good'  finite-size LDPC codes (such as codes with a constant or slowly decreasing rate and suboptimal but guaranteed distance scaling) with solid distances, decently bounded stabilizer weights~\cite{kovalev2012improved,tillich2013quantum,zeng2019higher,panteleev2021degenerate,breuckmann2021balanced,hastings2021fiber,wang2022distance,lin2024quantum,leverrier2025small} and with a growing focus on the scalability of performing logical operations in a fault-tolerant manner. In this regard they already started to outperform the surface code~\cite{kitaev2003fault,bravyi1998quantum,dennis2002topological} -- the gold standard of QEC codes one might call a `not-so-bad' LDPC code.

Even for current best pretty good LDPC codes it remains to be seen how practically useful they are on a very large scale. As already mentioned, the distance and rate scaling is not the only performance indicator to watch from the scalability point of view. There is a considerable interest to come up with new code  creation methods even at the cost of the code families not necessarily  being LDPC~\cite{jochym2014using,yoshida2025concatenate,hastings2016quantum,yamasaki2024time,goto2024high}. My proposal, which is based on the properties of Coxeter groups, falls into this category. Coxeter groups are a family of discrete groups. They figure prominently in various areas of mathematics but their origin is geometrical as a formalization of a (finite) group of reflections in Euclidean geometry~\cite{davis2012geometry,coxeter1973regular,humphreys1990reflection}. All finite irreducible Coxeter groups were classified and they were further generalized to affine Euclidean spaces as well as  hyperbolic spaces of various signatures. The reflection hyperplanes (as abstracted mirrors) can be shifted away from the origin in affine vector spaces, giving rise to tessellations. Consequently, Coxeter groups for affine reflection systems  (and hyperbolic as well) are infinite.

Coxeter groups provide various mechanisms to generate CSS codes. Compactified affine Euclidean spaces $\bbE^d$ give rise to the $d$-dimensional toric codes~\cite{dennis2002topological,jochym2021four} and their twisted variants. Their asymptotic properties pale in comparison with the variety and properties of the CSS codes one can get from tesselated hyperbolic spaces~\cite{guth2014quantum,breuckmann2016constructions}. A different construction is 2D and 3D color codes~\cite{bombin2006topological,landahl2011fault}, which are again CSS codes that can be derived from regular (or less regular) tessellations. Color codes were originally introduced without referring to Coxeter groups but this philosophy, adopted by many authors~\cite{vasmer2019fault,vasmer2019three,jochym2021four}, is quite fruitful and leads to multiple generalizations~\cite{huang2023tailoring}. Finally, a recent work specifically targeting Coxeter groups introduced a different CSS code construction based on assigning stabilizers to the facets of polytopes constructed from Cayley graphs of Coxeter groups~\cite{coble2026coxeter}. This approach has a more combinatorial flavor and is probably closest to the point of view adopted here of using combinatorial rather than geometrical properties of Coxeter groups.

The construction introduced in this paper is yet another way of extracting CSS codes from Coxeter groups. The structure of Coxeter groups admits several different partial orders and one of them is called the Bruhat order. It stands out thanks to its rich properties encompassing combinatorics, geometry and algebraic topology. In particular, every Bruhat poset (partially ordered set) can be interpreted as a cell complex known as a regular CW complex. More precisely, the partial order relation of the Bruhat poset is in fact a face poset of closed cells of a certain manifold ordered by inclusion. The manifold is a cellulated $d$-dimensional sphere $S^d$ that comes with a remarkable internal structure. The key discovery~\cite{bjorner2006combinatorics} is the internal poset structure for the intervals of length two, three and four of any Coxeter group\footnote{Length four intervals were classified just for finite Weyl groups~\cite{hultman2003combinatorial} but rather than the  classification itself it is mainly its existence that is used here.} -- just what is necessary to decompose an arbitrary CSS code. I exploit it to generate CSS codes which are encoding qubits with high rates and respectable distances.  The internal structure nicknamed  diamond- ($S^0$), $k$-crown ($S^1$) and $S^2$-sphere decomposition is then an efficient indicator where to apply a transformation I call splicing. This operation, despite being quite a primitive stabilizer transformation, manages to produce interesting CSS codes from wildly different Coxeter groups -- both finite and infinite -- at the cost of uneven and sometimes heavy stabilizer checks\footnote{I believe that the unique `microscopic' insight into the code structure provided by the sphere decomposition could be exploited more efficiently by, for example, more advanced stabilizer transformation methods.}. I take the first steps to tame it by introducing a stabilizer weight-reduction method~\cite{hastings2021quantum}, which as a side result  can be applied to an arbitrary CSS code.  The final result is a procedure that I call chain complex folding which allows to convert longer Bruhat posets into length two and three chain complexes. The former are again CSS codes but this time with a better stabilizer weight control. The latter are CSS codes equipped with a single metacheck. Almost all codes are conjectured to be part of large finite or infinite families of codes by closely following the growth of the underlying Coxeter groups.

All CSS codes showed as examples have  their distance probabilistically upper-bounded by QdistRnd~\cite{QDistRnd,pryadko2023qdistrnd}, which is  implemented as a GAP~\cite{GAP4} library. Smaller or low-distance codes are then exactly calculated by dist-m4i\footnote{\href{https://github.com/QEC-pages/dist-m4ri}{dist-m4ri\label{fn:distm4ri}}} also developed by L.~Pryadko and collaborators and confirmed by another exact method used by Qubitserf\footnote{\href{https://github.com/qubitserfed/Qubitserf}{Qubitserf\label{fn:qubitserf}}} developed by \textcommabelow{S}.~Cercelescu. I used both exact methods except for edge cases where either I lack time or memory. Everything else remains upper-bounded (Qubitserf has a multithread option but it declares segmentation fault for larger codes -- the tool is under development). Despite the differences in the underlying algorithms  the probabilistic ones often agree with the deterministic ones and that gives me more confidence that they are accurate even for large codeblock examples. In those cases I went to great lengths to corroborate the results by repeated sampling as well as by deriving (by yet another method)  the actual logical Pauli operators and finding the one(s) with the smallest weight. This obviously does not provide a provable lower bound but it serves as a sanity check and ultimately increases the confidence in the actual estimates. That being said, some of the discovered codes are very large and none of the distance-estimation methods  allows me to claim that the chances of lowering the reported distance values are beyond reasonable doubt. In fact, I would be able to generate much larger codes based on Bruhat order but the computational cost  to get even a reasonably confident upper bound by~\cite{QDistRnd} is just too high.

This is the first caveat of the paper. The second one is the detailed performance and iterative (asymptotic and finite) behavior of the introduced weight reduction procedure. The $S^{1(2)}$~Bruhat-based codes are quite often high-rate/high-weight and the weight-reduction method is capable of reducing the rate $k/n$ by repeatedly increasing the physical qubit number while the hope (but not the proof) based on numerical simulations is that the distance is non-decreasing. If true it is an economic way of decreasing weight and keeping the codes' good parameters. So there is a reason for optimism but its confirmation is a major project on its own and therefore deferred for future investigations.

The paper consists of the following main parts:~\cref{sec:CSScodes} introduces the principal method and its variants to obtain non-trivial CSS codes with interesting parameters described as CW chain complexes of length three and four. The former is a CSS code while the latter is a CSS code equipped with a metacheck.~\cref{sec:weightLoss} introduces a weight-reduction method to deal with a negative byproduct of the code creation procedure, where some checks become too heavy. The paper leaves plenty of open questions summarized in Conclusions,~\cref{sec:concl}, followed by~\cref{sec:Apps} in the form of several Appendices, where I collected  useful facts about homology and chain complexes with perhaps less frequented but terminology-heavy topics such as posets and regular CW complexes. This is for the sake of completeness even though the language of algebraic topology is now a standard tool in QEC research,  having been used explicitly  at least since the work of  Freedman and Meyer~\cite{freedman2001projective} (implicitly, of course, already in~\cite{kitaev2003fault} posted years earlier), see also~\cite{bombin2007homological}.

\section{CSS code construction}\label{sec:CSScodes}

Every CSS code  discovered in this paper starts its life as an open interval $(w_b,w_t)$ of a Coxeter system $(W,S)$, which is a graded face poset $\euF(S^d)$ capturing the inclusion relation of the $p$-cells of a regular CW decomposition of a $d$-dimensional sphere~$S^d$, see Appendices  in~\cref{sec:Apps} for more detail.  The face poset $\euF(S^d)$ is a CW $(d+2)$-chain  complex and for a sufficiently big~$d$ it gives rise to shorter chain complexes of interest. These structures are at the core of the methods presented here so let's formalize it.
\begin{defi}\label{def:BruhatChains}
  Let $(W,S)$ be any Coxeter system and $\euB=(w_b,w_t)$ be an open interval of the Bruhat order for $w_b,w_t\in W$. Thanks  to~\cref{thm:CWcellulationBruhat}, $\euB=\euF(S^d)$, where $\ell(\euB)=d+2$ is the length of~$\euB$.  I denote  $(W,w_b,w_t,p)_{2k+1}$  to be a subposet of $ \euF(S^d)$ in the form of $2k+1$ layers $\{l_p\}\in\euF(S^d)$  of length~$p$ for $-k\leq p\leq k$ whenever $\ell(w_b)\leq p-k$ and $p+k\leq\ell(w_t)$.
\end{defi}
Every  $(W,w_b,w_t,p)_{2k+1}$ is a face subposet of length $2k+1-1=2k$ consisting of $2k+1$ layers $\{l_i\},p-k\leq i\leq p+k$. For $k=1$ its Hasse diagram is called a quantum Tanner graph because it is a CSS code even though a trivial one -- encoding zero logical qubits (see~\cref{cor:CSStriplesTriv}). I will use the language of Tanner graphs and face posets interchangeably but I will also need to use the chain complex terminology. Here it might get a little bit confusing because $(W,w_b,w_t,p)_{2k+1}$ is formally not a chain complex. Why? Finite chain complexes have two distinguished zero modules as their initial and terminal modules. If I consider an  entire face poset $\euF(S^d)$  the zero modules are naturally present in the form of $w_b$ and $w_t$.  If, on the other hand, a face subposet  $(W,w_b,w_t,p)_{2k+1}$ of length $2k$ is extracted from $\euF(S^d)$ it is formally necessary to add common bottom and top elements that I will denote $\hat{b}$ and $\hat{t}$ in order to promote it to a chain complex of length not $2k$ but rather $2k+2$, that is, consisting of $2k+1+2=2k+3$ layers. This chain complex will be denoted $(W,w_b,w_t,p,\hat{b},\hat{t})_{2k+1}$ when $\hat{b},\hat{t}$ play an important role but otherwise I abuse notation and also call $(W,w_b,w_t,p)_{2k+1}$ a chain complex. The practical difference when  describing how to obtain non-trivial CSS codes later in the text (codes having non-zero number of logical qubits) is null. An added advantage of introducing $\hat{b},\hat{t}$ is that $[\hat{b},\hat{t}]=(W,w_b,w_t,p,\hat{b},\hat{t})_{2k+1}$ is called a closed poset interval (\cref{app:partialOrder}) and
$$
(\hat{b},\hat{t})=(W,w_b,w_t,p)_{2k+1}
$$
is an associated open poset interval. So even though the number of layers between $[\hat{b},\hat{t}]$ and $(\hat{b},\hat{t})$ differs, the interval length of the posets is equal, that is  $\ell([\hat{b},\hat{t}])=\ell((\hat{b},\hat{t}))=2k+2$.

Longer chain complexes  $(W,w_b,w_t,p,\hat{b},\hat{t})_{2k+1}$  (or $(W,w_b,w_t,p)_{2k+1}$)  for $k=2,3$ will be used as well as an intermediary step to again get Tanner graphs of non-trivial CSS codes. The main construction in this paper is base on a special property of the Bruhat poset of Coxeter groups which is, in general, not shared by other discrete groups: They can be decomposed into elementary spheres\footnote{$S^d$~sphere decompositions for $d>2$ are possible but they can't be called elementary and mainly they do not seem to be known through classification.}, namely $S^0$, $S^1$ and $S^2$.

Before discussing the code-making strategies it seems reasonable to first show a few examples of Coxeter groups and their Bruhat order posets to get a handle on the poset terminology and what is going on here. I chose three examples that do not provide interesting codes due to their size but  to me the evidence suggests that they are a part of different (infinite) code families and I will show that their bigger siblings do provide interesting codes.
\begin{exa}
  Coxeter system $(A_3,\{s_1,s_2,s_3\})$: An infinite class of Coxeter groups called $A_n$ is better known as the symmetric groups $S_{n+1}$ of order $(n+1)!$. $A_n$ is one of the most important classes of discrete groups whose various properties and links to diverse areas of mathematics have been investigated in detail. The Coxeter matrix for $n=3$ is in~\cref{eq:CoxMatA3} and the Bruhat poset is depicted in Fig.~\ref{fig:A3BruhatPoset}. This is the only example where I will  illustrate in detail the rich poset terminology introduced in Appendix~\ref{app:partialOrder} followed by several sphere decompositions in~\cref{app:sphereClassification}.
  \begin{figure}[t]
      \resizebox{13cm}{!}{\includegraphics{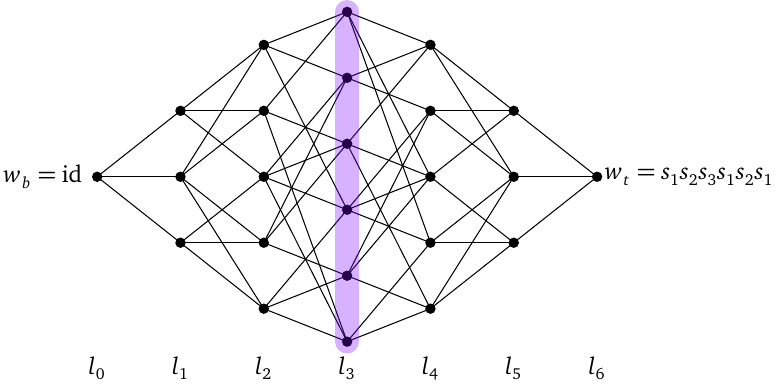}}
      \caption{The Bruhat order of the Weyl group~$A_3$,~\cref{eq:A3presentation}. The open interval $(w_b,w_t)$ is a face poset of a regular CW-cellulated sphere $S^d$ for $d=4$. The vertices can be seen as $(p-1)$-cells ($1\leq p\leq d+1=5$) ordered by inclusion or as Coxeter group elements $w=s_{i_1}\dots s_{i_p}\in A_3$ of the reduced length $\ell(w)=p$ related by a reflection. The edges are covering relations of the Bruhat order.}
      \label{fig:A3BruhatPoset}
  \end{figure}

  The Bruhat order on $A_3$ is shown in the form of an undirected Hasse diagram in~\cref{fig:A3BruhatPoset}. The vertices of the graph are the group elements and the undirected edges are the covering relations. The lowest (bottom) element~$w_b=\id$  is on the far left and top element $w_t=s_1s_2s_3s_1s_2s_1$ is the rightmost vertex (using the toppled convention). Hence the depicted poset is a closed interval $[w_b,w_t]$. The poset is graded by the reduced word length function $\ell$, where $\ell(\id)=0$ and $\ell(w_t)=6$ already in its reduced form and so  $\ell([w_b,w_t])=6$. Since $|A_3|=24$ the entire poset can de drawn but if necessary nothing prevents me from choosing a different pair $w_b,w_t$ to get a subposet as long as $w_b<w_t$ holds. The poset's Hasse diagram is stratified by (vertical) layers $l_i$ of all incomparable poset elements of rank~$i$. So, for example, the number of layer elements for $i=3$ is $|l_3|=6$ highlighted as a purple line.

  In the CW interpretation of this poset, the open interval $(\id,s_1s_2s_3s_1s_2s_1)$ is a regular CW decomposition of the $S^4$~sphere, where each layer~$l_p$,  $1\leq p\leq 5$, is occupied by $|l_p|$ $(p-1)$-cells ($(p-1)$-faces), see~\cref{app:CWcomplexes} and~\ref{app:sphereClassification}. The covering relation becomes a closed cell inclusion and the poset therefore becomes a face poset~$\euF(S^4)$. Every triple of layers $\{l_i\}_{i=p-1,p,p+1}$ is a Tanner graph of a CSS code for all $2\leq p\leq 4$, where the layer $l_p$ is assigned to data qubits and $l_{p\pm1}$ are $X$ and $Z$ stabilizers or vice versa. This avoids the left- and rightmost triple ($p=1$ and $p=5$, respectively) which, despite being legitimate CSS codes, are not practically as interesting as the rest due to the presence of a single vertex (stabilizer). The triples $\{l_i\}$ I just described are examples of $(W,w_b,w_t,p)_{3}$ from~\cref{def:BruhatChains} as the building blocks of non-trivial CSS  codes introduced later.

  The Hasse diagram of the Bruhat poset of a Coxeter group often has a geometrical interpretation. Although nowhere in this work I use any such construction it is worth mentioning it in the case of $A_3$. By comparing the left or right Cayley graphs based on the generating set consisting of the simple reflections $s_i$ (see~\cref{app:discreteGroups}) with the additional general reflections $t_j$ responsible for the Bruhat poset covering relations (see~\cref{eq:reflectionT}) it follows that the Bruhat order uses Cayley graphs as a scaffold for any Coxeter group. So, given  a Coxeter group's Cayley graph with the corresponding weak order (see~\cref{app:discreteGroups}) the Bruhat order on the same group is obtained by adding some more edges to the corresponding Cayley Hasse diagram. The left (right) Cayley graph of $A_n$ tessellates~$\bbR^n$.  For $n=3$ it is called a permutahedron or a truncated octahedron and it made appearance in several QEC code constructions~\cite{watson2015qudit,huang2023tailoring}. I  depict how the `Bruhat boosted' permutahedron looks like in~\cref{fig:A3poset3D}, where one can see additional edges inside the permutahedron and also inside the hexagonal faces -- all coming from the non-simple reflections. Similar enhancements are possible for other Coxeter groups whose Cayley graphs have a geometrical interpretation (such as affine and hyperbolic tessellations). 

  \begin{figure}[t]
  \resizebox{5.6cm}{!}{\includegraphics{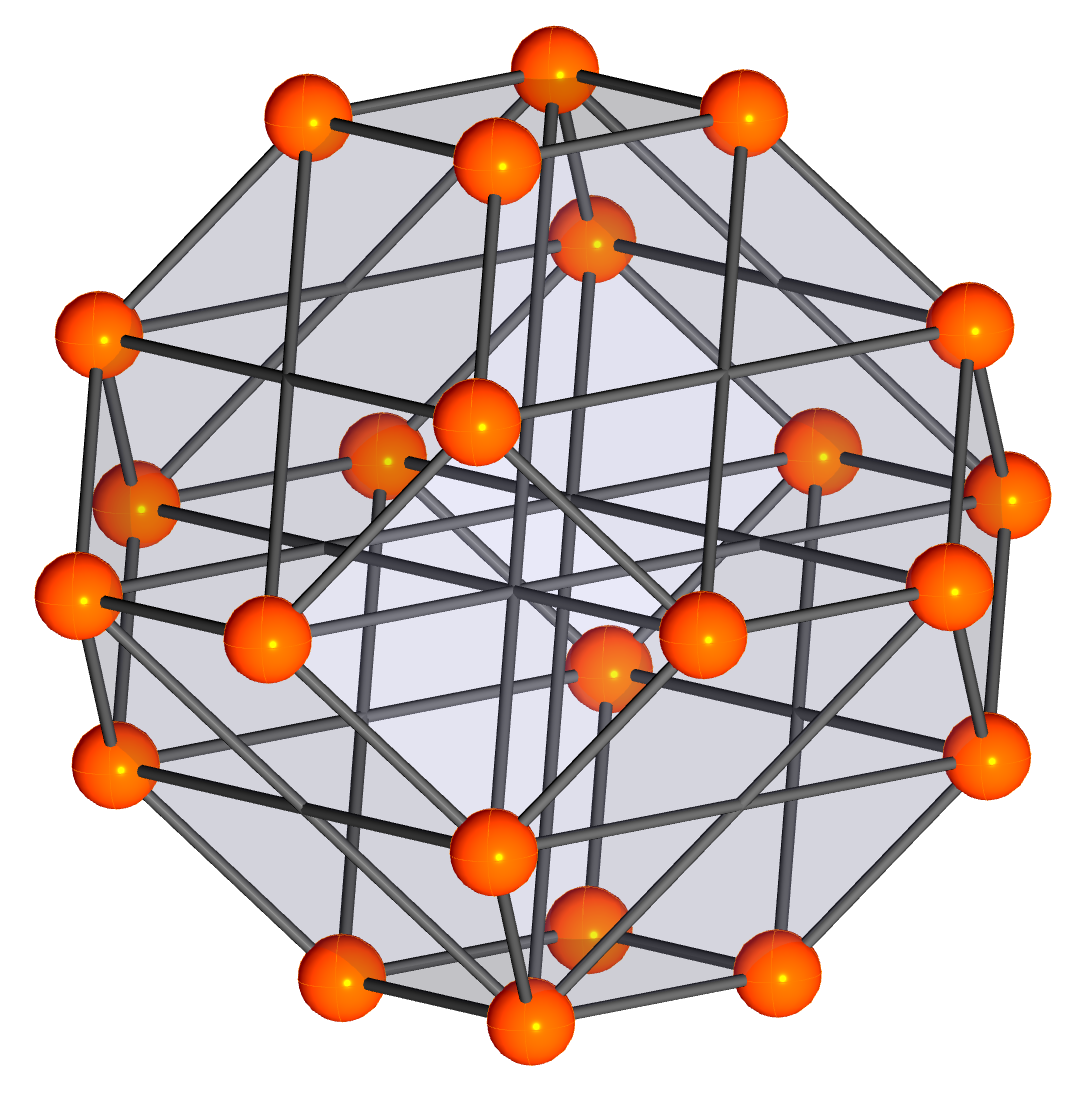}}
  \caption{The Bruhat poset of $A_3$ from~\cref{fig:A3BruhatPoset} in a geometrically friendly way as a truncated octahedron (also known as a permutahedron in combinatorics) with some additional inner-face edges whose origin are non-simple reflections. The orange balls are the  black vertices and the rods connecting them are the covering relations.}
  \label{fig:A3poset3D}
  \end{figure}

  \begin{figure}[b]
  \resizebox{9cm}{!}{\includegraphics{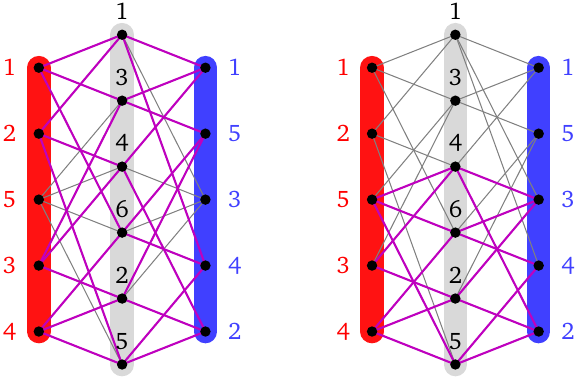}}
  \caption{The Tanner graphs of the $(A_3,\id,s_1s_2s_3s_1s_2s_1,3)_{3}$ CSS code (consisting of layers $l_{2},l_3$ and $l_4$ in~\cref{fig:A3BruhatPoset}) with different highlighted $S^2$ posets in magenta. The connected bottom and top elements $\hat{b}$ and $\hat{t}$ from neighboring layers are not depicted. The corresponding CW complexes are depicted in~\cref{fig:A3S2decompositionCW}.}
  \label{fig:A3S2decompositionTanner}
  \end{figure}

  \begin{figure}[t]
  \resizebox{8.6cm}{!}{\includegraphics{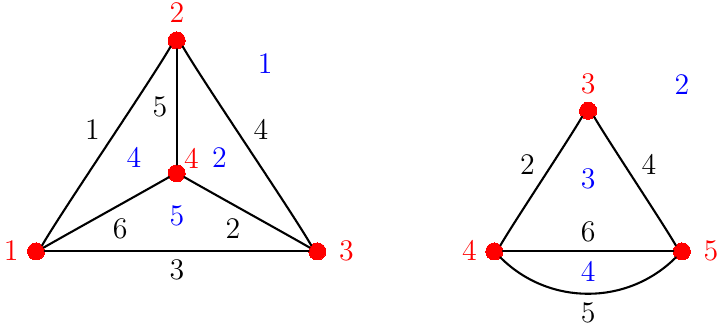}}
  \caption{$S^2$ regular CW complexes corresponding to the magenta subposets in Fig.~\ref{fig:A3S2decompositionTanner} with the same color coding and numbering: red vertices (say $X$ checks), black edges (data qubits) and blues faces ($Z$~checks) including the ambient ones. The CW complex on the left is recognizable as a tetrahedron. Also cf. the classified $S^2$ CW complexes number 1 and 3 in~\cref{fig:hultman}.}
  \label{fig:A3S2decompositionCW}
  \end{figure}

  Let's turn the attention to exemplifying the sphere decomposition -- a feature unique to Coxeter groups as far as the author is aware. The spheres most relevant to any 3-layer poset representing a trivial CSS code are $S^0, S^1$ and $S^2$, see~\cref{fig:S012}. Let's choose a specific triple $\{l_i\}_{i=2,3,4}$ in~\cref{fig:A3BruhatPoset} (hence $p=3$). The $S^0$ sphere whose poset is nicknamed the diamond poset  can be easily identified by choosing any pair of vertices from layers $l_2$ and $l_4$ of every $(A_3,\id,s_1s_2s_3s_1s_2s_1,3)_{3}$ and checking whether they share zero or two vertices from $l_3$. No other option is possible~\cite{bjorner2006combinatorics}, see~\cref{prop:sphereClassification}~(1).

  A more interesting case for this work is the $S^1$ decomposition in terms of the $k$-crown poset. The $k$-crown posets are length three posets and so to identify a $k$-crown in layers $\{l_{i}\},i\in\{p-1,p,p+1\}$ one has to look for the corresponding bottom element $\hat{b}$ in  layer $l_{p-2}$ or for the top element $\hat{t}$ in layer $l_{p+2}$. One then obtains a left or a right $k$-crown poset. This decomposition is the main tool to get non-trivial CSS codes and~\cref{sec:crownSplicing} contains a detailed description.

  Finally, for the $S^2$ decomposition, that is, a poset of length four, one looks for the bottom/top element $\hat{b}/\hat{t}$ in layers $l_{p\mp2}$. I highlighted two of the identified $S^2$ face posets in~\cref{fig:A3S2decompositionTanner} together with the corresponding CW complexes  in~\cref{fig:A3S2decompositionCW}. A complete classification of the $S^2$ CW complexes for finite Weyl groups comes from~\cite{hultman2003combinatorial} and is listed in Fig.~\ref{fig:hultman}. I discuss in~\cref{sec:S0andS2decomposition} how to  use the $S^2$ to again create non-trivial CSS codes.

\end{exa}
\begin{figure}[t]
  \resizebox{8.5cm}{!}{\includegraphics{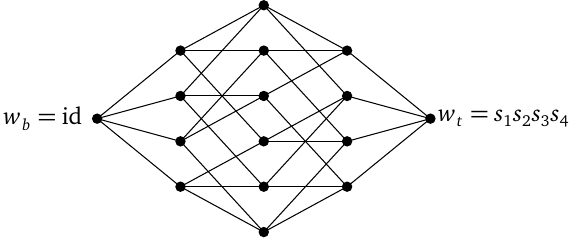}}
  \caption{The Bruhat/weak order of the reducible Coxeter group~$C_2^{\times4}$.  The open interval $(w_b,w_t)$ of length four is a face poset $\euF(S^2)$.}
  \label{fig:C2pow4}
\end{figure}

\begin{exa}
  Reducible group $C_2^{\times n}$: The cyclic group $C_2$  is a two-element Coxeter group with a simple structure ($(A_1,\{s_1\})$ in Coxeter's classification). Its Cayley graph is an edge and technically it is also a Bruhat face poset although it is an empty poset without any cells (faces). To get a non-trivial poset one takes $C_2\times C_2$ whose poset is the $S^0$ face poset: the diamond graph. It is formed by the Cartesian product (the same graph product used in the construction of the hypergraph product codes~\cite{tillich2013quantum}) of two edges resulting in a square and this is also the $C_2\times C_2$ Bruhat poset. So the Cayley and Bruhat orders coincide and this holds for all~$n$. What makes this group family potentially interesting is its relative simplicity, where the Cayley/Bruhat poset of $C_2^{\times n}$ is an $n$-dimensional cube also known as a Boolean lattice (a special kind of poset~\cite{davey2002introduction}), together with the fact that it leads to interesting codes as I will show in~\cref{sec:actI}. Case $n=4$ is depicted in~\cref{fig:C2pow4}.
\end{exa}

\begin{exa}
  As the final example I take an infinite-dimensional Coxeter group, namely one from an infinite family of the so-called hyperbolic triangle groups~\cite{davis2012geometry}:
  \begin{equation}\label{eq:triangleGroup}
    (\Delta_{\a,\b,\g},S)=\langle s_1,s_2,s_3|(s_1s_2)^\a,(s_1s_3)^\b,(s_2s_3)^\g\rangle,
  \end{equation}
  such that $1/\a+1/\b+1/\g<1$. They figure prominently in the QEC code design as codes created by compactifying hyperbolic tessellations. Let the triangle group be $\a=2,\b=3,\g=7$. I choose the standard $w_b=\id$ and $w_t=(s_1s_2s_3)^m$ in its reduced form. It is a locally finite group and so I can again depict its ranked Bruhat poset in the form of a Hasse diagram. Choosing $m=3$ the poset is depicted in~\cref{fig:237}. and unlike the previous two examples it is asymmetric. That is a typical poset Hasse diagram one gets from a randomly chosen Coxeter group unlike the highly symmetric groups $A_n$ or $C_2^{\times n}$ shown previously.
\end{exa}

\begin{figure}[t]
  \resizebox{14cm}{!}{\includegraphics{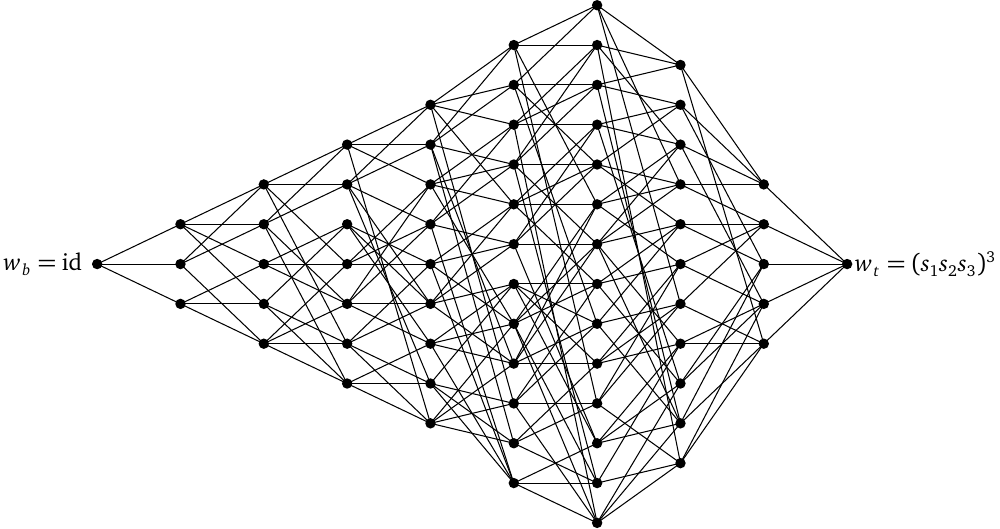}}
  \caption{A subposet of the  Bruhat order of the triangle hyperbolic group,~\cref{eq:triangleGroup}, for $\a=2,\b=3,\g=7$.  The open interval $(w_b,w_t)$ is a face poset $\euF(S^8)$.}
  \label{fig:237}
\end{figure}

\subsection{Trivial codes}

As a warm-up, the structure of the length two intervals in~\cref{prop:sphereClassification}~(1) in terms of the $S^0$ spheres and the corresponding diamond posets will be used as an alternative proof of the statement that every three-layer poset is a valid CSS code.  I assign a layer $l_p$ of $p$-cells to data qubits and the layers of $(p\pm1)$-cells to $X$ and $Z$ checks (or vice versa) for any consecutive triple of layers $\{l_{i}\},i\in\{p-1,p,p+1\}$  in the interval $[w_b,w_t]$. More formally:
\begin{lem}\label{lem:BruhatCodes}
    Let $(W,w_b,w_t,p)_{3}$ be a length two chain complex such that and $\ell([w_b,w_t])\geq5$. Further, let $\euQ \subseteq l_p$ be a poset layer of rank~$p$. Whenever $\euX\subseteq l_{p-1}$ and $\euZ\subseteq l_{p+1}$ are such that for all $x\in\euX$ and $z\in\euZ$, either $(x,z) \subseteq \euQ$ or $(x, z) \cap \euQ = \emptyset$ then the operators
    \begin{align*}
        \{\prod_{q \in \euQ: x\lessdot q} X_q : x \in \euX \}
    \end{align*}
    and
    \begin{align*}
        \{\prod_{q \in \euQ: q \lessdot z} Z_q : z \in \euZ \}
    \end{align*}
    define a CSS code where any $X$ and $Z$ check overlap on either zero or two qubits.
\end{lem}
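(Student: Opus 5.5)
The approach is to reduce the CSS commutation condition to a statement about the size of open Bruhat intervals of length two, and then to invoke the $S^0$ part of the sphere classification. A CSS code is well defined as soon as every $X$ check commutes with every $Z$ check. Equivalently, the supports $\operatorname{supp}(X_x)=\{q\in\euQ: x\lessdot q\}$ and $\operatorname{supp}(Z_z)=\{q\in\euQ: q\lessdot z\}$ must intersect in an even number of qubits for every $x\in\euX$ and $z\in\euZ$. So I would fix such a pair $(x,z)$ and compute the overlap directly.

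\emph{Step 1: identify the overlap with an open interval.} A qubit $q\in\euQ\subseteq l_p$ lies in both supports precisely when $x\lessdot q\lessdot z$. Since the poset is graded by $\ell$ and $\ell(x)=p-1$, $\ell(q)=p$, $\ell(z)=p+1$, this holds precisely when $x<q<z$. Hence
\[
\operatorname{supp}(X_x)\cap\operatorname{supp}(Z_z)=(x,z)\cap\euQ .
\]
If $x\not<z$, then $(x,z)=\emptyset$ and the overlap is zero.

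\emph{Step 2: apply the $S^0$ classification.} If $x<z$, then $[x,z]$ is a Bruhat interval of length two. By \cref{prop:sphereClassification}~(1) (equivalently, by \cref{thm:CWcellulationBruhat} with $d=0$), the open interval $(x,z)$ is the face poset of $S^0$, that is, it has exactly two elements and $[x,z]$ is a diamond. The hypothesis of the lemma gives a dichotomy: either $(x,z)\subseteq\euQ$, so the overlap has size $|(x,z)|=2$, or $(x,z)\cap\euQ=\emptyset$, so the overlap is zero. In every case the overlap is $0$ or $2$. In particular it is even, so $X_x$ and $Z_z$ commute and the two families generate a valid CSS stabilizer group. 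The condition $\ell([w_b,w_t])\geq5$, together with the range of $p$ in \cref{def:BruhatChains}, is only needed so that the layers $l_{p\pm1}$ exist inside the interval. I would remark on this rather than use it in the argument.

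\emph{Main obstacle.} The only non-formal ingredient is the fact that every length-two Bruhat interval has exactly two atoms. This is the classical ``diamond'' property of Bruhat order in \cite{bjorner2006combinatorics}, and I would cite it rather than reprove it. If a self-contained argument were wanted, I would derive it from the lifting property of Bruhat order by induction on $\ell(z)$. The only other point needing care is the hypothesis on $\euQ$: without it, a proper subset $\euQ$ could meet $(x,z)$ in a single element, which is exactly why the lemma imposes it.
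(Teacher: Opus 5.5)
Your proposal is correct and follows essentially the same route as the paper: identify the overlap of an $X$ and a $Z$ support with $(x,z)\cap\euQ$, then combine the diamond property of length-two Bruhat intervals (Proposition~\ref{prop:sphereClassification}~(1)) with the dichotomy hypothesis on $\euQ$. You are only more explicit than the paper's proof about the zero-overlap case ($x\not<z$) and about the grading that turns $x<q<z$ into covering relations.
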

\begin{proof}
Let $x \in\euX$ and $z \in \euZ$ correspond to $X$ and $Z$ checks that both act on some qubit $q\in\euQ$. Then by assumption, $(x,z) \subseteq \euQ$, $x < z$ and $\ell(z) - \ell(x) = 2$. Therefore $(x,z)$ contains two elements by \cite{bjorner2006combinatorics} (it is reproduced in Proposition~\ref{prop:sphereClassification}~(1)), where $[x,z]$ was dubbed the diamond poset in Appendix~\ref{app:sphereClassification}.
\end{proof}

Note that the codes of~\cref{lem:BruhatCodes} can encode logical qubits but their discovery relies on an intractable process of systematically generating subsets $\euX$ or $\euZ$ and checking the code properties. It is equivalent to a systematic removal of the rows of the corresponding PCMs. The diamond ($S^0$) decomposition of any $(W,w_b,w_t,p)_3$ CSS code sketched in~\cref{sec:S0andS2decomposition} makes it significantly more tractable but compared to the main decomposition method the resulting non-trivial codes are much less interesting and therefore not worth of pursuing.

The starting point of more efficient code constructions will be a special case of the previous lemma when $\euQ=l_p$ and $\euX=l_{p\pm1},\euZ=l_{p\mp1}$, where  $(W,w_b,w_t,p)_{3}$  uniquely determines a CSS code. It turns out, however, that every such code is trivial and the reason is that the underlying manifold, a $d$-dimensional sphere corresponding to the interval $(w_b,w_t)$, where $\ell((w_b,w_t))=d+2$, supports zero logical qubits for any~$d,W,p$ thanks to the known properties of the $S^d$ homology groups:
\begin{subequations}\label{eqs:SdHomology}
\begin{align}
     H_0(S^d,\bbZ_2) & = H_d(S^d,\bbZ_2) = \bbZ_2, \\
     H_i(S^d,\bbZ_2) & = 0\mbox{ otherwise}.
\end{align}
\end{subequations}

This perhaps intuitive fact is shown in the following lemma using the terminology of~\cref{appsub:CWhomology}:
\begin{lem}\label{lem:CSStriplesTriv}
  Let  $(W,w_b,w_t,p,\hat{b},\hat{t})_{3}$ be a length two chain complex  and $X$~a regular CW complex obtained by gradually removing the consecutive layers from its face poset $\euF(S^d)$. Then  $H_p(X,\bbZ_2)=0$.
\end{lem}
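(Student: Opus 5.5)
The plan is to compute $H_p$ of the three-layer chain complex directly and use the sphere homology \cref{eqs:SdHomology} only to the extent that it still applies. I index so that the $p$-cells are the data qubits and the neighbouring layers are the $(p\pm1)$-cells. The complex $(W,w_b,w_t,p,\hat b,\hat t)_3$ is then
\[
0\to C_{p+1}\xrightarrow{\partial_{p+1}} C_p\xrightarrow{\partial_p} C_{p-1}\to 0,
\]
where each $C_i$ is the $\bbZ_2$-span of the layer, and the boundary maps are the incidence (covering) relations of the Bruhat interval. These maps are exactly the cellular boundary maps of the regular CW decomposition of $S^d$ from \cref{thm:CWcellulationBruhat}. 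For a regular CW complex the incidence numbers are $\pm1$ on covering pairs, so over $\bbZ_2$ they are just the covering relation. Hence
\[
H_p(X,\bbZ_2)=\ker\partial_p/\im\partial_{p+1},
\]
and these are literally the same two maps as in the full complex of $\euF(S^d)$.

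The key observation is that removing layers below $p-1$ and above $p+1$ does not change $\ker\partial_p$ or $\im\partial_{p+1}$. These depend only on the layers $l_{p-1},l_p,l_{p+1}$ and the covering relations among them. The only thing that changes is the outer maps: $C_{p+1}\to C_{p+2}$ is replaced by the map into $\hat t$, and similarly at the bottom. Those outer maps influence $H_{p\pm1}$ but not $H_p$. So I would first argue $H_p(X,\bbZ_2)=H_p(S^d,\bbZ_2)$ in the cellular grading. Then I would invoke \cref{eqs:SdHomology}: this group vanishes unless the $p$-cells are $0$-cells or $d$-cells.

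The step needing care is exactly that grading check: the top and bottom of the original complex must not be hit. The hypothesis $\ell([w_b,w_t])\ge5$ inherited from \cref{lem:BruhatCodes}, together with $\ell(w_b)\le p-1$ and $p+1\le\ell(w_t)$ from \cref{def:BruhatChains}, guarantees that $l_{p-1}$ and $l_{p+1}$ are genuine cell layers, not the poset endpoints $w_b,w_t$. Consequently the data layer consists of $j$-cells with $0<j<d$, since it has nonempty cell layers on both sides.

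One subtlety is the reduced versus unreduced convention. In the paper's indexing the extreme layers of $\euF(S^d)$ are adjacent to $w_b$ and $w_t$, which act as augmentation and coaugmentation. The middle homology coincides in both conventions, so I would note that the reduced homology of $S^d$ is concentrated in degree $d$. The bookkeeping is then simply to check that the shift $\ell\mapsto\ell-1$ between poset rank and cell dimension puts the data layer strictly inside.

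The main obstacle I expect is making rigorous that the chain complex built from the Bruhat covering relations is exactly the cellular chain complex of the CW sphere. That is, I need $\partial^2=0$ and agreement of the boundary maps. Over $\bbZ_2$ this follows from the diamond property, \cref{prop:sphereClassification}~(1): every length-two interval has exactly two middle elements, so $\partial_p\partial_{p+1}$ counts each pair of cells twice. The identification with cellular homology then follows from regularity via \cref{appsub:CWhomology}.
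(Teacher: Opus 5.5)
Your argument is correct and is essentially the paper's, done in one step: the paper strips layers off the top one at a time (then dualizes and strips from the bottom), noting each time that only the Betti number at the newly exposed end changes, so $\beta_p$ keeps its sphere value $0$ --- which is exactly your point that $\ker\partial_p$ and $\im\partial_{p+1}$ are untouched by truncation. One small fix: the non-strict inequalities $\ell(w_b)\le p-1$, $p+1\le\ell(w_t)$ and the hypothesis $\ell([w_b,w_t])\ge5$ (which belongs to \cref{lem:BruhatCodes}, not to this lemma) do not rule out $l_{p\pm1}$ being $\{w_b\}$ or $\{w_t\}$; what rules it out is that the three layers are taken inside $\euF(S^d)=(w_b,w_t)$, and even in that endpoint case $H_p$ would still vanish, because the cellular complex augmented by both $w_b$ and $w_t$ is exact.
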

\begin{proof}
  All open intervals $(w_b,w_t)$ of length~$d+2$ pertaining to the Bruhat order of a Coxeter group  are graded face posets $\euF(S^d)$ of a regular CW complex homeomorphic to~$S^d$~\cite{bjorner2006combinatorics}. As such, the CW complex's Betti numbers are $\b_0=\b_d=1$ and zero otherwise, see~\eqref{eqs:SdHomology} and~\cref{appsub:CWhomology}. The boundary of the $(d+1)$-dimensional void are all $d$-dimensional cells. I remove from the face poset the layer ranked as $r=d$ corresponding to these $d$-cells and add a common top element $\hat{t}$ to be able to properly define all homology groups and the corresponding Betti numbers. In this way I obtain a new CW complex $X'$ whose dimension is reduced by one compared to the original sphere~$S^d$. The new complex is not, in general, homeomorphic to a sphere since by removing all the $d$-cells I effectively introduced a number of $d$-dimensional voids that are counted by the highest Betti number of~$X'$. Hence I reduced the $(d+1)$-tuple of Betti numbers ${\bbeta}(S^d)$ to a new $d$-tuple $\bbeta(X')$:
  \begin{equation}\label{eq:bettiReduction}
    {\bbeta}(S^d)=(1,0,\dots,0,1)\to\bbeta(X')=(1,0,\dots,0,\b_{d-1}(X')),
  \end{equation}
  where $\b_{d-1}(X')=\rank{\ker{\partial_{d-1}}}$ as follows from~\eqref{eq:rankHp}. The rest of the Betti numbers remains the same since I haven't touched the other boundary operators.  I continue by removing layer by layer, thus erasing the previously introduced voids and introducing the voids of one dimension less, again counted by the corresponding Betti number. In the penultimate step where only three layers of the Hasse diagram depicting the face poset are left I reduced $S^d$ to a manifold $X''$ whose Betti numbers are
  \begin{equation}\label{eq:bettiPenultim}
    {\bbeta}(X'')=(1,0,\b_2(X'')).
  \end{equation}
  corresponding to the chain complex
  \begin{equation}\label{eq:chainComplexSubposet1}
    \hat{t}\to C_2\to C_1\to C_0\to\hat{w}_b.
  \end{equation}

  I now generalize this process to any neighboring triple of layers. Let the desired triple of layers from $\euF(S^d)$, without loss of generality, be $l_i,i\in \{p-1,p,p+1\}$. Following the same procedure described in the previous paragraph I stop upon reaching a CW complex that I now call~$X'$, whose Betti $(q+1)$-tuple reads
  \begin{equation}\label{eq:bettiFromTop}
    \bbeta(X')=(1,0,\dots,0,\b_{q}(X'))
  \end{equation}
  for $q\leq d$.   I now reinterpret the face poset of $X'$ by considering its $q$-cells to become the $(d-q)$-cells of $\widetilde{X}'$. Indeed, there is a complete freedom from which end (bottom $w_b$ or top $w_t$) of $\euF(S^d)$ I start assigning 0-cells. I effectively dualized $X'$ and with this transformation in hand I may again start removing the highest cells, layer by layer, but this time from  $\widetilde{X}'$ (that is, from the other side of the Hasse diagram of the original $X'$). In the penultimate step I arrive at
  \begin{equation}\label{eq:bettiFromBotTop}
    \bbeta(\widetilde{X}')=(\b_{p-1}(\widetilde{X}'),0,\b_{p+1}(\widetilde{X}')),
  \end{equation}
  where $\b_{p+1}(\widetilde{X}')\equiv\b_{q}(X')$ from Eq.~\eqref{eq:bettiFromTop}. Graphically, I obtained the chain complex
    \begin{equation}\label{eq:chainComplexSubposet2}
    \hat{t}\to C_{p+1}\to C_p\to C_{p-1}\to\hat{b},
  \end{equation}
  which I can identify  with   $(W,w_b,w_t,p,\hat{b},\hat{t})_{3}$ from~\cref{def:BruhatChains}. Since $\b_{p}(\widetilde{X}')=0$  the claim follows.
\end{proof}
\begin{cor}\label{cor:CSStriplesTriv}
  The closed interval  $[\hat{b},\hat{t}]=(W,w_b,w_t,p,\hat{b},\hat{t})_{3}$ (chain complex of length two) corresponds to the open poset interval $(W,w_b,w_t,p)_{3}$ with three layer (see~\cref{def:BruhatChains}) that can be interpreted as a Tanner graph of a CSS code. Thanks to the $p$-th homology group being zero it encodes zero logical qubits.
\end{cor}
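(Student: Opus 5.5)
The corollary is essentially a repackaging of \cref{lem:CSStriplesTriv} together with \cref{lem:BruhatCodes}. I would argue in two steps. First, I would establish that the three layers $l_{p-1},l_p,l_{p+1}$ of $(W,w_b,w_t,p)_{3}$ form a valid CSS code. For this I apply \cref{lem:BruhatCodes} with the full choice $\euQ=l_p$, $\euX=l_{p-1}$ and $\euZ=l_{p+1}$ (or the reverse assignment). The hypothesis of that lemma holds automatically for every pair $x<z$ with $\ell(z)-\ell(x)=2$: the open interval $(x,z)$ lies entirely in $l_p=\euQ$ by gradedness, and for $x\not<z$ it is empty. So every $X$ and $Z$ check overlap on zero or two qubits, which is the diamond property of \cref{prop:sphereClassification}~(1).

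In parity-check language this says $H_X H_Z^T=\partial_p\partial_{p+1}=0$ over $\bbZ_2$. Here $\partial_{p+1}\colon C_{p+1}\to C_p$ and $\partial_p\colon C_p\to C_{p-1}$ are the incidence matrices given by the covering relations between consecutive layers. Consequently, adjoining $\hat b,\hat t$ yields exactly the length-two chain complex $\hat t\to C_{p+1}\to C_p\to C_{p-1}\to\hat b$ of \eqref{eq:chainComplexSubposet2}. Its Hasse diagram restricted to the open interval $(\hat b,\hat t)$ is the Tanner graph.

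Second, I would use the standard count of logical qubits for a CSS code built from a chain complex:
\[
k=\dim C_p-\rank\partial_p-\rank\partial_{p+1}=\dim\ker\partial_p-\dim\im\partial_{p+1}=\dim H_p .
\]
By \cref{lem:CSStriplesTriv}, $H_p(X,\bbZ_2)=0$ for the complex obtained from $\euF(S^d)$ by stripping layers down to this triple. Hence $k=\b_p=0$.

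The only delicate point is making sure that the homology computed in \cref{lem:CSStriplesTriv} is really the homology of this three-term complex. In other words, the middle Betti number must not change when the outer layers are truncated and replaced by $\hat b,\hat t$. That holds because $H_p$ depends only on $\partial_p$ and $\partial_{p+1}$, and those two operators are untouched by removing layers of rank below $p-1$ or above $p+1$. With that observation recorded, the corollary follows.
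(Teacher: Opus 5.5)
Your proposal is correct and follows the paper's route: the paper gives no separate proof of the corollary and reads it off from \cref{lem:CSStriplesTriv}, and you do the same, adding the CSS property from the diamond structure and $k=\dim H_p$ by rank--nullity. Your observation that $H_p$ depends only on $\partial_p$ and $\partial_{p+1}$ is a cleaner justification than the paper's layer-by-layer stripping: it identifies the middle homology of the triple with the homology in that degree of the full chain complex of $[w_b,w_t]$, which is exact (this is reduced homology of $S^d$, with the top class killed by $w_t$ when $l_{p+1}=\{w_t\}$).
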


This result basically shows that `slicing' a trivial manifold along any triple of consecutive layers of its CW face poset still yields a topologically trivial structure. This looks like a showstopper for the purpose of generating interesting QEC codes. Indeed, the trivial  $(W,w_b,w_t,p)_{3}$ CSS codes are by construction topological since all vertices of the resulting Tanner graph were identified with closed cells of a certain dimension. One direction could be to seek quotient maps just like it was done for the spaces of constant curvature. But this has  led to a large variety of codes only in the case of hyperbolic spaces~\cite{guth2014quantum,breuckmann2016constructions,breuckmann2021single}, whereas for the spherical and planar geometry it `merely' provides codes on the projective plane $\bbR\bbP^n$ (such as the $[9,1,3]$ Shor code for $n=2$) and the generalized toric codes. Given the proximity of the  $(W,w_b,w_t,p)_{3}$ codes to the spherical geometry this strategy does not look promising. Instead, let's use the particular structure of the high-dimensional spheres obtained from the Bruhat order on Coxeter groups. This structure is given by the sphere decomposition of the length two, three and four intervals described in Appendix~\ref{app:sphereClassification}.

\subsection{Non-trivial codes}\label{sec:nontrivCSS}

My first strategy to modify a  $(W,w_b,w_t,p)_{3}$ CSS code  to encode logical qubits will be with the help of~\cref{prop:sphereClassification} and one of the three ways every 3-layer poset (a CW chain complex of length two or a Tanner graph)  given by the Bruhat order  can be decomposed into elementary building blocks by: (1) the diamond posets, (2) $k$-crown posets and, (3), the posets for Hultman's 24 $S^2$ spheres~\cite{hultman2003combinatorial}, see~\cref{fig:hultman}, if the used Coxeter group is of a Weyl type. If it is not a Weyl group, it is still not an issue since any new sphere can be easily identified and added to the list, see~\cref{app:sphereClassification} for more detail. When I say easily, it will be clear that once a face poset of $(W,w_b,w_t,p)_{2k+1}$ is derived (always needed in the following analysis) there is no computational complexity hiding in finding the corresponding $S^0, S^1$ or $S^2$ sphere decompositions.  It is less clear, however, what the complexity is to actually get the $(2k+1)$-level poset or even the entire initial  poset $\euF(S^d)$ for an interval $[w_b,w_t]$, Coxeter group and its order. Indeed, this problem goes deep and among other things it touches the complexity of the rewrite rules necessary to find the reduced length of a word  for a general finitely presented infinite discrete groups such as infinite Coxeter groups. The known results are encouraging~\cite{brink1993finiteness} but this is a topic for a separate investigation. Even though the Coxeter groups studied here are small enough for us not to be bothered by these fundamental issues, they will provide quite large codes going beyond what one would call  toy CSS code examples.

\subsubsection{$S^1$ (crown) decomposition}\label{sec:crownSplicing}

I first define an operation I call `stabilizer splicing'. For the notation see~\cref{app:homoCSS}.
\begin{defi}\label{def:splice}
  Let $H_X,H_Z$ be PCMs of an arbitrary CSS code whose rows are denoted $h^i_X,h^j_Z$. A spliced $X$ stabilizer $\tilde H_X$ is obtained from $H_X$ by choosing  $I\subseteq [\dim{H_X}]$ to generate a new (linearly-dependent) $X$ stabilizer
  \begin{equation}\label{eq:splice}
    \tilde{h}^k_X=\sum_{i\in I}h^i_X \mod 2,
  \end{equation}
  such that $ \tilde{h}^k_X$ substitutes $h^i_X$ for all $i\in I$. Similarly for the spliced $Z$ stabilizer $\tilde{H}_Z$.
\end{defi}
Spliced stabilizers are valid PCMs. That is, $\tilde H_X \tilde H_Z^T=0$ holds thanks to the linearity of the splicing operation (element-wise sum modulo two). This and related constructions will be the main tool in transforming  trivial Bruhat-based CSS codes to non-trivial ones. But what can be expected from the spliced CSS codes? At first sight it doesn't look encouraging. I present two of its negative properties. First, there is no obvious guarantee that the resulting CSS code is protecting all or even any logical/physical qubit. Consider an overprotected $[4,2,2]$ code with two $X$ and two $Z$ checks coupled to all four data qubits. Splicing the $X$ or $Z$ stabilizers will leave the data qubits unprotected, see~\cref{fig:422}. Second, the splicing changes the stabilizers' weights. Some qubit connections  (that is, the edges of a Tanner graph) disappear and, in general, some of the new spliced stabilizers become heavier but in a highly non-uniform way shown later in~\cref{fig:histogram}.
\begin{figure}[t]
  \resizebox{7.5cm}{!}{\includegraphics{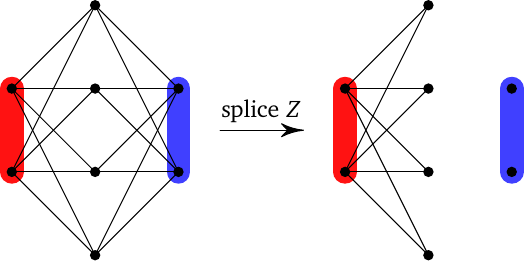}}
  \caption{(Left) A Tanner graph of the $[4,2,2]$ code with redundantly protected qubits to illustrate the negative  effect of splicing. (Right) $Z$ stabilizer  (indicated in blue) splicing exposes the data qubits to $X$ errors.}
  \label{fig:422}
\end{figure}

Despite these issues  splicing turns out to be quite useful in general and mainly in connection with the $S^1$ decomposition of the $(W,w_b,w_t,p)_3$ codes. I will describe an algorithm whose main input is the complete $k$-crown structure of  a chosen code  $(W,w_b,w_t,p)_3$. $k$-crowns are not unknown structures to QEC practitioners. Although it is a length three poset due to the presence of the bottom and top vertices (see $\hat{b},\hat{t}$ in~\cref{fig:S012} for $k=7$), once both are removed one is left with a classical Tanner graph of a (redundant) $k$-repetition code. Similarly, if either $\hat{b}$ or $\hat{t}$ is removed  a quantum Tanner graph of a (trivial) quantum CSS code with redundant stabilizers of one type is obtained.

Given a $(W,w_b,w_t,p)_{3}$ CSS code I will introduce the left and right $k$-crown graphs. To this end, consider a longer chain complex $(W,w_b,w_t,p)_{5}$ consisting of five consecutive layers $l_{p-2},l_{p-1},l_{p},l_{p+1}$ and $l_{p+2}$ of the Bruhat face poset $\euF(S^d)$ for a sufficiently big~$d$. Then according to~\cref{prop:sphereClassification}, item~(2), for each pair of elements $\hat{b}\in l_{p-2},\hat{t}\in l_{p+1}$ there exists a $k$-crown poset with this bottom and top element. Such poset will be called a \emph{left $k_L$-crown} poset. Similarly,  for each pair of elements $\hat{b}\in l_{p-1},\hat{t}\in l_{p+2}$ there exists a $k_R$-crown poset equipped with this bottom and top element. Any such $S^1$ sphere will be called a \emph{right $k_R$-crown} poset. I find all left and right $k_{L/R}$-crowns by iterating over $|l_{p-1}|\times|l_{p+2}|+|l_{p-2}|\times|l_{p+1}|$ possibilities and consider their subposets obtained by removing $w_b$ from all left $k_L$-crowns and $w_t$ from all $k_R$-right crowns. The collection of these subposets will be denoted $\{\euF_L(S^1)\}$ and $\{\euF_R(S^1)\}$ --~\cref{fig:S1LeftRight} illustrates the construction.

\begin{figure}[t]
  \resizebox{8cm}{!}{\includegraphics{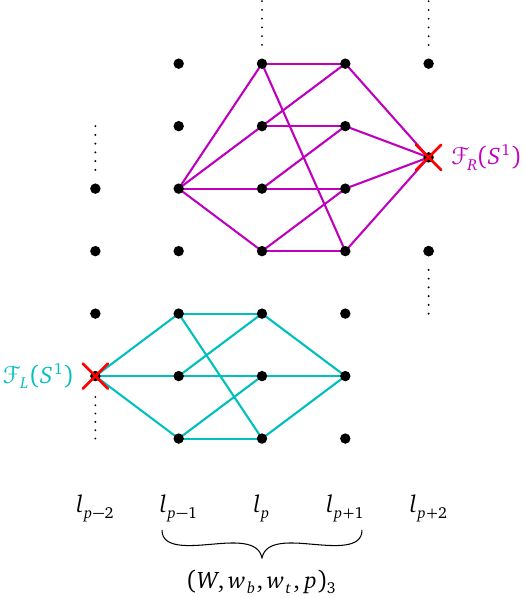}}
  \caption{Five poset layers $l_i,p-2\leq i\leq p+2$ of a general Bruhat poset $\euF(S^d)$ used to find all left and right $k_{L/R}$-crown posets of a CSS code $(W,w_b,w_t,p)_3$. The left $k_L$-crown is illustrated for $k_L=3$ in cyan and the right $k_R$-crown for $k_R=4$ in magenta. The terminal vertices ($\hat{b}$ for the left crown and~$\hat{t}$ for the right crown) and their incident edges that do not participate in the CSS code  are removed (indicated by the red crosses).}
  \label{fig:S1LeftRight}
\end{figure}

\cref{algo:crwonSplicing} is a pseudocode describing the process of obtaining a non-trivial CSS code from a $(W,w_b,w_t,p)_{3}$ CSS code. In words, I randomly choose a total number $\kappa$ of left or right $k$-crowns with probability~$\wp$\footnote{The use of the left ($L$) and right ($R$) notation reflects the vertical arrangement of the face posets/Tanner graphs used in this paper. It then lets us decide whether $L=X(Z)$ and $R=Z(X)$.}. This is a bias introduced to take into account the asymmetric Tanner graphs of $(W,w_b,w_t,p)_{3}$. A typical Bruhat order-based code has a different number of $X$ or $Z$ checks and an even splicing would lead to suboptimal codes. The bias parameter can be calculated in different ways. I use
$$
\wp={|\{\euF_L(S^1)\}|\over|\{\euF_L(S^1)\}|+|\{\euF_R(S^1)\}|}
$$
but another option could be
$$
\wp={|l_{p-1}|\over|l_{p-1}|+|l_{p+1}|},
$$
where $|l_{p\pm1}|$ is the number of left/right check qubits. Another important parameter is the overlap~$\la$. When a new $k$-crown is picked, the corresponding left/right checks can coincide with the previously chosen left/right checks in at most $\la$ instances. This turns out to be quite an important parameter, where the best codes are found for $\la=1$. Some choices of $\kappa,\la$ may not be (or are unlikely to be) satisfied and so a cut-off parameter~$c$ is introduced to avoid getting stuck in a loop. The subprocedure \textsc{Splice} is the actual splicing as described in~\cref{def:splice}.

\begin{algorithm}[t]
	\caption{$k$-crown ($S^1$) splicing of a $(W,w_b,w_t,p)_{3}$ CSS code}
	\label{algo:crwonSplicing}
	\begin{algorithmic}[1]
		\State {\bfseries Input}:
		\State PCMs $H_L,H_R$ of a  $(W,w_b,w_t,p)_{3}$ CSS code.
		\State All left and right $k$-crown subposets $\{\euF_L(S^1)\},\{\euF_R(S^1)\}$  of the $(W,w_b,w_t,p)_{3}$ CSS code.
        \State Total number $\kappa$ of $k$-crowns to be spliced.
		\State Overlap parameter $\la$, cut-off parameter $c$ and bias probability~$\wp$.
		\State {\bfseries Output}:
		\State PCMs $\tilde H_L,\tilde H_R$ of the spliced  $(W,w_b,w_t,p)_{3}$ CSS code.	
        \Procedure {CrownSplice}{$H_L,H_R,\{\euF_L(S^1)\},\{\euF_R(S^1)\},\kappa,\la,\wp,c$}
        \State $n\gets 0,I\gets\emptyset,J\gets\emptyset$
	 	\For{$m \gets 1$ to $\kappa$}
        \Repeat
        \State Flip a biased coin of probability $\wp$ to pick an element of $\{\euF_L(S^1)\}$ or $\{\euF_R(S^1)\}$
        \State Find the PCMs' rows $\{h_L^m\}$ or $\{h_R^m\}$ to be spliced
        \State ($I_m\gets \mathrm{indrow}{[\{h_L^m\}]}$  {\bfseries and } $J_m\gets\emptyset$)  {\bfseries or } ($I_m\gets\emptyset$  {\bfseries and } $J_m\gets \mathrm{indrow}{[\{h_R^m\}]}$) \\\Comment\#The function $\mathrm{indrow}$ identifies the row indices
        \State $n\gets n+1$
        \Until{$|I\cap I_m|\leq\la$} {\bfseries or }{$|J\cap J_m|\leq\la$} {\bfseries or }$n>c$
        \State $I\gets\cup_m I_m$
        \State $J\gets\cup_m J_m$
        \EndFor

        \Procedure{Splice}{$H_L$,$H_R$,$\{I_m\}$,$\{J_m\}$}
        \State $\tilde H_L\gets H_L$
        \State $\tilde H_R\gets H_R$
	 	\EndProcedure
        \EndProcedure
	\end{algorithmic}
\end{algorithm}

Once $\tilde H_L,\tilde{H}_R$ are obtained, the data qubits that became decoupled from one or both checks are removed. This doesn't spoil the commutation relations. The distance parameter $d=\min{[d_L,d_R]}$ is then estimated (upper-bounded) and exactly verified by dist-m4ri\textsuperscript{\ref{fn:distm4ri}} or Qubitserf\textsuperscript{\ref{fn:qubitserf}} if time or memory permits. The choice of a triple of input parameters $\kappa,\la,c$ has to be optimized but it is fairly easy to guess a critical $\kappa$ value, where the number of logical qubits becomes non-zero. Note that one can observe that the stabilizers of the  $(W,w_b,w_t,p)_{3}$ codes are highly overcomplete, that is, $\dim{H_{X/Z}}>\rank{H_{X/Z}}$ and from the difference it is straightforward to deduce an interesting $\kappa$ regime. In fact, the behavior of ~\cref{algo:crwonSplicing} is such that the switch from trivial to non-trivial codes is not a gradual transition as a function of~$\kappa$. So there is no need to sweep large swaths of the parameter space. \cref{algo:crwonSplicing} produces just a single code instance and as I mentioned previously there is no guarantee that the code is interesting. It is therefore repeated with the same input parameters but even the largest codes presented in the next section were obtained without taking an excessive number of samples. The main factor preventing me from going to even larger trivial codes as a starting point is not the time complexity of taking many samples but rather the need to estimate the distance each time. All tools I use are either sampling algorithms themselves~\cite{QDistRnd} or their complexity scales badly\textsuperscript{\ref{fn:distm4ri},\ref{fn:qubitserf}} (as a matter of inevitability) and this prevents from going further without using additional parallel computational resources.

I, nevertheless, introduce a random proxy method to crown splicing in~\cref{sec:otherCSSconstructions}. In this case, I completely disregard the code's internal structure and simply splice a random number of random pairs of $X$ or $Z$ Tanner graph vertices (could be generalized to random $n$-tuples). Despite taking a small number of random samples the method is remarkably stable and provides almost identical codes with high probability. How good are these codes? Random splicing applied to $(W,w_b,w_t,p)_{3}$ CSS codes yields (most of the time) sub par codes to crown splicing in terms of both the distance and mainly the stabilizer weight. But it is quite informative as what to expect should the crown splicing procedure be performed. Additionally, it points to an even more exciting possibility that the crown posets do indeed play a role in the internal structure of the Bruhat codes derived from Coxeter groups. I leave this as a major open question.

Before showing some of the found codes, I mention in passing that the sets $\{\euF_L(S^1)\}$ and $\{\euF_R(S^1)\}$ contains $k$-crowns for various $k$'s and they correspond to $2k$-cycles as depicted in~\cref{app:sphereClassification}. Among them is the smallest one, $k=2$, and the corresponding 4-cycle. They exactly correspond to the girth of the Tanner graph and serve  as a measure of how certain decoding algorithms perform~\cite{roffe2020decoding}. It is not clear if the short cycles can be entirely avoided by exploring or designing Coxeter groups. Moreover, the splicing procedures, in general, alter the cycle structure of the codes.

\subsubsection{The case of $S^0$ and $S^2$ decompositions}\label{sec:S0andS2decomposition}

The crown decomposition hit the sweet spot when it comes to the derivation of non-trivial CSS codes. But as one can see in~\cref{app:sphereClassification}, all $k$-crown graphs are made of diamonds (the $S^0$~diamond posets) and all $S^2$ posets are made of $k$-crowns. Can they be used for non-trivial code creation? Besides splicing, there exists another strategy to obtain non-trivial codes suitable to the diamond structure. It is simply by the systematic removal of rows from the $X$ or $Z$ stabilizer. However, without any internal structure, this procedure is hopelessly inefficient due to the combinatorial explosion in the number of possibilities of the checks to remove. This is not the strategy I am going to pursue here but  it is worth mentioning that the internal structure of the trivial codes in terms of their diamond posets is indeed useful.
If I remove pairs of $X,Z$ checks forming a diamond non-trivial codes are obtained again and in a significantly more efficient way than just randomly removing stabilizer rows. Indeed, once all the diamonds are tracked (there are at most $|l_{p-1}|\times|l_{p+1}|$ of them) they provide the desired pairs of checks. The upside is that the resulting CSS codes do not suffer from the stabilizer weight gain like for $k$-crown splicing. The downside is that the code parameters such as the distance and mainly the rate are worse, see~\cref{table:S1S2splicing}.

How does the crown splicing compare to the $S^2$ splicing? Again, using the length four chain complex like in the crown decomposition I can find all $S^2$ posets as classified in~\cite{hultman2003combinatorial} for Weyl groups and depicted in~\cref{fig:hultman} or even, in general, new ones for non-Weyl groups. With this input, a simple modification of~\cref{algo:crwonSplicing} is used. Unlike the left and right $k$-crowns, where one of the check types is always a single vertex (so nothing to splice), every $S^2$ poset  contains more than one check of both the  $X$ and $Z$ type. They are then spliced  (each type separately). However, as explained in~\cref{app:sphereClassification}, $k$-crowns are the cycles visible in~\cref{fig:hultman} depicting the CW complexes of $S^2$ spheres. One may guess that thanks to the `correlated' $X$ and $Z$ splicing the $S^2$ cannot generate better codes than crown splicing. Indeed, given a code obtained from the $S^2$ splicing one can formally transform it into a code obtained from some crown splicing but not vice versa. Therefore, crown splicing is a richer procedure to get non-trivial codes. However, the $k$-crowns in a given CW $S^2$ complex share edges (typically more than one) and this corresponds to greater crown overlapping than I used in~\cref{algo:crwonSplicing} by choosing $\la=1$. So this type of splicing can provide new codes typically not found by crown splicing and I indeed did find one example, see~\cref{table:S1S2splicing}. Another advantage of the $S^2$ splicing is the speed due to a smaller number of the $S^2$ posets compared to the left and right $k$-crowns. There is another, speculative, option: that  the structure of the $S^2$ posets has something to do with the existence of interesting codes and their splicing should be preferred. I leave it as another open question.

\subsubsection{Other CSS code constructions -- random splicing and chain complex folding}\label{sec:otherCSSconstructions}

In~\cref{sec:crownSplicing} I showed how to  decompose any $(W,w_b,w_t,p)_{3}$ CSS code in terms of the $k$-crown posets. The results obtained by splicing based on this approach are quite interesting codes -- at least when it comes to the number of encoded qubits and distance. The heuristic code discovery adventure could end here but are there other ways of how to make the  $(W,w_b,w_t,p)_{3}$ CSS codes non-trivial (and competitive in terms of the code parameters)? In this section I sacrifice the detailed structure of these codes in favor of a more, one could say, blunt but at the same time more efficient approach of how to get interesting codes. The introduced procedure is not only more computationally feasible compared to crown splicing but also allows me to extend the scope of where the Bruhat order of Coxeter group can be applied. Namely, by employing the chain complexes of lengths greater than two I describe a method of how to obtain CSS codes with metachecks of one type, which were shown to be useful for single-shot decoding~\cite{bombin2015single,campbell2019theory,breuckmann2021single,quintavalle2021single,higgott2023improved}.

In terms of chain complexes the $S^1$ or $S^2$ splicing procedure  can be summarized  as
\begin{equation}\label{eq:chacoSplice}
  C_{p+1}\overset{H_Z^T}\to C_p\overset{H_X}\to C_{p-1}\overset{\mathrm{splice}}\longmapsto
  \tilde C_{p+1}\overset{\tilde H_Z^T}\to\tilde C_p\overset{\tilde H_X}\to\tilde C_{p-1}.
\end{equation}
The splicing alters the $(p\pm1)$-cell structure indicated by tildes on $C_{p\pm1}$. But thanks to an occasional decoupling of the data qubits, which are then removed, I put a tilde on $C_p$ as well. That does not mean the data qubits have been spliced -- it would violate the CSS condition. But what if the sphere substructure of $H_X$ and $H_Z$ doesn't really matter? To test it, I assume $\dim{H_X}=2c_x,\dim{H_Z}=2c_z,c_i\in\bbZ^+$ and I randomly match pairs of stabilizers to be spliced -- independently for $X$ and $Z$. The number of matchings is $(c_x!!)\times(c_z!!)$, where $!!$ is the double factorial. The action of the resulting PCMs $\tilde H_X,\tilde H_Z$, where $\dim{\tilde H_X}=c_x,\dim{\tilde H_Z}=c_z$, as linear maps between the modules is captured by the same chain complex as on the RHS of~\cref{eq:chacoSplice}. Clearly, it would be pointless to generate all possible matchings but for all tested  $(W,w_b,w_t,p)_{3}$ CSS codes I consistently obtain decent non-trivial codes with high probability, see~\cref{sec:actI} for a more nuanced discussion.

The most interesting thing about this kind or random splicing is that it seems to behave like a good proxy to the  computationally more demanding $k$-crown splicing described in~\cref{algo:crwonSplicing}  in terms of the code parameters. Whenever I could compare it, such crudely spliced codes are often  worse both in the logical qubit space dimension $2^k$ and distance but not much. Whenever I am not able to verify it due to the slow speed of the repeated distance calculations I conjecture that the random splicing can indeed be considered as a tight lower bound on $k$-crown splicing code parameters. There are, however, two distinguishing features, where the crown spliced codes significantly outperform the randomly spliced ones: the stabilizer weight and the code rate within prospective code families. As again discussed in~\cref{sec:actI}, the crown spliced stabilizer weight distribution is highly skewed whereas the randomly spliced codes are quite uniform and on average much heavier. For the code rate, unlike crown splicing where the rate seems to be improving, it is decreasing within a code family.

For now I will take the suboptimal random splicing construction as a tool to quickly generate new classes of CSS codes, this time from longer chain complexes $(W,w_b,w_t,p)_{2k+1}$ for $k=2,3$\footnote{Note that I can in principle substitute the fast random splicing by the slower crown splicing (the sphere substructure is still present) and presumably get even better results. This is nevertheless computationally demanding.}. To this end, I define a procedure dubbed \emph{chain complex folding}.
\begin{defi}\label{def:folding}
  Let $C_\bullet=(C_i,K_i)$ be a length $2k$ chain complex
  \begin{equation}\label{eq:Cbullet}
    0\to C_{p+k}\overset{K_{p+k}}\longrightarrow C_{p+k-1}\ \cdots\ C_{p-k+1}\overset{K_{p-k+1}}\longrightarrow  C_{p-k}\to 0,
  \end{equation}
  where $\{C_i(X,\bbZ_2)\},p-k\leq i\leq p+k$ are finite-dimensional modules and $K_i$ linear boundary maps. A new chain complex called \emph{folded chain complex} is formed by reversing the first $k$ arrows of $C_\bullet$ (cf.~\cref{eq:coChaincomplex}):
  \begin{equation}\label{eq:Cfolded}
     0\to C_{p}{\ \looongrightarrow{K_p\boxplus K_{p+1}^T}} \ C_{p-1}\oplus C_{p+1}\ {\looongrightarrow{K_{p-1}\oplus K_{p+2}^T}}\ C_{p-2}\oplus C_{p+2}\cdots C_{p-k+1}\oplus C_{p+k-1}\ {\looongrightarrow{K_{p-k+1}\oplus K_{p+k}^T}}\ C_{p-k}\oplus C_{p+k}\to0.
  \end{equation}
   Furthermore, one may decide to apply a similar operation to folding on the last map:
    \begin{equation}\label{eq:CfoldedTwice}
     0\to C_{p}{\ \looongrightarrow{K_p\boxplus K_{p+1}^T}} \ C_{p-1}\oplus C_{p+1}\ {\looongrightarrow{K_{p-1}\oplus K_{p+2}^T}}\ C_{p-2}\oplus C_{p+2}\cdots C_{p-k+1}\oplus C_{p+k-1}\ {\looongrightarrow{(K_{p-k+1}\boxplus K_{p+k}^T)^T}}\ C_{p-k,p+k}\to0,
  \end{equation}
  where the codomain of $(K_{p-k+1}\boxplus K_{p+k}^T)^T$ is denoted $C_{p-k,p+k}$.
\end{defi}

The initial and terminal zeros in both sequences are zero modules. The symbol $\oplus$ for the modules stands for an (internal) direct sum. The same symbol for the linear map $K_i$ is a direct sum of their matrix representations. The symbol $\boxplus$ stands for vertical concatenation of the linear maps $K_i$ represented as matrices. Specifically~\cite{bhatia2013matrix}, let $K:V\mapsto W$, where $V=V_1\oplus V_2,W=W_1\oplus W_2$ are internal direct sums of modules or vector spaces $V_i,W_j$ and $K\in\L(V,W)$ -- the space of linear maps. Then, $K$ is of the block matrix form
$$
K=\begin{bmatrix}
   K_{11} & K_{12} \\
   K_{21} & K_{22} \\
 \end{bmatrix},
$$
where $K_{ji}:V_i\mapsto W_j$. So when I write, for example, $K_{11}\boxplus K_{21}$ I mean the first column of $K$ whereas $K_{11}\oplus K_{22}$ is the diagonal part of~$K$\footnote{The notion of internal/external direct sum of vector spaces/modules and a direct sum of matrices is a source of confusion and that's why I want to be clear here.}.

\begin{rem}
  Next lemmas show explicitly why the folded structure is a valid chain complex for  $C_\bullet=(W,w_b,w_t,p,\hat{b},\hat{t})_{2k+1}$ but it holds in general by realizing that folding is identical to splicing -- if I consider $K_p$ and $K_{p+1}^T$ to be stabilizers (I can do that even though the folded complex is of length more than two) the support of the spliced stabilizers is two disjoint sets of data qubits. So the splicing becomes the (transposed) $\boxplus$ operation mapping from $C_p$ to $C_{p-1}\oplus C_{p+1}$. Since I know that splicing produces valid CSS codes I therefore get a valid (folded) chain complex.

  It also implies that the weight of $K_p\boxplus K_{p+1}^T$ considered to be a stabilizer is a sum of the weights of $K_p$ and $K_{p+1}$.
\end{rem}

%\begin{equation}\label{eq:5chain}
%\begin{tikzcd}[column sep=large,row sep=large]
%C_{p+1} \arrow{r}{H_Z^T}  & C_p \arrow{r}{H_X}  \arrow[mapsto]{d}{\textsc{Splice}} & C_{p-1}\\
%C & A
%\end{tikzcd}
%\end{equation}

%\begin{equation}\label{eq:}
%\begin{tikzcd}
%C_{p+1} \arrow{r}{H_Z^T}  & C_p \arrow{r}{H_X}  \arrow[swap]{d}{\varrho_x^f} & C_{p-1} \overset{\textsc{Splice}}\longmapsto \\
%C & A
%\end{tikzcd}
%\end{equation}

Consider the following transformations of a  length four chain complex  $(W,w_b,w_t,p)_{5}$~(\cref{def:BruhatChains}) and omitting the zero modules:
\begin{subequations}\label{eq:5chain}
\begin{align}\label{eqs:5chainA}
  C_{p+2}\overset{K_{p+2}}\longrightarrow C_{p+1}\overset{K_{p+1}}\longrightarrow &\ C_p\overset{K_p}\longrightarrow  C_{p-1}\overset{K_{p-1}}\longrightarrow  C_{p-2}\\
  &\ \ \rotatebox[origin=c]{-90}{$\longmapsto$}\mbox{\ folding}\nonumber\\\label{eqs:5chainB}
  C_{p}{\ \looongrightarrow{K_p\boxplus K_{p+1}^T}} \ C_{p-1}&\oplus C_{p+1}\ {\looongrightarrow{(K_{p-1}^T\boxplus K_{p+2})^T}}\ C_{p-2,p+2}\\
  &\ \ \rotatebox[origin=c]{-90}{$\longmapsto$}\mbox{\ splicing}\nonumber\\\label{eqs:5chainC}
  \tilde C_{p}{\ \looongrightarrow{\tilde K_p\boxplus\tilde K_{p+1}^T}} \ \tilde C_{p-1}&\oplus \tilde C_{p+1}\ {\looongrightarrow{(\tilde K_{p-1}\boxplus \tilde K_{p+2}^T)^T}}\ \tilde C_{p-2,p+2}.
\end{align}
\end{subequations}
Then, the following holds:
\begin{lem}\label{lem:5chain}
  Sequences~\cref{eqs:5chainB} and~\eqref{eqs:5chainC} are length two chain complexes and therefore CSS codes for any length four chain complex  $(W,w_b,w_t,p)_{5}$ in~\cref{eqs:5chainA}.
\end{lem}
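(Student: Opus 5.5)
The plan is to reduce the claim to a single matrix identity: the composite of the two maps in~\eqref{eqs:5chainB} is zero over $\bbZ_2$. Once that holds, the sequence is a length two chain complex. Setting $H_X^T:=K_p\boxplus K_{p+1}^T$ and $H_Z:=(K_{p-1}^T\boxplus K_{p+2})^T$, the middle module $C_{p-1}\oplus C_{p+1}$ carries the data qubits, and the CSS condition $H_Z H_X^T=0$ is exactly this vanishing. The spliced sequence~\eqref{eqs:5chainC} then follows from the linearity argument already given after~\cref{def:splice}, so the real work is~\eqref{eqs:5chainB}.

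For~\eqref{eqs:5chainB} I will write the maps as block matrices with respect to $C_{p-1}\oplus C_{p+1}$. The first map $C_p\to C_{p-1}\oplus C_{p+1}$ is the column
\[
\begin{bmatrix} K_p\\ K_{p+1}^T\end{bmatrix}.
\]
The second map $C_{p-1}\oplus C_{p+1}\to C_{p-2,p+2}$ must be read so that its domain is $C_{p-1}\oplus C_{p+1}$. I take it to be the row
\[
\begin{bmatrix} K_{p-1} & K_{p+2}^T\end{bmatrix},
\]
whose rows are indexed by $C_{p-2,p+2}$, which I identify with $C_{p-2}$ and $C_{p+2}$ suitably matched, with row counts agreeing as in the splicing interpretation of the Remark. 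The composite is then
\[
K_{p-1}K_p + K_{p+2}^T K_{p+1}^T = K_{p-1}K_p + (K_{p+1}K_{p+2})^T .
\]
Both terms vanish because~\eqref{eqs:5chainA} is a chain complex, i.e.\ $\partial\circ\partial=0$ in the face poset $\euF(S^d)$. That property is guaranteed by~\cref{thm:CWcellulationBruhat}: the cellular boundary of a regular CW complex over $\bbZ_2$ squares to zero. Combinatorially, this is the diamond property of~\cref{prop:sphereClassification}~(1), that every length two interval has exactly two middle elements, so each matrix entry of $K_{i-1}K_i$ counts $0$ or $2$ paths and is $0 \bmod 2$.

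The main obstacle is this bookkeeping of $C_{p-2,p+2}$. If the top block instead stacks $C_{p-2}$ and $C_{p+2}$ as separate rows (the direct-sum reading $K_{p-1}\oplus K_{p+2}^T$), the composite is the column
\[
\begin{bmatrix} K_{p-1}K_p\\ K_{p+2}^TK_{p+1}^T\end{bmatrix},
\]
which vanishes blockwise by the same two identities. If instead rows of $C_{p-2}$ and $C_{p+2}$ are merged, the composite of a merged row is the mod-2 sum of the corresponding rows of these two zero blocks. That sum is again zero, which is exactly the Remark's observation that folding is a splicing. I will check both readings, so the argument does not depend on the precise pairing.

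Finally, for~\eqref{eqs:5chainC}, the splicing replaces row sets of $H_X$ and $H_Z$ by their sums, so $\tilde H_Z\tilde H_X^T=A H_Z H_X^T B^T=0$ for the 0/1 row-combination matrices $A,B$. Deleting data qubits that are decoupled from all checks removes zero columns and preserves orthogonality. Hence~\eqref{eqs:5chainC} is again a length two chain complex, i.e.\ a CSS code.
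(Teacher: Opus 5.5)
Your proof is correct and follows the paper's argument: the composite of the folded maps is $K_{p-1}K_p+K_{p+2}^TK_{p+1}^T$, which vanishes by the two identities $K_{p-1}K_p=0$ and $K_{p+1}K_{p+2}=0$ of the original Bruhat complex, and splicing then preserves orthogonality by linearity. Your extra care over how rows of $C_{p-2}$ and $C_{p+2}$ are paired in $C_{p-2,p+2}$ tightens the argument but is not a different route. So does your blockwise treatment of the direct-sum reading, which the paper loosely writes as a single sum.
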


\begin{proof}
  Since \cref{eqs:5chainA} is a chain complex of length four it gives rise to shorter chain subcomplexes of length two. In particular,  codes $(W,w_b,w_t,p)_{3}, (W,w_b,w_t,p-1)_{3}$ and  $(W,w_b,w_t,p+1)_{3}$ are trivial (due to~\cref{cor:CSStriplesTriv}) CSS codes. For the last two supported by the three leftmost and the three rightmost modules, one obtains (cf.~\cref{eq:chacoSplice})
  \begin{align}%\label{eqs:}
    K_{p-1}K_p & =0,\\
    K_{p+1}K_{p+2} & =0
  \end{align}
  holds. But then the map concatenation in~\cref{eqs:5chainB} satisfies
  \begin{equation}\label{eq:CSS3chain2fold}
    (K_{p-1}^T\boxplus K_{p+2})^T(K_p\boxplus K_{p+1}^T)=K_{p-1}K_p+K_{p+2}^TK_{p+1}^T=0
  \end{equation}
  and so it is a valid length two chain complex (CSS code). If only single-sided folding is used following~\cref{eq:Cfolded} I again get
  \begin{equation}\label{eq:CSS3chain}
    (K_{p-1}\oplus K_{p+2}^T)(K_p\boxplus K_{p+1}^T)=K_{p-1}K_p+K_{p+2}^TK_{p+1}^T=0.
  \end{equation}
  finally,~\cref{eqs:5chainC} is obtained by splicing as performed on the RHS of~\cref{eq:chacoSplice} again leading to a valid CSS code.
\end{proof}
Whereas none of the length two chains in~\cref{eqs:5chainA} gives a non-trivial code  the same is not true for~\cref{eqs:5chainB} and~\eqref{eqs:5chainC}. The splicing operation resulting in~\cref{eqs:5chainC} (and later in~\cref{eqs:7chainC}) is an added option to increase the rate of the folded codes but at the expense of the stabilizer weight increase. I will not take advantage of it since I demonstrated random splicing enough. I will, however, show examples of non-trivial folded codes in~\cref{table:folding}.
\begin{rem}
  One cannot not notice the similarity between the way generalized bicycle codes are constructed. Indeed, the box sum in~\cref{eqs:5chainB} is a vertical concatenation of matrices and so $H_Z^T\equiv K_p\boxplus K_{p+1}^T$ in~\eqref{eq:chacoSplice} resembles the construction from~\cite{kovalev2013quantum,bravyi2024high}. But that's where the similarity ends since neither the linear maps $K_i$'s forming $H_X\equiv (K_{p-1}^T\boxplus K_{p+2})^T$ are  directly related to those from $H_Z^T$ nor they form a commutative ring (of anything obvious) like the cyclic shifts in bicycle codes.
\end{rem}

The procedure in~\cref{lem:5chain} can be pushed one step further to generate non-trivial CSS codes with a metacheck corresponding to length three chain complexes. To that end, consider an arbitrary chain complex $(W,w_b,w_t,p)_{7}$ of length six in~\cref{eqs:7chainA}:
\begin{subequations}\label{eq:7chain}
\begin{align}\label{eqs:7chainA}
  C_{p+3}\overset{K_{p+3}}\longrightarrow C_{p+2}\overset{K_{p+2}}\longrightarrow C_{p+1}\overset{K_{p+1}}\longrightarrow &\ C_p\overset{K_p}\longrightarrow  C_{p-1}\overset{K_{p-1}}\longrightarrow  C_{p-2}  {\looongrightarrow{K_{p-2}}}\ C_{p-3}\\
  &\ \ \rotatebox[origin=c]{-90}{$\longmapsto$}\mbox{\ folding}\nonumber\\\label{eqs:7chainB}
  C_{p}{\ \looongrightarrow{K_p\boxplus K_{p+1}^T}} \ C_{p-1}\oplus C_{p+1}\ {\looongrightarrow{K_{p-1}\oplus K_{p+2}^T}}&\ C_{p-2}\oplus C_{p+2}
    \ {\looongrightarrow{(K_{p-2}^T\boxplus K_{p+3})^T}}\ C_{p-3,p+3}\\
  &\ \ \rotatebox[origin=c]{-90}{$\longmapsto$}\mbox{\ splicing}\nonumber\\\label{eqs:7chainC}
  \tilde C_{p}{\ \looongrightarrow{\tilde K_p\boxplus\tilde K_{p+1}^T}} \ \tilde C_{p-1}\oplus\tilde C_{p+1}\ {\looongrightarrow{K_{p-1}\oplus K_{p+2}^T}}&\ \tilde C_{p-2}\oplus\tilde C_{p+2}
  \ {\looongrightarrow{(\tilde K_{p-2}^T\boxplus\tilde K_{p+3})^T}}\ \tilde C_{p-3,p+3}.
\end{align}
\end{subequations}
\begin{lem}
  Sequences~\cref{eqs:7chainB} and~\eqref{eqs:7chainC} are length three chain complexes for any length six chain complex $(W,w_b,w_t,p)_{7}$ in~\cref{eqs:7chainA}.
\end{lem}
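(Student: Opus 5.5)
The plan mirrors the proof of \cref{lem:5chain}. I only need to check that every pair of consecutive maps in \cref{eqs:7chainB} composes to zero. Each map there is built from the boundary maps $K_i$ of \cref{eqs:7chainA}, so everything reduces to the identities $K_{i-1}K_i=0$ for $p-2\leq i\leq p+3$. These hold because \cref{eqs:7chainA} is a chain complex: it is the face poset $\euF(S^d)$ restricted to seven consecutive layers. Equivalently, each length two subcomplex $(W,w_b,w_t,i)_3$ for $p-2\leq i\leq p+2$ is a valid (trivial) CSS code by \cref{lem:BruhatCodes} and \cref{cor:CSStriplesTriv}. Over $\bbZ_2$ these identities also transpose to $K_i^TK_{i-1}^T=0$.

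\textbf{First composition.} I will compute it in block form:
\begin{equation*}
  (K_{p-1}\oplus K_{p+2}^T)(K_p\boxplus K_{p+1}^T)
  =\begin{bmatrix} K_{p-1} & 0\\ 0 & K_{p+2}^T\end{bmatrix}\begin{bmatrix} K_p\\ K_{p+1}^T\end{bmatrix}
  =\begin{bmatrix} K_{p-1}K_p\\ K_{p+2}^TK_{p+1}^T\end{bmatrix}=0,
\end{equation*}
where I use $K_{p+2}^TK_{p+1}^T=(K_{p+1}K_{p+2})^T=0$.

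\textbf{Second composition.} The last map $(K_{p-2}^T\boxplus K_{p+3})^T$ is the horizontal block $[K_{p-2}\ \ K_{p+3}^T]$, so
\begin{equation*}
  \begin{bmatrix} K_{p-2} & K_{p+3}^T\end{bmatrix}\begin{bmatrix} K_{p-1} & 0\\ 0 & K_{p+2}^T\end{bmatrix}
  =\begin{bmatrix} K_{p-2}K_{p-1} & K_{p+3}^TK_{p+2}^T\end{bmatrix}=0.
\end{equation*}
Unlike \cref{eq:CSS3chain2fold}, where the two vanishing terms are summed, here they sit in separate blocks, and both vanish by the chain identities. Together with the trailing zero modules, this shows \cref{eqs:7chainB} is a length three chain complex. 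Reading $C_{p-1}\oplus C_{p+1}$ as data qubits, $C_p$ and $C_{p-2}\oplus C_{p+2}$ are the two check types, and $C_{p-3,p+3}$ is the metacheck.

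\textbf{The spliced sequence \cref{eqs:7chainC}.} This is the step needing care. Splicing (\cref{def:splice}) replaces a set of rows of a check matrix by their mod-2 sum. I will model it as left multiplication by a $0/1$ matrix $S$ whose rows are indicator vectors of the spliced groups. Removing data qubits that became decoupled I will model as a coordinate restriction $P$, which is harmless because the removed columns are zero in the adjacent spliced maps. With this convention each tilde map is obtained from its untilded version by multiplying by such $S$'s on its codomain side, and, on its domain side, by matching $S^T$'s or the restrictions $P$. The spliced $S$ and $S^T$ factors on a shared module then cancel in pairs inside each product, and each composition becomes $S(\text{old composition})S'$, which is $0$. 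This is exactly the linearity argument already used for \cref{eq:chacoSplice} and in \cref{lem:5chain}.

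The main obstacle is bookkeeping: \cref{eqs:7chainC} leaves the middle map $K_{p-1}\oplus K_{p+2}^T$ untilded while its neighbours' modules carry tildes. I will interpret that middle map as the induced map between the spliced and restricted modules, i.e.\ conjugated by the same $S,P$ factors used on either side. The splicing of $C_{p\pm2}$ must then be chosen consistently across the two maps touching it. If instead splicing acts only on the outer maps, the same cancellation goes through provided the metacheck rows are spliced via $S$ while the $C_{p\pm 2}$ columns are left intact. I would state whichever convention the construction uses and verify the two block products once more under it.
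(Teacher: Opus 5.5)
Your check of \cref{eqs:7chainB} is correct and is the paper's argument. The first composition is handled as in \cref{lem:5chain}, and the second reduces to $K_{p-2}K_{p-1}=0$ and $K_{p+2}K_{p+3}=0$. Keeping the vanishing products in separate blocks, as you do, is the accurate way to write it.

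\textbf{The gap is in \eqref{eqs:7chainC}.} Write $A=K_p\boxplus K_{p+1}^T$, $B=K_{p-1}\oplus K_{p+2}^T$ and $M=(K_{p-2}^T\boxplus K_{p+3})^T$. Your claim that ``the spliced $S$ and $S^T$ factors on a shared module cancel in pairs'' is false. Over $\bbZ_2$, $S^TS$ is block diagonal with an all-ones block on each spliced group, not the identity.

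This does no harm at the data-qubit module, since only the restriction $P$ acts there and you handled that correctly. It also does no harm at the end modules $C_p$ and $C_{p-3,p+3}$, where no cancellation is needed. It does fail under your first convention. There the $X$-check module $C_{p-2}\oplus C_{p+2}$, shared by $B$ and the metacheck $M$, is spliced by some $S_E$, and $\tilde M\tilde B=S_FMS_E^TS_EBP^T$. Take $S_F=I$ and splice a single pair $\{e,e'\}$ such that some metacheck row contains $e$ but not $e'$. That row of $MS_E^TS_E$ differs from the corresponding row of $M$ exactly in position $e'$. So that row of $\tilde M\tilde B$ is the restricted row of $B$ indexed by $e'$, which is nonzero in general. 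Splicing the $X$ checks of a length-three complex therefore does not induce a metacheck by conjugation. Only combinations of old metacheck rows that are constant on every spliced group survive, and these would have to be recomputed.

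\textbf{The convention that works is your second one.} It is also the reading suggested by the paper's display, where the middle map carries no tilde. Splice only $C_p$ and the metacheck rows, and restrict the middle map to the surviving data qubits: $\tilde A=PAS_C^T$, $\tilde B=BP^T$, $\tilde M=S_FM$. Then $\tilde B\tilde A=BP^TPAS_C^T=BAS_C^T=0$, because each removed qubit has a zero column in $B$ or a zero row in $AS_C^T$. Also $\tilde M\tilde B=S_FMBP^T=0$, and no splicing factors need to cancel. The paper settles \eqref{eqs:7chainC} only by remarking that splicing preserves the commutation relations, and that remark is justified precisely in this case. Drop the conjugation convention and its cancellation argument, and commit to this one.
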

\begin{proof}
  In addition to~\cref{eq:CSS3chain}, that is again satisfied like in the previous lemma, one needs to show that
  \begin{equation}%\label{eq:}
    (K_{p-2}^T\boxplus K_{p+3})^T(K_{p-1}\oplus K_{p+2}^T)=0
  \end{equation}
  holds. It immediately follows from~\eqref{eqs:7chainA} since
  \begin{align}%\label{eqs:}
    K_{p-2}K_{p-1} & =0, \\
    K_{p+2}K_{p+3} & =0.
  \end{align}
  \cref{eqs:7chainC} is obtained by splicing which preserves the commutation relations.
\end{proof}
For a few examples see~\cref{table:folding}. It is not immediately obvious how to generalize the presented method to have two metachecks (chain complexes of length five) while having a non-trivial CSS code.

\subsubsection{Code examples obtained by splicing or folding}\label{sec:actI}

In this section I show several examples of non-trivial codes obtained from the methods developed so far. The rationale behind the choice of Coxeter groups is  to represent a diverse behavior of the resulting codes but it barely scratches the surface. Some of their small-scale examples were exemplified in~\cref{sec:CSScodes}. The Coxeter groups are the $A$ class (the symmetric groups), direct products of $C_2$ as a reducible example, a triangle hyperbolic group $\Delta_{2,3,7}$, a more exotic hyperbolic group with four generators, $(\boxtimes,\{s_1,s_2,s_3,s_4\})$~\cite{humphreys1990reflection}, whose Coxeter matrix is populated by 3's (except for the diagonal) so its Coxeter graph is a complete graph on four vertices -- $\boxtimes$, and the biggest exceptional group $E_8$.

\begin{center}
    \renewcommand{\arraystretch}{1.2}
    \extrarowheight=\aboverulesep
    %\addtolength{\extrarowheight}{\belowrulesep}
    \aboverulesep=0pt
    \belowrulesep=0pt
    \begin{table}[h]
           \begin{tabular}{@{}>{\columncolor{white}[0pt][\tabcolsep]}  *{4}c @{}}
           \toprule
            {\cellcolor{lightgray}\ $(W,w_b,w_t,p)_{3}$ } & $[n,k,d]$  &    $\max_{v}\limits{[\wt{h^v_{X/Z}}]}$      & \ \ method\ \ \\
                        \midrule
             {\cellcolor{lightgray}\ $(A_4,\id,\hat{t}_{A_4},5)$}       &   $[20,5,3]$                    &   9   &  crown    \\
             {\cellcolor{lightgray}\ $(A_5,\id,\hat{t}_{A_5},7)$}       &   $[101,25,4]$                  &  28    &   crown    \\
             {\cellcolor{lightgray}\ $(A_6,\id,\hat{t}_{A_6},10)$}      &   $[571,199,5]$                 &  76  &  crown  \\
             {\cellcolor{lightgray}\ $(C_2^{\times8},\id,\prod_{i=1}^{8}s_i$,4)}           &  $[69,5,5]$    &  9   &  crown  \\
             {\cellcolor{lightgray}\ $(C_2^{\times8},\id,\prod_{i=1}^{8}s_i$,4)}           &  $[70,8,5]$    &  8    &  $S^2$  \\
             {\cellcolor{lightgray}\ $(C_2^{\times10},\id,\prod_{i=1}^{10}s_i,5)$}      &  $[252,36,6]$     &  32   &  crown  \\
             {\cellcolor{lightgray}\ $(C_2^{\times10},\id,\prod_{i=1}^{12}s_i,5)$}       &  $[245,12,5]$   &  6   &  diamond  \\             
             {\cellcolor{lightgray}\ $(C_2^{\times12},\id,\prod_{i=1}^{12}s_i,6)$}       &  $[924,185,7]$   &  61   &  crown  \\
             {\cellcolor{lightgray}\ $(C_2^{\times12},\id,\prod_{i=1}^{12}s_i,6)$}       &  $[917,26,6]$   &  7   &  diamond  \\
             {\cellcolor{lightgray}\ $(\Delta_{2,3,7},\id,(s_1s_2s_3)^{10},24)$}      &  $[327,78,5]$       &  49   &  crown  \\
             {\cellcolor{lightgray}\ $(\boxtimes,\id,(s_1s_2s_3s_4)^3, 7 )$}      &  $[279,48,4]$           &  60   &  crown  \\
             {\cellcolor{lightgray}\ $(E_8,\id,(\prod_{i=1}^{8}s_i)s_1, 4 )$}      &  $[90,14,4]$           &  12   &  crown  \\
             {\cellcolor{lightgray}\ $(E_8,\id,(\prod_{i=1}^{8}s_i)s_1, 4 )$}      &  $[91,7,5]$            &  12   &  crown  \\
             \bottomrule
             \hline
            \end{tabular}\\ \vskip .3cm
    \caption{CSS codes obtained by three different methods: $S^1$ (crown) or $S^2$ splicing and diamond ($S^0$) removal of the checks of a trivial code  $(W,w_b,w_t,p)_{3}$ for a Coxeter system $W$ and its various 3-layer subposet intervals of $(w_b,w_t)$. The initial trivial codes are in the first column ($\hat{t}_{A_i}$ denotes the longest element of $A_i$) and the second column contains the derived codes. The third column is the maximal encountered stabilizer weight. The splicing weights seem outrageous but they are highly irregular -- see the main text,~\cref{table:crudeSplicing} and~\cref{fig:histogram}. All code distances were verified by the exact distance claculation tools mentioned in the introduction.}
    \label{table:S1S2splicing}
    \end{table}
\end{center}
The result of $S^1$ (crown) and $S^2$ splicing for  $(W,w_b,w_t,p)_{3}$ CSS codes derived from these groups is in~\cref{table:S1S2splicing} for the poset layers $\{l_{p-1},l_p,l_{p+1}\}$ of various $p$'s. There are several interesting observations. High-rate codes with decent distances were obtained. In fact, in both cases of Coxeter group families ($\{A_i\}$ and $\{C_2^{\times2i}\}$) everything seems to suggest that with an increasing group size the corresponding CSS codes' rate remains constant (or increases) and the distance increases as well. The price to pay is quite heavy stabilizers but a closer analysis uncovers that there is always a small number of very heavy stabilizers and a large number of very light ones. This is a promising starting point for weight reduction and I take first steps in this direction by developing a weight reduction procedure in the next section.

\begin{center}
    \renewcommand{\arraystretch}{1.2}
    \extrarowheight=\aboverulesep
    %\addtolength{\extrarowheight}{\belowrulesep}
    \aboverulesep=0pt
    \belowrulesep=0pt
    \begin{table}[h]
           \begin{tabular}{@{}>{\columncolor{white}[0pt][\tabcolsep]}  *{4}c @{}}
           \toprule
            {\cellcolor{lightgray}\ $(W,w_b,w_t,p)_{3}$ } & $[n,k,\{d_X,d_Z\}]$       &    $\max_{v}\limits{[\wt{h^v_{X/Z}}]}$    & splicing \\
                        \midrule
             {\cellcolor{lightgray}\ $(A_4,\id,\hat{t}_{A_4},5)$}       &   $[22,2,\{3,3\}]$             &  10  &  rnd    \\
             {\cellcolor{lightgray}\ $(A_6,\id,\hat{t}_{A_6},9)$}       &   $[531,22,\{\setlength{\fboxsep}{2pt}\colorbox{magenta!20}{$8\leq d_X\leq67$},5\}]$          &  21  &   rnd    \\
             {\cellcolor{lightgray}\ $(A_6,\id,\hat{t}_{A_6},10)$}      &   $[573,20,\{5,5\}]$           &  20  & rnd  \\
             {\cellcolor{lightgray}\ $(A_7,\id,\hat{t}_{A_7},14)$}           &  $[3736,94,\{5\leq d_X\leq6,5\}]$                      & 28 &  rnd  \\
             {\cellcolor{lightgray}\ $(C_2^{\times8},\id,\prod_{i=1}^{8}s_i$,4)}           &  $[70,16,\{5,5\}]$          & 10 &  rnd  \\
             {\cellcolor{lightgray}\ $(C_2^{\times10},\id,\prod_{i=1}^{10}s_i,5)$}      &  $[252,42,\{6,6\}]$            & 12 &  rnd \\
             {\cellcolor{lightgray}\ $(C_2^{\times12},\id,\prod_{i=1}^{12}s_i,6)$}      &  $[924,134,\{7,7\}]$           & 14 &  rnd  \\
             {\cellcolor{lightgray}\ $(C_2^{\times14},\id,\prod_{i=1}^{14}s_i,7)$}      &  $[3432,428,\{\leq8,\leq8\}]$          & 16 &  rnd  \\
             {\cellcolor{lightgray}\ $(C_2^{\times16},\id,\prod_{i=1}^{16}s_i,6)$}      &  $[8008,821,\{\leq7,\leq11\}]$         & 22 & rnd  \\
             {\cellcolor{lightgray}\ $(C_2^{\times16},\id,\prod_{i=1}^{16}s_i,7)$}      &  $[11440,1003,\{\leq8,\leq10\}]$       & 20 &  rnd  \\
             {\cellcolor{lightgray}\ $(C_2^{\times16},\id,\prod_{i=1}^{16}s_i,8)$}      &  $[12870,1432,\{\leq9,\leq9\}]$        & 18 &  rnd  \\
             {\cellcolor{lightgray}\ $(C_2^{\times18},\id,\prod_{i=1}^{18}s_i,9)$}      &  $[48620,4862,\{\leq10,\leq10\}]$      & 20 &  rnd  \\
             {\cellcolor{lightgray}\ $(\Delta_{2,3,7},\id,(s_1s_2s_3)^{10},24)$}      &  $[328,18,\{4,4\}]$              & 19 & rnd  \\
             {\cellcolor{lightgray}\ $(\Delta_{2,3,7},\id,(s_1s_2s_3)^{10},24)$}      &  $[328,172,\{4,4\}]$             & 36 & rnd twice  \\
             {\cellcolor{lightgray}\ $(\boxtimes,\id,(s_1s_2s_3s_4)^3, 7)$}      &  $[280,39,\{4,4\}]$                   & 24 & rnd   \\
             {\cellcolor{lightgray}\ $(\boxtimes,(s_1s_2s_3s_4)^3,(s_1s_2s_3s_4)^5, 5)$}      &  $[521,68,\{4,4\}]$      & 26 & rnd $Z$ stabilizer \\
             \bottomrule
             \hline
            \end{tabular}\\ \vskip .3cm
    \caption{Random splicing of some of the Coxeter groups in~\cref{table:S1S2splicing} allows the investigation of larger code instances. Despite the results being, in general, worse both in terms of the rate and distance (with a few exceptions), as a proxy to crown splicing it points to an improvement as the size of Coxeter groups increases. The maximal stabilizer weights are smaller but they are much more uniform and in some cases nearly constant, see~\cref{fig:histogram} for a generic situation. I verified the distance of some codes by the discussed exact methods, the rest remains upper-bounded. The upper bound of the significant asymmetry in $A_6$ highlighted in {\color{magenta!80}{magenta}} doesn't go away no matter how hard I sample. The exact methods are responsible for the lower bound.}
    \label{table:crudeSplicing}
    \end{table}
\end{center}

\cref{table:crudeSplicing} shows the results of random splicing. The main difference compared to~\cref{table:S1S2splicing} is the lower maximal stabilizer weight and that might seem as advantageous. However, the weight distribution is more (sometimes completely) uniform and on average significantly higher, see an example for the smallest code $A_4,p=5$ in~\cref{fig:histogram}, which is quite generic. A second noteworthy difference is that unlike crown splicing, the rate within a Coxeter group family $\{A_i\}$ or $\{C_2^{\x2i}\}$ is decreasing. All this points to an exciting but so far inconclusive eventuality that, indeed, the $S^1$ (and perhaps $S^2$) sphere internal structure does play a role in the QEC capabilities of the spliced codes  $(W,w_b,w_t,p)_{3}$ and the fast but random splicing is a mere proxy.

\begin{figure}[t]
  \resizebox{9cm}{!}{\includegraphics{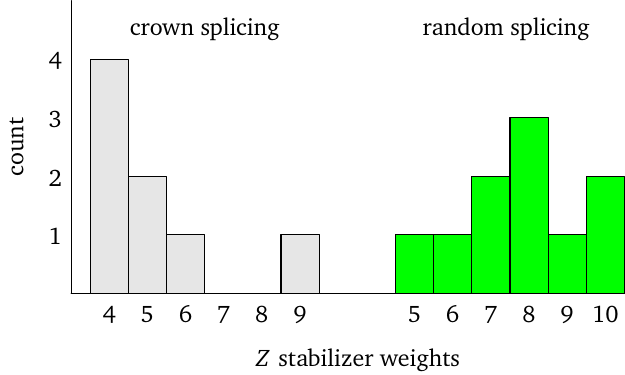}}
  \caption{The $Z$ stabilizer weight distribution for crown and random splicing of  $(A_4,\id,\hat{t}_{A_4},5)$, see the first row of~\cref{table:S1S2splicing} and~\ref{table:crudeSplicing}. Crown splicing produces highly non-uniform stabilizers that are more amenable to weight-reduction as shown in the last example of~\cref{sec:weightLoss}. }
  \label{fig:histogram}
\end{figure}

\begin{center}
    \renewcommand{\arraystretch}{1.2}
    \extrarowheight=\aboverulesep
    %\addtolength{\extrarowheight}{\belowrulesep}
    \aboverulesep=0pt
    \belowrulesep=0pt
    \begin{table}[h]
    \resizebox{\textwidth}{!}{
           \begin{tabular}{@{}>{\columncolor{white}[0pt][\tabcolsep]}  *{5}c @{}}
           \toprule
            {\cellcolor{lightgray}\ $(W,w_b,w_t,p)_{m}$ } & $[n,k,\{d_X, d_Z\}]$ & $\max_{v}\limits{[\wt{h^v_{X/Z}}]}$   & metacheck code\\
                        \midrule
             {\cellcolor{lightgray}\ $(C_2^{\times8},\id,\prod_{i=1}^{8}s_i,4)_5$}        &  $[112, 34,   \{6, 4\}]$    &   12     &  $\boldsymbol{\times}$  \\
             {\cellcolor{lightgray}\ $(C_2^{\times12},\id,\prod_{i=1}^{12}s_i,6)_5$}      &  $[1584, 417, \{6, 8\}]$  &    16     &   $\boldsymbol{\times}$ \\
             {\cellcolor{lightgray}\ $(C_2^{\times12},\id,\prod_{i=1}^{12}s_i,6)_7$}   &  $[1584,252, \{6,12\}]$   &  12  &  $[990,111,\{7\leq d\leq8,9\}]$  \\
             {\cellcolor{lightgray}\ $(C_2^{\times14},\id,\prod_{i=1}^{14}s_i,7)_5$}   &  $[6006,1497,\{\leq9,\leq7\}]$  &    18     &   $\boldsymbol{\times}$ \\
             {\cellcolor{lightgray}\ $(C_2^{\times14},\id,\prod_{i=1}^{14}s_i,7)_7$}   &  $[6006,924,\{\leq14,\leq7\}]$   & 14 & $[4004,428,\{\ll\setlength{\fboxsep}{2pt}\colorbox{orange}{388},\leq10\}]$  \\
             {\cellcolor{lightgray}\ $(C_2^{\times16},\id,\prod_{i=1}^{16}s_i,8)_5$}   &  $[22880,5434,\{\leq8,\leq10\}]$  &    20   &  $\boldsymbol{\times}$  \\
             {\cellcolor{lightgray}\ $(C_2^{\times16},\id,\prod_{i=1}^{16}s_i,8)_7$}   &  $[22880, 3432,\{\leq8,\leq16\}]$   & 16 &  $[16016,1639,\{\ll\setlength{\fboxsep}{2pt}\colorbox{orange}{1991},\leq11\}]$  \\
             {\cellcolor{lightgray}\ $(C_2^{\times18},\id,\prod_{i=1}^{18}s_i,9)_5$}   &  $[87516,12870,\{\leq18,\leq9\}]$  &    18   &  $\boldsymbol{-}$  \\
             {\cellcolor{lightgray}\ $(A_7,\id,\hat{t}_{A_7},14)_5$}                      &  $[7472, 206,  \{\leq6, \leq5\}]$  &   26     &  $\boldsymbol{\times}$  \\
             \bottomrule
             \hline
            \end{tabular}}\\ \vskip .3cm
    \caption{CSS codes ($m=5$) and CSS codes with a metacheck ($m=7$) from  chain complex folding. The highly asymmetric distances highlighted in {\color{orange}{orange}} are  spurious and of low/no confidence due to the codes' sheer size. They are provided by~\cite{QDistRnd} but for an insufficient sample size. Many of the codes in this table were not possible to  verify exactly.}
    \label{table:folding}
    \end{table}
\end{center}

The third line from the bottom  shows double random splicing as a test how it can be pushed, resulting in the highest rate but also quite heavy stabilizers. Sometimes just one-sided splicing already generates interesting codes like I show in the last row.

In~\cref{table:folding} there are CSS codes ($m=5$) and CSS codes with a single metacheck ($m=7$) obtained as a result of the chain complex folding. The distance of some metacheck codes  is highly asymmetric and the numbers in orange are likely very loose upper bounds. This is thanks to the size of the codes and it surely is just an artifact of poor sampling using~\cite{QDistRnd}. Exact distance estimation methods used here do confirm certain asymmetry in the distance (like for $C_2^{\x12},m=7$). This is potentially interesting but, unfortunately, the higher instances are computationally demanding and it remains to be shown if the asymmetry persists as suggested by the probabilistic distance estimates.

One could speculate that there are again strong hints of a CSS code family for the Coxeter group series $\{C_2^{\times 2i}\}$, whose rate is constant and the distance increases with $i$ while the stabilizer weight remains uniform and relatively low. Note that weight difference between the $X$ and $Z$ stabilizers is often high in folded codes and this `gradient' can again be used to efficiently decrease the greater of the two. To see how efficient this really can be using the developed weight-reduction method will be investigated elsewhere.

\section{Stabilizer weight reduction}\label{sec:weightLoss}

Splicing, in general, increases the weight of the CSS stabilizers. Many weight-reduction procedures have been proposed~\cite{hastings2021quantum,hsieh2025simplified,tan2025effective,sabo2024weight} and it remains an active research area~\cite{yuan2026quantum}. I will add  to the list the first steps of another weight-reduction method motivated by the properties of the Bruhat-based CSS codes (both trivial and spliced). It can, nonetheless, be applied to any CSS code.

The  commutativity condition of any qubit CSS code is equivalent to the existence of a chain complex of length two with $\bbZ_2$-valued modules (or Abelian groups), the associated boundary maps and their duals. If a chain complex has its origin in a tesselated or cellulated two-dimensional closed manifold (non-orientability comes automatically with the ring $\bbZ_2$) then the boundary maps have a clear interpretation as incidence relations mapping 2-cells (faces) to 1-cells (edges) to 0-cells (vertices). Its dual picture exchanges the role of the faces and vertices. Nothing much changes if the manifold is higher-dimensional and one may still call the $(p+1)$-cells faces, $p$-cells edges and $(p-1)$-cells vertices even though they don't resemble them and their incidence relations don't look familiar either.

For general CSS codes, however, no such geometric interpretation is necessary or even possible but the chain complex interpretation is always available. So I can call and mainly treat the modules' basis elements as (generalized) vertices, edges and dual vertices (`faces') because a check (both $X$ and $Z$) \emph{locally} looks like a vertex whose number of incident edges (physical qubits) equals the vertex weight. By the same token, even subsets of the incident edges are, again locally, incident to one or more faces (dual vertices). A chain complex as a general CSS code  is not geometrically embeddable as a nice 2-skeleton (see~\cref{app:CWcomplexes}) due to the fact that, for example,  a generalized edge can be shared by more than two vertices, or, dually, a face does not have to be demarcated by edges. Indeed, the hypergraph picture comes to mind (although it is probably different from the treatment color codes received in~\cite{delfosse2014decoding}) but here is no need to formally introduce these structures.

Where does it leave  $(W,w_b,w_t,p)_{3}$ CSS codes  and their spliced versions as the main protagonist of this work?  On the one hand, as shown in~\cref{def:BruhatChains}, the bases of the modules $C_p$ are $p$-cells, where~$p$ can be arbitrarily high. So there is an obvious geometric embedding -- it is just not something one can visualize. On the other hand, there exists a $S^2$ decomposition of every 3-level Bruhat poset in terms of the 0-, 1- and 2-cells, see~\cref{app:sphereClassification} or an example in~\cref{fig:A3S2decompositionCW}. The 1-cells are ordinary edges for each $S^2$ (they are incident to two vertices since the 1-skeleta of the $S^2$ CW spheres are ordinary graphs with multiple edges) but they routinely appear in more than one $S^2$ and often are incident to a different pair of vertices. The spliced $(W,w_b,w_t,p)_{3}$ CSS codes surely do not inherit  the elegant sphere decomposition anyway and  I will simply treat them as general CSS codes.

\begin{defi}\label{def:bridgedStarGraph}
  Let $\bigstar_m$ denote the $m$-star graph (a vertex~$v$ of degree $\d(v)=m$). By the \emph{bridged star graph} $\bigstar_{m_1,m_2}$ associated to~$v$ I shall call a graph defined as two stars, $\bigstar_{m_1}$ and $\bigstar_{m_2}$, whose central vertices share an edge (the bridge). So $m=m_1+m_2-2$ and for a given $m$ there exists several choices of $m_1,m_2$ which meaningfully start from $m_1,m_2\geq 3$.

  More generally, the \emph{$(m_1,m_2,\dots m_{b+1})$-bridged star graph} $\bigstar_{m_1,m_2,\dots,m_{b+1}}$ associated to a vertex $v$ of $\d(v)=m$ is a collection of $b+1$ $m_i$-star graphs, where the central vertices of pairs of  $m_i$- and $m_{i+1}$-star graphs for $1\leq i\leq b$ share an edge (the $i$-th bridge). It holds that
  \begin{equation}\label{eq:bridgeGraphConfigurations}
    m=\sum_{i=1}^{b+1}m_i-2b
  \end{equation}
  and there is a large variety $m_i\geq3$ satisfying the equation. For an illustration see~\cref{fig:bridge}.
\end{defi}
\begin{figure}[t]
  \resizebox{12.3cm}{!}{\includegraphics{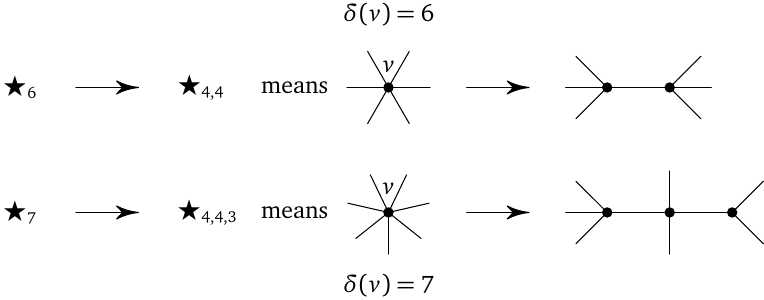}}
  \caption{Vertex reduction by vertex substitution using the bridged star graphs introduced in~\cref{def:bridgedStarGraph}. (Top) Vertex split of a degree six vertex using one bridge. (Bottom) Vertex split of a degree seven vertex using two bridges.}
  \label{fig:bridge}
\end{figure}

Every $X$ or $Z$ check of weight $m$ can be associated with $\bigstar_m$ for any CSS code by interpreting the data qubits of the PCMs as edges and the checks as vertices.

\begin{prop}\label{prop:weightLoss}
  Let $[n,k,d]$ be a CSS code and let $h^v_{X/Z}$ denote an $X/Z$ PCM row. Then, every stabilizer of $\wt{h^v}=m\geq5$ can be weight-reduced to $h^{v_1},h^{v_2}$ such that $\wt{h^{v_i}}\geq 3$ and $\wt{h^{v_j}}\geq 4,i\neq j$, giving rise to a new CSS code $[n+1,k,d']$. Moreover, if the number of edges incident to $v$ shared with every dual vertex is equal to at most two then the vertex~$v$ can be weight-reduced to  $\wt{h^{v_i}}\geq 3$ for $i=1,2$.
\end{prop}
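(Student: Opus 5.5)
The construction splits the check $h^v$ (say an $X$ check, viewed as the star $\bigstar_m$) into a bridged star $\bigstar_{m_1,m_2}$ of \cref{def:bridgedStarGraph}. The bridge is a new data qubit $e$, so $n\to n+1$. The support $S$ of $h^v$, $|S|=m$, is partitioned as $S=S_1\sqcup S_2$, and $h^v$ is replaced by two $X$ checks
\[h^{v_1}=\mathbf{1}_{S_1}+e,\qquad h^{v_2}=\mathbf{1}_{S_2}+e,\]
with weights $|S_1|+1$ and $|S_2|+1$. Since $h^{v_1}+h^{v_2}=h^v$ on the old qubits, the old $X$ stabilizer group is recovered once $e$ is ignored.

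The $Z$ side must be repaired so that every $Z$ check commutes with both halves. A $Z$ check (dual vertex) $f$ with $|f\cap S|$ even may have $|f\cap S_1|$ odd. In that case I add $e$ to the support of $f$, giving $f\cup\{e\}$. This anticommutes with neither $h^{v_1}$ nor $h^{v_2}$: each overlap becomes $|f\cap S_i|+1$, which is even. Checks with $|f\cap S_1|$ even are left unchanged. This gives $\tilde H_X\tilde H_Z^T=0$.

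Next I count logical qubits. The number of $X$ generators rises by one and $\rank\tilde H_X=\rank H_X+1$, because $e$ appears only in the two new rows and so $h^{v_1}$ is independent of the rest. For $\tilde H_Z$, restricting to the old coordinates gives back $H_Z$. To keep $\rank\tilde H_Z=\rank H_Z$, I argue that any combination of modified $Z$ rows vanishing on old qubits also has zero $e$-component. Such a combination sums to a $Z$ operator supported only on $e$, so it would anticommute with $h^{v_1}$; since all rows commute with $h^{v_1}$, that is impossible. Hence $k'=(n+1)-\rank H_X-1-\rank H_Z=k$. For the distance I only claim a new $d'$. Still, a logical of the new code restricts to a logical of the old one, which can be checked by eliminating $e$ using $h^{v_1}$ and $Z$-side equivalences.

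The weights come from the choice of partition. With $|S_1|=2$ and $|S_2|=m-2\geq3$, the new checks have weights $3$ and $m-1\geq4$, which is the first claim. The modified $Z$ checks gain weight at most one. Any $|S_1|\geq2$ works for the generic statement, and $m\geq5$ is exactly what allows one side of weight $\geq3$ and the other $\geq4$.

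For the ``moreover'' part I aim for both weights $\geq3$ with a balanced split, while also controlling the $Z$-side growth. If every dual vertex meets $S$ in at most two edges, then each such $f$ meets $S$ in $0$ or $2$ qubits, because the overlap is even. I then want to choose $S_1$ so that no pair $f\cap S$ is cut. This means taking $S_1$ as a union of connected components of the graph on $S$ whose edges are these pairs. Those components are matchings, so the components are single edges or isolated vertices, and any union of whole pairs can be chosen with $2\leq|S_1|\leq m-2$. Then no $Z$ check needs $e$, and both new weights are $\geq3$.

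The step I expect to be the main obstacle is this second part. I must make sure that the union-of-pairs split can always achieve $|S_1|,|S_2|\geq2$ when $m\geq5$; this holds because the pieces have size $\le2$. I also need to state precisely what ``weight-reduced to $\geq3$'' requires of the $Z$ checks.
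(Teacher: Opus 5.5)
You have a genuine gap. Your parity rule (append $e$ to a $Z$ check $g$ exactly when $|g\cap S_1|$ is odd) is correct, and so is your rank count giving $k'=k$; both are cleaner than the corresponding steps in the paper. The problem is that you never require the bridge qubit $e$ to lie in some $Z$ check, and the paper's proof is built around that requirement: the bridge ``must also be incident to at least one dual vertex otherwise I get $d'=1$.''

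To see why, suppose no $Z$ check contains $e$. Then $X_e$ commutes with every $Z$ check, and modulo the new check $X_{S_1}X_e$ it is equivalent to $X_{S_1}$. So $X_e$ is a nontrivial logical of the new code exactly when $X_{S_1}$ is a nontrivial logical of the old code. Your ``moreover'' construction is designed to produce exactly this situation. Take Shor's code with $x_1=X_1\cdots X_6$, which satisfies the hypothesis. Your split into whole components, $\{1,2,3\}\,|\,\{4,5,6\}$, cuts no pair, so $X_{10}\sim X_1X_2X_3$ is a weight-one logical and $d'=1$. The paper instead splits $\{1,2,4\}\,|\,\{3,5,6\}$, which cuts $\{2,3\}$ and $\{4,5\}$, puts $e_{10}$ into $Z_2Z_3$ and $Z_4Z_5$, and keeps $[10,1,3]$.

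Separately, your claim that the components of the pair graph are single edges is false. Distinct $Z$ checks can share a qubit of $S$; in Shor's code the components are $\{1,2,3\}$ and $\{4,5,6\}$. For a star check of a surface code, the $m$ surrounding faces link its qubits into a single $m$-cycle, so no split into whole components exists at all (e.g.\ the toric code, $m=4$).

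The right move is the opposite of what you do. Pick a dual vertex $f$ meeting $S$ and choose $S_1$ with $|f\cap S_1|$ odd, so that $e$ enters $f$; your parity rule then repairs every other check. The same omission affects your first part, since an arbitrary two-element $S_1$ may cut no overlap.

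This odd-cut requirement is what produces the weight bounds and gives the hypothesis its role:
\begin{itemize}
\item If $f\cap S\neq S$, take a two-element $S_1$ with one qubit inside $f\cap S$ and one outside. This gives weights $3$ and $m-1$.
\item If $f\cap S=S$, the cut forces $|S_1|$ odd, hence $|S_1|=3$ and weights $4$ and $m-2$. This needs $m\geq5$ and is the worst case behind the bounds ``$\geq3$ and $\geq4$''.
\item For $m=4$ the second case fails, as in the $[4,2,2]$ code. Then $\bigstar_4\mapsto\bigstar_{3,3}$ is possible only when some dual vertex meets $S$ in exactly two qubits. The ``moreover'' hypothesis guarantees this whenever some dual vertex meets $S$ at all.
\end{itemize}

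Under your reading the hypothesis does no work: your two-element split already gives both weights $\geq3$ for every $m\geq4$. That is a sign the essential constraint has been dropped.
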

Note that the number of incident edges of the vertex~$v$ shared by any dual vertex $v^*_i$ must be even or an empty set. Otherwise it is not a CSS code to start with. I will not prove the distance properties but only conjecture that $d'\simeq d$ as long as the reduced weight is not pushed too low. This is suggested by extensive numerical testing but also by some recent results investigating precisely the trade-off of the stabilizer weight and the code distance~\cite{wang2026check}. The only requirement I impose now is $d'>2$, see one of the remarks following the proof.
\begin{proof}
  The task of weight reduction of a chosen heavy stabilizer $h^v$ consists of updating the PCMs $H_X,H_Z$  of the initial $[n,k,d]$ CSS code to $H'_X,H'_Z$ and showing that they define another CSS code.   As argued above, every check $h^v$ such that $\wt{h^v}=m$ is always a star graph $\bigstar_m$ of some complicated hypergraph. Let $E=\{e_i\},1\leq i\leq m$ be the set of all  edges incident to~$v$ and let $v^*_j$ be all dual vertices sharing $E_j\subseteq E$ such that $|E_j|$ is even. The method to weight-reduce $v$ starts by substituting the star graph $\bigstar_m$ representing~$v$ by a  $\bigstar_{m_1,m_2}$ graph from~\cref{def:bridgedStarGraph} for suitable $m_i$'s such that $m=m_1+m_2-2$. The key role is played by the new data qubit $\hat e$ -- the bridge connecting the smaller stars. $\hat e$ is therefore incident to $v_1$ and $v_2$. But it must also be incident to at least one dual vertex otherwise I get $d'=1$. Let $v^*_1$ be such dual vertex. By construction it is incident to
  \begin{equation}\label{eq:updatedE1}
    \hat E_1=E_1\cup\hat e.
  \end{equation}
  Then, in order to get valid PCMs $H'_X,H'_Z$ I must make sure that both $v_1$ and $v_2$ overlap on an even number of edges from the set~$\hat E_1$. Since they are already incident to $\hat{e}\in\hat{E}_1$ it is sufficient to split $E_1$ into two subsets $E^{o_1}_1,E^{o_2}_1$ both containing an odd number of edges. This is always possible since $|E_j|$ is even for all $j$.

  Recall that there may be more dual vertices $v^*_j$ and their $E_j$ sets for $j>1$. If one of the following options occurs
  \begin{align}\label{eqs:noaction}
    E_j\cap E_1&=\emptyset\\
    E_j & \subseteq E^{o_1}_1\\
    E_j & \subseteq E^{o_2}_1
  \end{align}
  then no action is needed. If, on the contrary,
  \begin{align}\label{eq:yesaction}
    E_j\cap E^{o_1}_1\neq\emptyset &\mbox{ and } E_j\cap E^{o_2}_1 \neq\emptyset
  \end{align}
  then the bridge $\hat{e}$ must become incident to $v^*_j$ as well. Finally, if only one condition of~\cref{eq:yesaction} is satisfied no action is again needed.

  Going back to $v^*_1$, the procedure fails when $|E_1|\geq4$ and $m=4$. For $m=4$ the only reasonable bridged star graph is $\bigstar_{3,3}$. The set $E_1$ is promoted to $\hat{E}$ like in~\eqref{eq:updatedE1}. However, I cannot split $E_1$ into two odd $E^{o_1}_1,E^{o_2}_1$, where $|E^{o_1}_1|=1,|E^{o_2}_1|=3$ or vice versa. This is because $\d(v_1)=\d(v_2)=3$ but $v_1,v_2$ already share $\hat{e}$. So there is literally no `space' for the bigger edge set $E^{o_2}_1$. The only possibility is $|E^{o_1}_1|=|E^{o_2}_1|=2$ but that leads to a violation of the commutativity condition as I showed earlier.

  If, however,  $|E_1|=2$ one can  proceed as before and as long as $|E_j|=2,\forall j$ the update rules follow the general prescription described above. The heavy vertex star graph $\bigstar_4$ is then substituted by $\bigstar_{3,3}$ and therefore $\wt{h^{v_i}}=3$ for $i=1,2$. For the general case $m\geq 4$ I thus obtain $\wt{h^{v_i}}\geq 3$ as claimed.

  Finally, notice that every new $\bigstar_{m_1,m_2}$ adds one more data qubit (the bridge) and so $n$ changes to $n+1$. The net rank of the new stabilizers $H'_X,H'_Z$ increases by one. Therefore the new code parameters are $[n+1,k,d']$. Similarly, the substitution of a sufficiently heavy vertex $v$ by $\bigstar_{m_1,m_2,\dots,m_{b+1}}$ changes the code parameters to $[n+b,k,d']$.
\end{proof}
I offer a few remarks before showing some examples.
\begin{rem}
  \cref{prop:weightLoss} shows how to (multiplicatively) decrease the weight of a single stabilizer. But  every new bridge qubit is also incident to at least one dual vertex and so the weight of the complementary stabilizers increases  (just additively -- by one). If the goal is to lower the weight of all stabilizers below a desired level it is necessary to iteratively run the weight-reduction procedure and I leave  for future research how to do it optimally.
\end{rem}
\begin{rem}
  The condition $|E_j|=2,\forall j$ holds for all for all $(W,w_b,w_t,p)_{3}$ CSS codes due to~\cref{prop:sphereClassification}~(a). This is not very interesting since there is no obvious need to weight-reduce a code with zero logical qubits. However, there are many non-trivial CSS codes (of a different origin) satisfying the condition, see one of the below examples. Even in this case it is not desirable to pursue the lowest possible reduced weight to be equal to three due to the requirement $d'>2$. According to~\cite{wang2026check} if \emph{all} stabilizer weights equal three there is no CSS code encoding logical qubits and having $d>2$ at the same time (it is, nevertheless, possible, if only some stabilizers are weight three).
\end{rem}
The first two examples are small non-Bruhat codes that can be analyzed by hand. The last example is probably the smallest interesting spliced code ($d\geq3$).
\begin{exa}\label{exa:smallCodesWeightRed}
    For an example of a CSS code for which $|E_j|=2,\forall j$ holds, consider the $[9,1,3]$ Shor code with the stabilizer generators
    \begin{subequations}\label{eqs:ShorGens}
    \begin{align}
        \{\textcolor{red}{x_i}\} &= \{X_1 X_2 X_3 X_4 X_5 X_6, X_4 X_5 X_6 X_7 X_8 X_9\},\\
        \{\textcolor{blue}{z_j}\} &= \{Z_1 Z_2, Z_2 Z_3, Z_4 Z_5, Z_5 Z_6, Z_7 Z_8, Z_8 Z_9\}.
    \end{align}
    \end{subequations}
    Let's weight-reduce both $X$ checks by the transformation $\bigstar_6\mapsto\bigstar_{4,4}$. The reduction of the first check $x_1$ is depicted in~\cref{fig:shorWR}, where the bridge qubit $\hat{e}=e_{10}$ is incident to two vertices, $v_1$ and $v_3$. I want it to be incident to two dual vertices. To this end, I collect all $E_j$ sets and promote two of them to their hat versions. The table below shows the choice of $\hat{E}_2$ and $\hat{E}_3$.
    \begin{center}
    \renewcommand{\arraystretch}{1.2}
    \extrarowheight=\aboverulesep
    %\addtolength{\extrarowheight}{\belowrulesep}
    \aboverulesep=0pt
    \belowrulesep=0pt
    \begin{table}[h]
           \begin{tabular}{@{}>{\columncolor{white}[0pt][\tabcolsep]}  *{3}c @{}}
           \toprule
             $E_i$ & \textcolor{blue}{$v_i^*$} & $\hat{E}_i$\\
                        \midrule
               $\{e_1,e_2\}$   & \textcolor{blue}{$v^*_1$}   &   \\
               $\{e_2,e_3\}$   & \textcolor{blue}{$v^*_2$}   & $\{e_2,e_{10},e_3\}$   \\
               $\{e_4,e_5\}$   & \textcolor{blue}{$v^*_3$}  & $\{e_4,e_{10},e_5\}$  \\
               $\{e_5,e_6\}$   & \textcolor{blue}{$v^*_4$}  & \\
             \bottomrule
            \end{tabular}\\ \vspace{.3cm}
    \caption{(First column) Edges $E_i$ incident to $v_1$  paired to show to which dual vertex~$v^*_i$ (color-coded $Z$ stabilizer $z_i$) they connect to (second column). $v_2^*,v_3^*$ become incident to the bridge qubit $\hat{e}=e_{10}$ (third column), see~\cref{fig:shorWR}.}
    \vspace{-.5cm}
    %\label{table:}
    \end{table}
    \end{center}
    \begin{figure}[t]
  \resizebox{8cm}{!}{\includegraphics{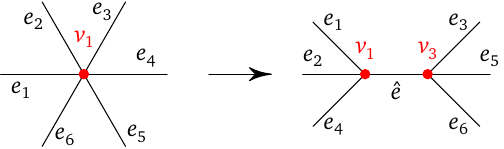}}
  \caption{Weight reduction of the stabilizer $x_1=X_1 X_2 X_3 X_4 X_5 X_6$ of Shor's code $[9,1,3]$ by $\bigstar_{6}\mapsto\bigstar_{4,4}$ of the corresponding vertex $v_1$.}
  \label{fig:shorWR}
\end{figure}
    This is a point where I could have made other choices. It wouldn't matter for the Shor code but generally it could lead to a CSS code with different parameters. Following~\cref{prop:weightLoss} I set
    \begin{align}%\label{eqs:}
      E^{o_1}_2=e_2, &\ E^{o_2}_2=e_3, \\
      E^{o_1}_3=e_4, &\ E^{o_2}_3=e_5
    \end{align}
    and make $v_1$ incident to $e_2$ and $e_4$ and $v_3$ incident to $e_3$ and $e_5$. Consequently, pairs of edges incident to $v_1$ or $v_3$ are also incident to $v_2^*$ or $v^*_3$. No further action is needed because~\cref{eq:yesaction} applies neither to $E_1$ nor to $E_4$. Hence, I get
    \begin{subequations}\label{eqs:ShorWeightReducedstep1}
    \begin{align}
    H'_X&=\left[
    \begin{array}{cccccccccc}
     1 & 1 & . & 1 & . & . & . & . & . & 1 \\
     . & . & . & 1 & 1 & 1 & 1 & 1 & 1 & . \\
     . & . & 1 & . & 1 & 1 & . & . & . & 1 \\
    \end{array}
    \right],\\
    H'_Z&=\left[
    \begin{array}{cccccccccc}
     1 & 1 & . & . & . & . & . & . & . & . \\
     . & 1 & 1 & . & . & . & . & . & . & 1 \\
     . & . & . & 1 & 1 & . & . & . & . & 1 \\
     . & . & . & . & 1 & 1 & . & . & . & . \\
     . & . & . & . & . & . & 1 & 1 & . & . \\
     . & . & . & . & . & . & . & 1 & 1 & .
    \end{array}
    \right].
    \end{align}
    \end{subequations}
    Repeating the same procedure for $x_2$ I finally obtain the following CSS code $[11,1,3]$:
    \begin{subequations}\label{eqs:ShorWeightReducedstep2}
    \begin{align}
    H''_X&=\left[
    \begin{array}{ccccccccccc}
     1 & 1 & . & 1 & . & . & . & . & . & 1 & .\\
     . & . & . & . & . & 1 & . & 1 & 1 & . & 1\\
     . & . & 1 & . & 1 & 1 & . & . & . & 1 & .\\
     . & . & . & 1 & 1 & . & 1 & . & . & . & 1\\
    \end{array}
    \right],\\
    H''_Z&=\left[
    \begin{array}{ccccccccccc}
     1 & 1 & . & . & . & . & . & . & . & . & .\\
     . & 1 & 1 & . & . & . & . & . & . & 1 & .\\
     . & . & . & 1 & 1 & . & . & . & . & 1 & .\\
     . & . & . & . & 1 & 1 & . & . & . & . & 1\\
     . & . & . & . & . & . & 1 & 1 & . & . & 1\\
     . & . & . & . & . & . & . & 1 & 1 & . & .
    \end{array}
    \right],
    \end{align}
    \end{subequations}
    where all $X$ checks are weight four as desired. Note that four $Z$ checks gained weight by one as anticipated but this is the minimal possible weight change for the transformation $\bigstar_6\mapsto\bigstar_{4,4}$ applied to two weight-six vertices.
\end{exa}
\begin{exa}
    By attempting  $\bigstar_4\mapsto\bigstar_{3,3}$ to transform one of the stabilizers $x=X_1X_2X_3X_4,z=Z_1Z_2Z_3Z_4$ of  the $[4,2,2]$ code, one finds that the only admissible split of $E_1=\{e_1,e_2,e_3,e_4\}$ is into two pieces of an even length and that fails to be a CSS code. On the other hand, its closest relative, the $[6,4,2]$ CSS code with $x=X_1X_2X_3X_4X_5X_6,z=Z_1Z_2Z_3Z_4Z_5Z_6$, can be weight-reduced by $\bigstar_6\mapsto\bigstar_{4,4}$ to a $[7,4,2]$ CSS code whose PCMs reads
    \begin{subequations}\label{eqs:642WeightReducedstep1}
    \begin{align}
    H'_X&=\left[
    \begin{array}{ccccccc}
     1 & 1 & 1 & 1 & 1 & 1 & 1
    \end{array}
    \right],\\
    H'_Z&=\left[
    \begin{array}{ccccccc}
     1 & 1 & 1 & . & . & . & 1\\
     . & . & . & 1 & 1 & 1 & 1
    \end{array}
    \right].
    \end{align}
    \end{subequations}
    Again, as anticipated, the $X$ check became heavier and so as an example of the iterative procedure mentioned in a previous remark I now apply the transformation $\bigstar_7\mapsto\bigstar_{4,5}$ resulting in a minimally modified CSS code
    \begin{subequations}\label{eqs:642WeightReducedstep2}
    \begin{align}
    H''_X&=\left[
    \begin{array}{cccccccc}
     1 & 1 & 1 & . & . & . & . & 1\\
     . & . & . & 1 & 1 & 1 & 1 & 1
    \end{array}
    \right],\\
    H''_Z&=\left[
    \begin{array}{cccccccc}
     1 & 1 & 1 & . & . & . & 1 & 1\\
     . & . & . & 1 & 1 & 1 & 1 & .
    \end{array}
    \right]
    \end{align}
    \end{subequations}
    with the parameters $[8,4,2]$ and $\wt{h^v}\leq 5$, where the increase by two physical data qubits is the best possible result. Even more weight-reduced modification of the $[6,4,2]$ code could be attempted by the repeated transformation $\bigstar_6\mapsto\bigstar_{3,3,3,3}$ of $x$ and $z$.

\end{exa}

\begin{exa}
  My last example is code $[20,5,\{3,3\}]$ generated by crown splicing of $(A_4,\id,\hat{t}_{A_4},5)_{3}$, where $\hat{t}_{A_4}=s_1s_2s_3s_4s_1s_2s_3s_1s_2s_1$ is the longest group element, already reported in~\cref{table:S1S2splicing} and compared in~\cref{fig:histogram} with random splicing. To see how the introduced weight reduction performs in realistic circumstances, I want to reduce the weight of $h_Z^3$, where $\wt h_Z^3=9$ as the single heaviest check (see the heavy blue stripe in~\cref{eqs:HZ2053}).
    \begin{subequations}%\label{eqs:}
    \begin{align}\label{eqs:HX2053}
      H_X &= \left[
        \begin{array}{cccccccccccccccccccc}
         1 & 1 & 1 & . & . & . & . & . & . & . & . & . & . & . & . & . & . & . & . & . \\
         . & . & 1 & 1 & 1 & 1 & . & . & . & . & . & . & . & . & . & . & . & . & . & . \\
         . & . & . & 1 & 1 & . & . & . & . & . & 1 & . & 1 & 1 & . & . & . & 1 & 1 & . \\
         1 & . & . & . & . & 1 & . & 1 & 1 & 1 & 1 & . & . & . & . & . & . & . & 1 & 1 \\
         . & . & . & 1 & . & 1 & . & . & 1 & . & . & 1 & 1 & 1 & 1 & 1 & . & . & . & . \\
         . & . & . & . & . & 1 & . & 1 & . & . & . & 1 & . & . & . & . & 1 & . & 1 & . \\
         . & 1 & . & . & . & . & 1 & . & . & . & . & . & 1 & . & 1 & . & . & . & . & 1 \\
        \end{array}
      \right],\\\label{eqs:HZ2053}
      H_Z & =  \left[
        \begin{array}{cccccccccccccccccccc}
         . & . & . & 1 & 1 & . & . & . & . & . & . & . & 1 & 1 & 1 & . & . & . & . & . \\
         1 & 1 & . & . & . & . & 1 & . & . & 1 & . & . & . & . & . & . & . & . & . & . \\
          \rowcolor{blue!50}
         . & 1 & 1 & 1 & . & . & . & . & . & . & 1 & 1 & 1 & 1 & . & . & . & 1 & 1 & . \\
         . & . & . & . & 1 & 1 & 1 & 1 & . & . & . & . & . & . & 1 & . & . & 1 & . & . \\
         . & . & . & . & . & . & . & . & . & 1 & . & 1 & . & 1 & . & . & . & . & 1 & . \\
         . & . & . & . & . & . & . & 1 & 1 & . & . & . & . & . & . & 1 & 1 & . & . & . \\
         . & . & . & . & . & . & . & . & . & . & 1 & 1 & 1 & . & . & . & 1 & . & . & 1 \\
         . & . & . & . & . & . & . & . & . & . & . & . & 1 & 1 & 1 & 1 & . & . & . & . \\
        \end{array}
      \right].
    \end{align}
    \end{subequations}
    I use $\bigstar_9\mapsto\bigstar_{5,6}$ and following~\cref{prop:weightLoss} I arrive at a new code $[21,5,\{3,3\}]$ in~Eqs.~\eqref{eqs:HXZprime2053}, none of its stabilizers is weight~9 or more. The split stabilizer is highlighted in light blue in~\cref{eqs:HZprime2053}, leading to $\wt h'^{3}_Z=5$ and $\wt h'^9_Z=6$, together with the new bridge qubit in grey:
    \begin{subequations}\label{eqs:HXZprime2053}
    \begin{align}\label{eqs:HXprime2053}
      H'_X &= \left[
        \begin{array}{cccccccccccccccccccc>{\columncolor{gray!20}}c}
     1 & 1 & 1 & . & . & . & . & . & . & . & . & . & . & . & . & . & . & . & . & . & 1 \\
     . & . & 1 & 1 & 1 & 1 & . & . & . & . & . & . & . & . & . & . & . & . & . & . & 1 \\
     . & . & . & 1 & 1 & . & . & . & . & . & 1 & . & 1 & 1 & . & . & . & 1 & 1 & . & 1 \\
     1 & . & . & . & . & 1 & . & 1 & 1 & 1 & 1 & . & . & . & . & . & . & . & 1 & 1 & . \\
     . & . & . & 1 & . & 1 & . & . & 1 & . & . & 1 & 1 & 1 & 1 & 1 & . & . & . & . & . \\
     . & . & . & . & . & 1 & . & 1 & . & . & . & 1 & . & . & . & . & 1 & . & 1 & . & 1 \\
     . & 1 & . & . & . & . & 1 & . & . & . & . & . & 1 & . & 1 & . & . & . & . & 1 & . \\
        \end{array}
      \right],\\\label{eqs:HZprime2053}
      H'_Z & =  \left[
        \begin{array}{cccccccccccccccccccc>{\columncolor{gray!20}}c}
      . & . & . & 1 & 1 & . & . & . & . & . & . & . & 1 & 1 & 1 & . & . & . & . & . & . \\
     1 & 1 & . & . & . & . & 1 & . & . & 1 & . & . & . & . & . & . & . & . & . & . & . \\
     \rowcolor{blue!20}
     . & . & 1 & . & . & . & . & . & . & . & 1 & . & . & . & . & . & . & 1 & 1 & . & 1 \\
     . & . & . & . & 1 & 1 & 1 & 1 & . & . & . & . & . & . & 1 & . & . & 1 & . & . & . \\
     . & . & . & . & . & . & . & . & . & 1 & . & 1 & . & 1 & . & . & . & . & 1 & . & . \\
     . & . & . & . & . & . & . & 1 & 1 & . & . & . & . & . & . & 1 & 1 & . & . & . & . \\
     . & . & . & . & . & . & . & . & . & . & 1 & 1 & 1 & . & . & . & 1 & . & . & 1 & . \\
     . & . & . & . & . & . & . & . & . & . & . & . & 1 & 1 & 1 & 1 & . & . & . & . & . \\
     \rowcolor{blue!20}
     . & 1 & . & 1 & . & . & . & . & . & . & . & 1 & 1 & 1 & . & . & . & . & . & . & 1 \\
        \end{array}
      \right].
    \end{align}
    \end{subequations}
     The bridge qubit must necessarily be  stabilized by some $X$ checks (here four) but this by itself did not increase the weight beyond the target of nine (otherwise there would be no gain). This see--saw effect is not welcomed but not really forbidden. When decreasing the weight of more than one stabilizers, it gets sometimes worse before it gets better. But here I managed to avoid it.
\end{exa}

\section{Conclusions and open problems}\label{sec:concl}

In this work I present several methods to turn Bruhat order of Coxeter groups into CSS codes. This is possible thanks to the dual role Bruhat order plays. What starts as a poset of Coxeter group elements ordered according to whether they are related a general reflection (a purely geometrical property) is known to be, in fact, a face poset ordering $p$-cells ($p$-faces) of a high-dimensional manifold by inclusion (a topological property)~\cite{bjorner1984posets}. Hence, Bruhat order can be seen as a large cellular chain complex (arbitrarily large for infinite Coxeter groups), namely a regular CW complex. It is this interpretation of Bruhat order that I turn into non-trivial CSS codes. However, this procedure is not entirely straightforward since the cellulated manifold is always a sphere of an arbitrary high dimension, which is always topologically trivial. As such, the corresponding CSS code encodes zero logical qubits. I use a unique property of Bruhat order, called the $S^k$ sphere decomposition for $k=0,1,2$, to probabilistically turn trivial codes into CSS codes with typically very good code rates and decent distance parameters. To this end, I introduce a  procedure I call splicing, which is a simple stabilizer transformation using the information provided by the $S^1$ or $S^2$ sphere decomposition. Despite its simplicity it provides non-trivial CSS code for any Coxeter group. One drawback is the existence of a few high-weight stabilizers. This can be dealt with by a weight-reduction method I introduce as well but postpone its iterative and asymptotic performance investigation for future research. One can envision a more sophisticated use of the sphere decomposition information and I~leave it as an open question as well.

My next strategy to create interesting CSS codes relies on a deterministic transformation of longer CW chain  complexes into shorter ones of length two and three through the procedure I call chain complex folding. Length three chain complexes are interpreted as CSS codes with a metacheck. Chain complex folding can be considered as a special case of splicing but this time the stabilizer weight of the created codes is better controlled.  I conjecture the existence of multitudes of CSS codes as well as their large finite and infinite families thanks to their Coxeter group origin.

Many of the large code instances have their distance merely upper-bounded. This is related to the main open question: does a sphere decomposition points to non-trivial codes and if yes is it possible to generate them deterministically including an analytical lower bound for the distance? Given the sphere interpretation as simple classical  and quantum repetition codes it is not unthinkable that this type decomposition may shed light on otherwise hard analytical CSS code properties.

Another open question is what makes a Coxeter group a good source of codes. I witnessed that some groups provide better codes than the others either via splicing or folding. In fact, the easiest group to work with is the reducible group family $\{C_2^{\x2i}\}$, whose Bruhat order coincides with the weak order (that is, a Cayley graph/poset). Cayley graphs figure in almost all novel QEC codes constructions in some capacity but the one I present here does not become one of them once the Bruhat and Cayley graph coincide. So even in this case the spliced or folded CSS codes differ if the other methods were applied to Coxeter groups.

Finally, I haven't taken the opportunity to investigate the performance of the dicscovered CSS codes equipped with a metacheck. This deserves to be explored together with finding ways how to generate 5-term chain complexes that correspond to $X$ and $Z$ metachecks.

\section*{Acknowledgments}

Several people shared their insightful comments at various stages of this work. They are (in alphabetical order): Benjamin Brown, Oscar Higgott, Angus Kan, Alex Neville, Brendan Pankovich, Guoming Wang and Joel Wallman, with special thanks to Leonid Pryadko for discussions about the distance estimation tools he and his collaborators developed.

\section{Math background}\label{sec:Apps}

\appendix

\section{Discrete groups}\label{app:discreteGroups}

Any discrete group can be characterized in terms of a presentation $\langle S | R \rangle$, where $S$ is a set of generators and $R$ is a set of relations, that is, products of group elements that are equal to the identity. The free group over $S$ consists of all \emph{words}, that is, all ordered products of elements of $S$, where it is referred to each element of the product as a \emph{letter}. I can \textit{reduce} words by using the relations to rewrite subexpressions in fewer letters. A word is in reduced form if it cannot be shortened using the relations. The length $\ell_S(g)$ of a group element $g$ is defined to be the number of letters in a reduced word for $g$.

It is possible to determine the length of a group element using the Cayley graph~\cite{grossman1964groups} as follows. The left Cayley graph $C_{G, S}$ of a group $G$ with a generating set $S$ is a directed graph with $G$ as its vertex set and a directed edge set
\begin{align}\label{eq:leftCayley}
    \{(g, sg) | g\in G, s \in S \}.
\end{align}
Then $\ell_S(g)$ is the number of edges between the identity and $g$ in the Cayley graph.
%..define the length of the group to be
%\begin{align}
%    \ell(G,S) = \max_{g \in G} \ell_S(g).
%\end{align}
The Cayley graph encodes one partial order on the group (see \cref{app:partialOrder}) called \emph{weak order}, where $h$ covers $g$ if $hg^{-1} \in S$. One can similarly introduce the right Cayley graph and the corresponding partial order.

\section{Coxeter groups}\label{app:Coxeter}

A Coxeter group~\cite{coxeter1973regular,davis2012geometry} is a group with a presentation of the form
\begin{align}\label{eq:coxPresentation}
    W = \langle s_i |  (s_i s_j)^{m_{ij}} = \id \rangle,
\end{align}
where $m_{ii} = 1$ for all $i$ and I will omit the $\id$. Coxeter group is a distinguished class of discrete groups forming several finite and many infinite families. The generating set $S$ consists of simple reflections -- a term alluding to the geometrical origin of Coxeter groups as generalized reflections (hence the involution $m_{ii} = 1$ which I will also omit). A pair $(W, S)$ is then called a Coxeter system, where $W$ is a Coxeter group and $S$ is a set of generators for $W$. For any $s_i, s_j$ with no relation, I use $m_{ij} = \infty$ which necessarily generates an infinite group.  Assuming that the generators are independent, the off-diagonal elements must be at least two by the uniqueness of inverses. Let $i, j$ be such that $m_{i, j} = 2$. Then multiplying
\begin{align}
    (s_i s_j)^2=\id
\end{align}
from the left by $s_i$ and the right by $s_j$ gives
\begin{align}
    s_i s_j = s_i^2 s_j s_i s_j^2 = s_j s_i,
\end{align}
and so $s_i$ and $s_j$ commute. Thus, it is common to encode information about a Coxeter group into a Coxeter graph whose vertex set is $S$, $(s_i, s_j)$ is an edge if $m_{ij} \geq 3$ (that is, only the non-commuting generators are connected) and  the edges are labeled if $m_{ij} > 3$.

A prominent example of an infinite Coxeter family is the $A_n$ class also known as the symmetric groups $S_{n+1}$ of order $(n+1)!$. Case $n=3$ has the presentation
\begin{align}\label{eq:A3presentation}
\langle s_1, s_2,s_3|(s_1 s_2)^3, (s_1s_3)^2,(s_2s_3)^3\rangle,
\end{align}
and the Coxeter matrix
\begin{align}\label{eq:CoxMatA3}
    \begin{bmatrix}
        1 & 3 & 2 \\
        3 & 1 & 3 \\
        2 & 3 & 1
    \end{bmatrix}.
\end{align}
In the symmetric group case the simple reflections are the set of all elementary transpositions: $s_i=(i,i+1),1\leq i\leq n$.

Coxeter groups possess all sorts of remarkable properties and some of them will become handy in this work~\cite{davis2012geometry,humphreys1990reflection}. I simply list a few necessary facts that are too basic to be omitted or become useful later. Associated with the Coxeter matrix is the Schl\"{a}fli matrix $S=[s_{ij}]$ with elements
\begin{align}
    s_{ij} = \begin{cases}
    -\cos \frac{\pi}{m_{ij}} & m_{ij} \leq \infty \\ -1 & \mbox{otherwise.}
    \end{cases}
\end{align}
A Coxeter group is finite if and only if $S$ is positive definite, affine if $S$ is positive semi-definite, and indefinite otherwise. Each connected component of a Coxeter graph corresponds to a normal subgroup, since any product of generators from that component commutes with any product of generators outside the component. Thus, any Coxeter group can be regarded as the direct product of its connected components and a Coxeter group irreducible if its Coxeter graph is connected. Finite and affine irreducible Coxeter groups have been classified~\cite{humphreys1990reflection,davis2012geometry}. The finite irreducible Coxeter groups  classified and belong to one of four one-parameter families or six exceptional groups. These groups are the symmetry groups of specific regular polytopes. Less is known about the structure of indefinite Coxetere groups also called hyperbolic Coxeter groups. An important class of finite irreducible Coxeter groups is called  Weyl (or crystalographic) groups. In the complete classification of finite irreducible Coxeter groups the only non-Weyl groups are $H_3,H_4$ and $I(m)$ for $m=5,m\geq7$.

Let $(W,S)$ be a Coxeter system. The reflections of $(W,S)$ are defined to be the set $T=\cup_nT_n$, where
\begin{align}\label{eq:reflectionT}
    T_n = \{wsw^{-1} : w \in W, \ell(w)=n, s \in S\}.
\end{align}
Going back to the Coxeter system $(A_n,S)$ the set of reflections consists of all transpositions $(i,j)$. A reflection $t\in T$ acts as $w\mapsto wt$ but unlike a simple reflection there is no difference between the left and right action: for the right action of $t$ there is a different  $t'\in T$ where $t'w=wt$.

Coxeter groups can be characterized combinatorially and~\cite{bjorner2006combinatorics} provides plenty of details. One of the most important properties of Coxeter groups is called the strong exchange property (SEP). Let $w=s_1\dots s_r$ be a reduced word (so $\ell(w)=r$) and let $t\in T$. Then $\ell(wt)<\ell(w)$ iff $wt=s_1\dots\slashed{s}_i\dots s_r$ for $i\in[r]$. The practical importance of the SEP is that it provides  a feasible way of finding all reflections $t$ satisfying $\ell(wt)=\ell(w)-1$ for a chosen $w$. This procedure helps determine a Bruhat order interval of any finite or infinite Coxeter group that I will introduce in~\cref{app:partialOrder}.

\section{Partially ordered sets and Bruhat order}\label{app:partialOrder}

This section is heavy on terminology but not without purpose. I frequently use it to describe CSS codes parameters derived from Bruhat order. A \emph{poset}~\cite{stanley1997enumerative,davey2002introduction} (partially ordered set) is a set $P$ with a relation $\leq$ satisfying:
\begin{description}
  \item[Reflectivity] $p\leq p$ for all $p\in P$.
  \item[Transitivity] $r\leq s$ and $s\leq q$ implies $r\leq q$ for all $r,s,q\in P$.
  \item[Antisymmetry] $r\leq s$ and $s\leq r$ implies $r=s$ for $r,s\in P$.
\end{description}
A poset is thus a pair $(P,\leq)$ but I will often mildly abuse the notation and call $P$ the poset. It is said that $q\in P$ \emph{covers} $r\in P$ (written $r\lessdot q$) if $r<q$ and there is no $s\in P$ such that $r<s<q$ ($x<y$ means $x\leq y,x\neq y$). The \emph{closed interval} $[r,q]$ in a poset $P$ are all $s\in P$ such that $r\leq s\leq q$. The \emph{open interval} $(r,q)$ is all $s\in P$ such that $r<s<q$. A finite poset (a poset whose set is finite) can be depicted as a \emph{Hasse diagram}, where the vertices are  poset elements and the edges are cover relations. By convention, the bigger elements of $P$ are drawn at the top but I will almost always `topple' the poset by putting them to the right. In fact, a larger class of posets called \emph{locally finite posets} can be drawn as a Hasse diagram as well. These are infinite posets where every  interval is finite.

Elements $p,q\in P$ are \emph{incomparable} if neither of $p<q,q<p$ holds and \emph{comparable} otherwise.  A subset of $P$ is called a \emph{chain} if all its elements are comparable. So for a chain $X\in\{x_0,\dots,x_n\}\subset P$ its elements satisfy $x_0<x_1<\dots<x_n$, where  the \emph{length} of the chain is defined to be~$n$. If in addition  $x_0\lessdot x_1 \lessdot \dots\lessdot x_n$ holds then $X$ is \emph{maximal} (it cannot be refined). A \emph{bounded poset} $P$ is graded if all maximal chains are of the same length. A poset is called bounded if there exists a unique least and a unique greatest element. I will denote them $\hat{b}$ or $w_b$ as `bottom' and $\hat{t}$ or $w_t$ as `top'. A \emph{rank function} $\rank:P\mapsto\bbN$ is a function satisfying $\rank{x_{i+1}}=\rank{x_i}+1$ whenever  $x_i\lessdot x_{i+1}$. A poset equipped with a rank function is called a \emph{graded poset} and  the bottom chain element $x_0=\hat{b}$ gets assigned, by convention, $\rank{x_0}=0$. The length of a closed interval $[x,y]$ of a graded poset is defined to be $|[x,y]|\df\rank(y)-\rank(x)$. Note that $|[x,y]|\equiv|(x,y)|$ for an open interval $(x,y)$.

A subset $l_p=\{x_i\}_i\subset P$ is called a \emph{level}~$l_p$ (or sometimes layer) of rank $p$ if it contains \emph{all} incomparable elements of rank~$p$, that is, all $x_i$ such that $\rank{x_i}=p$. Let $(P,\leq)$ be a locally finite poset and let $Q\subseteq P$. Then $(Q,\leq)$ is called an \emph{induced subposet} (subposet for short) if it inherits the structure of the poset~$(P,\leq)$, that is, $u\leq v$ in $(Q,\leq)$ iff  $u\leq v$ in $(P,\leq)$.

Let $\{l_{p_i}\}$ be a collection of $n$ consecutive layers of rank $p_i$ considered to be a subposet $Q=\cup_il_{p_i}$ of a graded poset~$P$. I will call it an $n$-tuple or $n$-layer subposet. As an induced subposet its Hasse diagram is a subgraph of the Hasse diagram of the poset~$P$.

\subsection{Bruhat order on Coxeter group}\label{appsub:bruhatorder}

Bruhat order makes any Coxeter group into a poset with deep links to other areas of mathematics. Let $(W,S)$ be a Coxeter system. A covering relation $u \lessdot w$ is defined for $u,w\in W$ if $w^{-1}u \in T$ defined in~\cref{eq:reflectionT} whenever $\ell(w) = \ell(u) + 1$, where $\ell(u)$ is the length of the Coxeter word $u$. This covering relation can be extended to a partial order, known as the Bruhat order, on the Coxeter system. Equivalently~\cite{bjorner2006combinatorics}, it also holds $u \lessdot w$ if a reduced word for $u$ can be obtained by removing one letter from any reduced word for $w$. Thus, the Bruhat order is equivalent to $u \leq w$ if any reduced word for $w$ contains a (not necessarily consecutive) subsequence that is a reduced word for $u$.

The Bruhat order on Coxeter groups is a graded poset whose rank function is the length function. Hence the length of a closed  Bruhat interval $[x,y]$ is $\ell([x,y])=\ell(y)-\ell(y)$ and as $\ell([x,y])\equiv\ell((x,y))$.

\section{Finite regular CW complexes}\label{app:CWcomplexes}

CW complexes~\cite{whitehead1949combinatorial,lundell2012topology,munkres2018elements,cooke2015homology} offer a flexible and elegant way of building a large number of `nice' topological spaces. Unlike more familiar simplicial complexes, where $d$-dimensional simplices are glued together in quite a rigid way, the CW complexes are built from $d$-balls called $d$-cells. Often they are called just cell complexes nowadays although one has to be careful about the properties of the attaching map used to specify the way the cells are glued together. CW stands for `Closure-finite and Weak topology' -- the terminology mainly relevant when building infinite-dimensional topological spaces. Since all the complexes studied here are finite I will  focus on introducing only an absolutely necessary amount of information about CW complexes.  Besides their conceptual importance,  CW complexes offer amazing flexibility compared to simplicial or polyhedral complexes by economically building topological spaces or cellulating manifolds.

Note that CW complexes as topological spaces are more general than manifolds (topological spaces that are locally Euclidean). Even though the spheres $S^d$ as initial objects are a manifold it is not obvious whether after they are converted to CSS codes they still remain some sort of manifold. And that is fine because they remain CW complexes and all that is needed is to calculate its homology.

Let $B^d\df\{(x_1,\dots,x_d)\in\bbR^d;\sum_ix_i^2\leq1\}$ be called the $d$-dimensional \emph{ball} (or~$d$-\emph{ball} or $d$-\emph{disc}) and its boundary the $(d-1)$-dimensional sphere $S^{d-1}\df\{(x_1,\dots,x_d)\in\bbR^d;\sum_ix_i^2=1\}$. An open $d$-\emph{cell} $e^d$ of dimension~$d$ is defined as a topological space homeomorphic to $\inter{B^d}=B^{d}\backslash S^{d-1}$  (the interior of $B^d$) and $S^{d-1}=\partial B^d$ its boundary. Let $X$ be a topological space  (in particular Hausdorff) whose \emph{CW decomposition} is a collection of  open $d$-cells $\{e^d_\a\}_{\a\in\iota_d}$, where $\a$ is drawn from finite indexing sets~$\iota_d$. Then
\begin{equation}\label{eq:X}
  X=\bigcup_{0\leq k<\infty}\bigcup_\a e^k_\a
\end{equation}
and
\begin{equation}\label{eq:skeleton}
X^d\df\big\{\bigcup_\a e^k_\a;\a\in\iota_k,0\leq k\leq d\big\}
\end{equation}
is called the $d$-\emph{skeleton} of $X$. A space $X$ together with the cells is called a \emph{CW complex} if
\begin{enumerate}
  \item[1.] For each $e_\a^d$ there exists a \emph{characteristic map} $\chi^d_\a:B^d\mapsto X$,
  \item[2.] $\chi^d_\a(\inter{B^d})=e^d_\a$ is a homeomorphism for all $\a$ (an open $d$-disc maps to an open $d$-cell),
  \item[3.] $\chi^d_\a(\partial B^d)\subseteq X^{d-1}$,
\end{enumerate}
where item 3. hides what is behind the `C' in CW: it dictates that the cell's closure can't touch infinitely many other cells. The third item also defines the action of the \emph{attaching map} acting on the boundary $S^{d-1}$ of $B^d$. The index~$\a$ indicates that the attaching map may differ for different cells (including of the same dimension). As I mentioned, in a finite case there is no need to explain the `W' as it is automatically satisfied and would take us to deeper waters. The subspaces $\chi^d_\a(B^d)$ are called closed $d$-cells and are to be denoted~$\overline{e}^d_\a$.

A finite CW complex defined this way is still not nice enough for the purposes of code building. It is further required that the attaching map is an embedding (that is, an injective and continuous map)~\cite{cooke2015homology}, making the CW complex \emph{regular}. As it will become apparent in~\cref{appsub:CWhomology} the requirement of regularity is to make the CW complexes simplicial-like while keeping them strictly more powerful than simplicial or polyhedral complexes.

\begin{figure}[t]
  \resizebox{14cm}{!}{\includegraphics{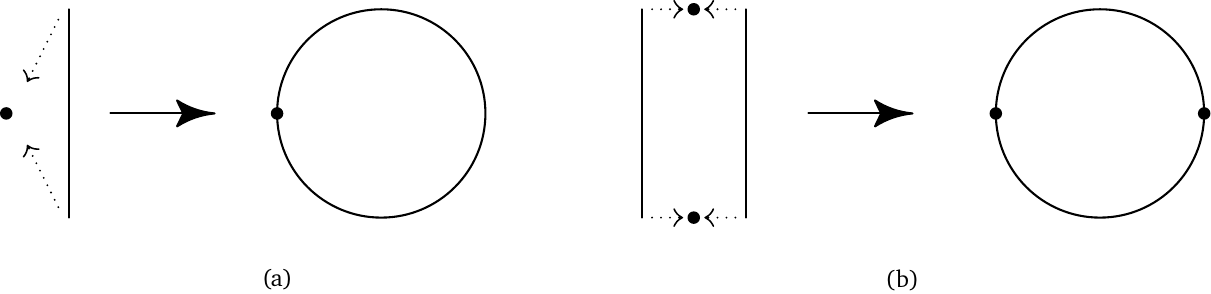}}
  \caption{How to make the $S^1$ sphere as a CW complex. (a) One 0-cell (the black dot) and one 1-cell (the line) are the building blocks. The dotted arrow shows the action of the attaching map~$\chi^1$ and the result is the CW complex $S^1=e^0\cup e^1$. (b) Two 0-cells and two 1-cells attached to them forming $S^1$ as a regular CW complex. In this case the attaching map is injective.}
  \label{fig:S1CW}
\end{figure}

Building spaces as CW complexes can be introduced in different ways but in the finite case the inductive procedure seems the most friendly. Loosely speaking, the creation of a CW complex is  about attaching cells of increasing dimensions: Let $X^0$ be a 0-skeleton, that is, a set of 0-cells (discrete points). The space $X^d$ is created by attaching $d$-cells $e^d_\a$ by virtue of the attaching map $\chi^d_\a: S^{d-1}\to X^{d-1}$ introduced earlier. One inductively forms  $(X^{d-1}\bigcup_\a B_\a^d)/(x\sim\chi_\a(x))$ as a disjoint union followed by taking the quotient by identifying the cell boundary point $x$ with its image under the attaching map $\chi_\a(x), \forall x$. The finite CW complex is then formed by setting $X=X^n$ for some $n<\infty$, which is called the CW complex dimension. For clarity, I show the first inductive step in more detail. Once the 0-skeleton $X^0$ is `deployed' I take a collection of 1-cells $\{e^1\}_\a$ together with the attaching maps $\chi^1_\a:S_{\a}^0\to X^0$. Recall that the attaching maps  prescribe what happens with all the 1-cells' boundaries. So the first inductive step consists of $X^0\bigcup_\a B^1_\a$ (`placing' the 0- and 1-cells next to each other) followed by identifying the endpoints with the 0-cells in $X^0$. In this way I  obtain a 1-skeleton~$X^1$ and the procedure starts over again\footnote{Note that the 1-skeleton of a regular CW complex is  a multiple-connected graph without self-loops.}. The described construction is a natural and versatile way of building complicated spaces from simple components ($d$-cells) that generalizes the simplicial and other complexes.

\begin{exa}
  As the standard example, let's build $S^1$ in two different ways. First, as an irregular CW complex, see Fig.~\ref{fig:S1CW}~(a). I take one 0-cell $e^0$ and one 1-cell $e^1$, where the attaching map merges the boundary of $e^1$  to $e^0$. I thus obtain~$S^1$ as a an irregular CW decomposition $S^1=e^0\cup e^1$. The irregularity comes from the fact that the $e^1$ endpoints map to the same 0-cell. For a regular cell decomposition let's have two 0-cells, $e^0_1$ and $e^0_2$. I now take two 1-cells, $e^1_1,e^1_2$, with the attaching maps illustrated in Fig.~\ref{fig:S1CW}~(b). I again obtain~$S^1$ but this time as a regular CW complex whose decomposition is $S^1=e_1^0\cup e_2^0\cup e_1^1\cup e_2^1\equiv2e^0\cup2e^1$. One can continue and attach two 2-cells to the $S^1$: $S^1\cup e_1^2\cup e_2^2=S^2$ to get a regular $S^2$ CW complex as shown in~\cref{fig:S2CW} (left).
\end{exa}

\begin{figure}[t]
  \resizebox{10.5cm}{!}{\includegraphics{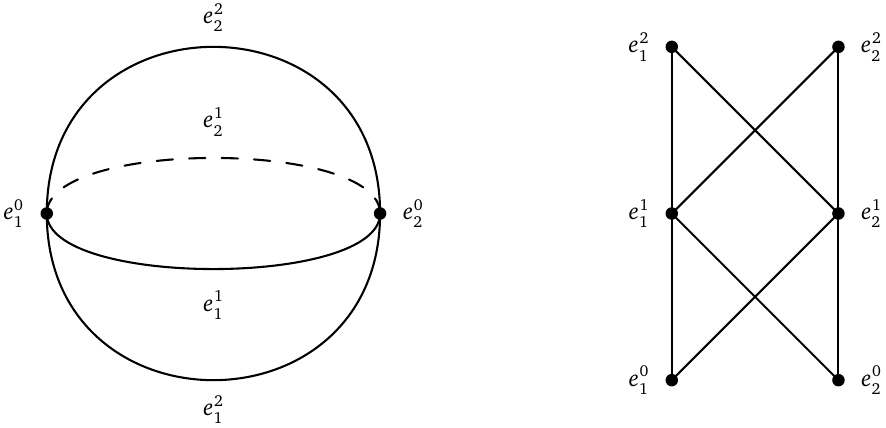}}
  \caption{(Left) Regular CW complex of the $S^2$ sphere as constructed in the main text. (Right) The corresponding face poset $\euF(S^2)$.}
  \label{fig:S2CW}
\end{figure}

\subsection{Homology of regular CW complexes}\label{appsub:CWhomology}

While on the conceptual level the homology of CW complexes is strictly more powerful than the one based on simplices or polygons, at the practical level (in the finite case and especially over $\bbZ_2$ that I follow in this section/work so that I can ignore any orientation information) there are  almost no differences when it comes to defining boundary operators acting on $p$-cells as the elementary building block of regular CW complexes. The homological machinery of regular CW complexes resembles the one  of simplicial complexes~\cite{cooke2015homology}.

Let $X$ be a regular CW complex or more precisely its underlying space. In anticipation of the things to come I terminologically look ahead and introduce the \emph{face poset} $\euF(X)=(X,\subseteq)$ as the set of closed cells $\overline{e}_\a$ ordered by inclusion, that is, $e_\a\leq e_\b$ whenever $\overline{e}_\a\subseteq\overline{e}_\b$ (see in~\cref{fig:S2CW} (right) for an example). Following the earlier poset terminology, I write $e_\a\lessdot e_\b$ if $e_\a<e_\b$ and when there is no $e$ such that $e_\a<e<e_\b$.

I define the incidence number $[e_\a:e_\b]$  to be equal to one iff $e_\b\lessdot e_\a$  and zero otherwise. Let
\begin{equation}\label{eq:dchains}
  c_p\df\sum_{\a\in\iota_p}\g_\a e^p_\a
\end{equation}
be a $p$-chain as a formal sum of $p$-cells where $\g\in\bbZ_2$. I write $C_p(X,\bbZ_2)$ as an Abelian group of all $p$-chains. It is possible to introduce a basis -- the $p$-cells themselves and promote $C_p(X,\bbZ_2)$ to a module (commutative ring-valued vector space). From there, one can define the boundary operator
\begin{equation}\label{eq:boundaryMap}
  \partial_p:C_p(X,\bbZ_2)\mapsto C_{p-1}(X,\bbZ_2)
\end{equation}
whose action is given by
\begin{equation}\label{eq:par}
  \partial_pe^p=\sum_{\b}[e^p:e_\b^{p-1}]e^{p-1}_\b=\sum_{e_\b\lessdot e}e^{p-1}_\b.
\end{equation}
We then have the orthogonality result that $\partial_{p-1} \partial_p = 0$:
\begin{align}\label{eq:parpar}
\partial_{p-1} \partial_p e^p & = \sum_{e_\b\lessdot e} \partial_{p-1} e^{p-1}_\b=0,
\end{align}
where the first line follows from linearity. For the second row, it can be shown~\cite{cooke2015homology} that for regular CW complexes there are always exactly two cells whose boundary is some $e^{p-2}$. Hence, just like for simplicial complexes, it will appear in exactly two of the $e^{p-1}_\b$ terms in \cref{eq:parpar} and thus will cancel out as addition is modulo 2.

The topology of space is its intrinsic property which is often hard to determine. A suitable homology theory has to satisfy certain axioms (formulated by Eilenberg and Steenrod~\cite{eilenberg1945axiomatic}) and CW homology is one such theory. An added bonus is that it is more suitable for practical calculations compared to the homology based on simplicial complexes. But whatever tool is chosen the axiom uniqueness property makes sure that one arrives at the same conclusion. Working with chains made of $p$-cells allows us to define useful concrete structures to be able to uncover the space's topological properties. One of them is a chain complex $C_\bullet=(C_p,\partial_p)$ which, at an abstract level, is a collection of Abelian groups $C_p$ ($0\leq p\leq d$) together with the homomorphisms $\partial_p:C_p\subset C_{p-1}$ satisfying $\partial_p\partial_{p+1}=0$. It is sometimes written as the following sequence of length $d$:
\begin{equation}\label{eq:chaincomplex}
  C_\bullet=[\cdots\to C_{p+1}\overset{\partial_{p+1}}\to C_p\overset{\partial_p}\to C_{p-1}\to\cdots].
\end{equation}
There are two distinguished subgroups: the groups of cycles $Z_p\df\ker{\partial_p}$ and the group of boundaries $B_p\df\im{\partial_{p+1}}$. Then the $p$-dimensional homology group of the chain complex $C_\bullet$ is the quotient $H_p(C_\bullet)\df Z_p/B_p$.

Similarly, one can define the \emph{cochain complex}
\begin{equation}\label{eq:coChaincomplex}
  C^\bullet=[\cdots\leftarrow C_{p+1}\overset{\d^{p+1}}\leftarrow C_p\overset{\d^p}\leftarrow C_{p-1}\leftarrow\cdots]
\end{equation}
and proceed as before by defining everything~\emph{co}:~$p$-cochains, $p$-coboundary maps, $p$-cocycles, etc, and consequently the $p$-th \emph{cohomology groups} $H^p(C^\bullet)\df\ker{\d^{p+1}}/\im{\d^p}$. In short, the coboundary map $\d^{p+1}$ raises the dimension of $C_p$ unlike $\partial^p$ which decreases it. Crucially, for a large class of spaces with certain properties there exists a powerful result called  \emph{Poincar\'e's duality}~\cite{munkres2018elements,bredon2013topology} relating the homology and cohomology groups:
\begin{equation}\label{eq:poincareDuality}
  H_p(C_\bullet)=H^{d-p}(C^\bullet).
\end{equation}

Focusing specifically on the regular CW complexes, the $p$-cycles of the CW complex $X$ are those $p$-chains $c_p$,~\cref{eq:dchains}, satisfying $\partial c_p=0$, where $\partial$ is a linear boundary map,~\cref{eq:boundaryMap}. $p$-cycles form an Abelian group (group of cycles) and inherit the module structure as well. The boundary $p$-cycles are the chains given by $\partial c_{p+1}$ and they again form a group (group of boundaries) and a ring-valued module. Thanks to~\cref{eq:parpar} I can therefore write $B_p(X,\bbZ_2)\subseteq Z_p(X,\bbZ_2)$ -- just like it is required from the axioms of the general theory of homology. Hence
$$
H_p(X,\bbZ_2)=Z_p(X,\bbZ_2)/B_p(X,\bbZ_2)
$$
is the $p$-th homology group of a CW~complex $X$. Further, there exists a topological invariant $\b_p(X)\df\rank{H_p(X,\bbZ_2)}$ called the $p$-th Betti number ($\rank{H_p}\equiv\dim{H_p}$ if the ring is $\bbZ_2$). Recalling the definitions of $Z_p$ and $B_p$ I write
\begin{equation}\label{eq:rankHp}
     \b_p(X)=\rank{\ker{\partial_p}}-\rank{\im{\partial_{p+1}}}.
\end{equation}

\section{Homological CSS codes from regular CW complexes}\label{app:homoCSS}

An $n$-bit linear code $C$ is defined by a binary matrix $G \in \bbZ_2^{k \times n}$ with $k \leq n$ linearly independent rows, referred to as the generator matrix. An element $x \in \bbZ_2^k$ is encoded by multiplying by $G$, i.e., $xG$. It is said that $C \subset C'$ if every codeword of $C$ is a codeword of $C'$. One can also define a parity check matrix (PCM) $H \in \bbZ_2^{(n-k)\times n}$ such that $H G^T = 0$, so that any encoded element is in the kernel of $H$. The parity check matrix also defines a code $C^\bot$, referred to as the dual of $C$, with generator matrix $H$ and parity check matrix $G$. One can define a CSS code from two $n$-bit linear codes $C_X$ and $C_Z$ with parity check matrices $H_X \in \bbZ_2^{r_X \times n}$ and $H_Z \in \bbZ_2^{r_Z \times n}$ satisfying
\begin{align}\label{eq:CSS}
    H_X H_Z^T = 0,
\end{align}
where $r_{X(Z)}=\rank{H_{X(Z)}}$. Let $h^i_X$ and $h^j_Z$ be the spanning bases of $C_X^\perp$ and $C_Z^\perp$. Then, $x_i=X^{h^i_X}$ are the $X$  type stabilizer  generators, $z_j=Z^{h^j_Z}$ are the $Z$~type stabilizer generators and~\cref{eq:CSS} expresses their mutual commutativity. Note that it is not uncommon for the bases of  $C_X^\perp$ or $C_Z^\perp$ to be overcomplete and so the corresponding $X(Z)$ stabilizers are not independent. So when I write, for example, $H_X$ as
\begin{equation}\label{eq:HXrowSpace}
  H_X=
  \begin{bmatrix}
   \cdots\ h^1_X\ \cdots \\[.8ex]
   \cdots\ h^2_X\ \cdots \\
    \vdots \\
   \cdots\ h^d_X\ \cdots\\
  \end{bmatrix}
\end{equation}
it will, in general, happen that $d>\rank{H_X}$.

How to obtain a pair satisfying~\cref{eq:CSS}? There exists many different strategies that have been developed since the inception of the CSS codes and it is an active area of research especially on the LDPC front. One of the well-explored ways is by employing the topological properties of various manifolds or just topological spaces like the CW complex homology over $\bbZ_2$ introduced in~\cref{appsub:CWhomology}. In fact, a triangulated or otherwise nicely cellulated (such as regularly in the sense or the regular CW complexes)  space yields a CSS code by virtue of its chain complex of length at least two and the corresponding boundary operators. Thus for any regular CW complex~$X$ introduced in~\cref{app:CWcomplexes}, one can construct a CSS code by setting $H_X = \partial_d$ and $H_Z^T = \partial_{d+1}$ and by using the incidence relations~\cref{eq:par} it is possible to find~\cref{eq:parpar}, so that~\cref{eq:CSS} holds. For any $v \in \im H_Z^T$, $Z^{v}$ is a check operator of the code, while for any $\nu \in  \ker H_X$, $Z^{\nu}$ commutes with all the $X$ checks of the code. Therefore the logical $Z$ operators correspond to the elements of $\ker H_X / \im H_Z^T$ and so the number of logical qubits is $\dim{[\ker H_X / \im H_Z^T]}$, which is the previously introduced Betti number in~\cref{eq:rankHp}.

The quotient space structure lends itself to an efficient way of finding the logical Pauli operations by lifting the basis vector to the original space. More precisely, let $V$ be a finite vector space and $U\subset V$. A homomorphism $\pi:V\mapsto V/U$ maps each $v\in V$ to a an element of a coset $V/U$, where $\ker{\pi}=U$. There exists a \emph{fixed lift homomorphism} $\varpi:V/U\mapsto V$ satisfying
\begin{equation}\label{eq:cosetIdentity}
  \pi(\varpi)=\id_{V/U}.
\end{equation}
So by setting $V=\ker H_X$ and $U=\im H_Z^T$ and finding the basis vectors I can lift them in a predictable way to the ambient space~$V$ (that's why it is fixed), where they become the logical Pauli operators $\{\ol X_i,\ol Z_i\}$ once their proper commutation relations are ensured. It is possible to ignore the lifting procedure altogether and follow the algorithm introduced~in~\cite{wilde2009logical}, which by using the knowledge of the $X$ and $Z$ generator matrices $G_X,G_Z$ of a CSS code iteratively generates logical operators while removing the stabilizer generators.

From the PCMs, one can construct a convenient representation of the CSS codes (length two chain complexes in the case of homological CSS codes) in the form of a \emph{Tanner graph}. The Tanner graph is a bipartite graph, where the bipartition divides the $X,Z$ checks and the data qubits. It is drawn in the form of a tripartite graph corresponding to two incidence matrices (a layer of $X$ checks  that act nontrivially on data qubits and the data connected in a similar manner to the $Z$ checks).

\section{Regular CW complexes and the Bruhat order of Coxeter groups}\label{app:sphereClassification}

\begin{figure}[t]
  \resizebox{14cm}{!}{\includegraphics{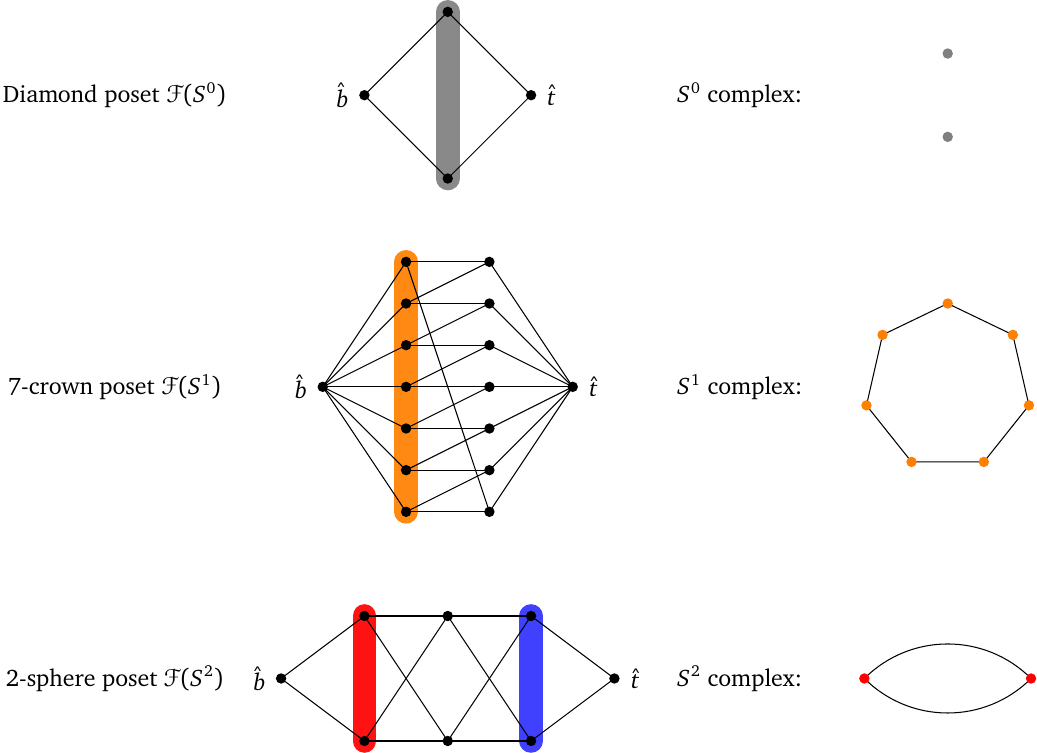}}
  \caption{The poset structure and the corresponding regular CW complexes of the basic building blocks of the CSS codes constructed here. On the left, the diamond poset $\euF(S^0)$, $k$-crown $\euF(S^1)$ for $k=7$ and $\euF(S^2)$ of probably the simplest regular complex of $S^2$. On the right, the CW complexes with a color coding to help identify which 0-,1- or 2-cells are to be found. The two 2-cells of $S^2$ correspond to the inner and outer faces (not depicted in blue).}
  \label{fig:S012}
\end{figure}

Everything is now ready to show how Coxeter groups and their Bruhat order are related to the regular CW complexes (and therefore CSS codes as elaborated on in the main text). The key result is due to Bj\"orner~\cite{bjorner1984posets,bjorner2006combinatorics}:
\begin{thm}\label{thm:CWcellulationBruhat}
  Let $(W,S)$ be a Coxeter system and let $(u,v)$ be an open interval of the Bruhat order for $u,v\in W$ such that $\ell(u,v)\geq2$. Then $(u,v)$ is isomorphic to the face poset of a regular CW complex of $S^{\ell(u,v)-2}$.
\end{thm}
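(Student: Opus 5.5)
The plan follows Björner's route, which rests on two ingredients: a combinatorial property of Bruhat intervals (they are thin and shellable), and a topological recognition theorem that converts such posets into regular CW spheres. First I reduce the statement to a purely order-theoretic one. Write $n=\ell(u,v)$. By the recognition result for CW posets (Björner 1984), a poset $P$ is the face poset of a regular CW complex if and only if $P$ has a least element $\hat 0$ adjoined and every lower interval $(\hat 0,x)$ of $\hat P=P\cup\{\hat 0\}$ is homeomorphic, via its order complex, to a sphere of dimension $\rank x-2$. Moreover, if the order complex $\Delta(P)$ is itself a sphere, then so is the CW complex, because the regular CW complex is homeomorphic to $\Delta(P)$ by barycentric subdivision. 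It therefore suffices to prove that for every Bruhat interval $[u,v]$ with $\ell(u,v)\ge 2$, the order complex $\Delta((u,v))$ is homeomorphic to $S^{n-2}$. The same claim then applies to all lower subintervals $(u,w)$, $w\le v$, and this gives the CW-poset property with $\hat 0=u$.

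The second step establishes thinness, namely that every length-two interval $[x,z]$ has exactly two atoms. This is the diamond statement of \cref{prop:sphereClassification}~(1). I would prove it from the subword characterization of Bruhat order together with the strong exchange property in \cref{app:Coxeter}. The standard approach is induction on $\ell(z)$ using the lifting property: if $s\in S$ with $zs<z$, one compares $[x,z]$ with $[xs,zs]$ or $[x,zs]$ according to whether $xs<x$ or $xs>x$.

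The third step, which I expect to be the main obstacle, is shellability. I would show that $[u,v]$ admits an EL-labelling, following Dyer's reflection-order argument, or alternatively Björner–Wachs CL-shellability. Concretely, I fix a reflection order on $T$ and label each cover $x\lessdot xt$ by $t$. The task is to verify that every subinterval has a unique increasing maximal chain, and that this chain is lexicographically first. Proving uniqueness is where the combinatorics of reflection orders and dihedral reflection subgroups enters, and I would try to handle it by reducing to rank-two (dihedral) subintervals. Once $\hat P$ is shellable, $\Delta((u,v))$ is a shellable pseudomanifold: thinness of every length-two interval means that each codimension-one face of the order complex lies in exactly two facets. A shellable pseudomanifold without boundary is a PL sphere (Danaraj–Klee), so $\Delta((u,v))\cong S^{n-2}$. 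Applying this to every subinterval completes the argument through the first step.
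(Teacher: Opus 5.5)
Your outline is correct and is exactly Bj\"orner's argument, which the paper does not reproduce but simply cites. Thinness of length-two intervals together with CL/EL-shellability (Bj\"orner--Wachs, or Dyer's reflection orders) makes $\Delta((u,v))$ a shellable pseudomanifold without boundary, hence a PL sphere $S^{\ell(u,v)-2}$ by Danaraj--Klee; applied to every $(u,w)$ with $w\le v$, Bj\"orner's CW-poset criterion then realizes $[u,v)$, with $u$ as the empty cell, as the face poset of a regular CW sphere. The one step you leave open, uniqueness of the increasing maximal chain, should simply be quoted from Dyer or Bj\"orner--Wachs: checks on dihedral or length-two pieces only yield existence of an increasing chain (by removing descents one diamond at a time), not its uniqueness, which needs their global argument.
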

This is an interesting result for several reasons. Given its relevance for us I will just rephrase it in words before looking closer at the structure of open intervals.   For a finite or infinite Coxeter group one can choose two elements called  bottom and top, $w_b$ and $w_t$, as is common in the poset terminology. For small finite Coxeter groups a natural choice is the unique shortest and longest elements $w_b=\id$ and $w_t$.  Because a poset vertex representing a reduced word $w = s_{i_1}\ldots s_{i_p}\in W$ is naturally associated with a $(p-1)$-cell in this way  I get a regular CW complex of a $(d-2)$-dimensional sphere if $\ell(w_t)=d$. But that is not necessary or even possible for infinite Coxeter groups. In fact, in both cases I may simply choose any two elements $u,v\in W$ to be $w_b$ and $w_t$ as long as they are comparable, i.e. $u<v$, and differ in length by at least two. The corresponding Bruhat poset is again a face poset of cells of a regular CW complex cellulating $S^{d-2}$ -- but  this time for an arbitrarily large~$d$ if I consider a big enough Coxeter group.

This all suggests an immense variability of  regular CW cellulations of $S^d$ spheres one can obtain even considering just irreducible Coxeter groups (whenever such classification is available such as finite groups, tessellations and certain classes of hyperbolic groups). But surprisingly, there is a hidden structure for all closed intervals $[w_b,w_t]$ of length~$d+2, d\geq0$. Moreover, for certain (small) lengths and certain classes of Coxeter groups their building blocks have been classified~\cite{bjorner2006combinatorics}.

\begin{figure}[t]
  \resizebox{14cm}{!}{\includegraphics{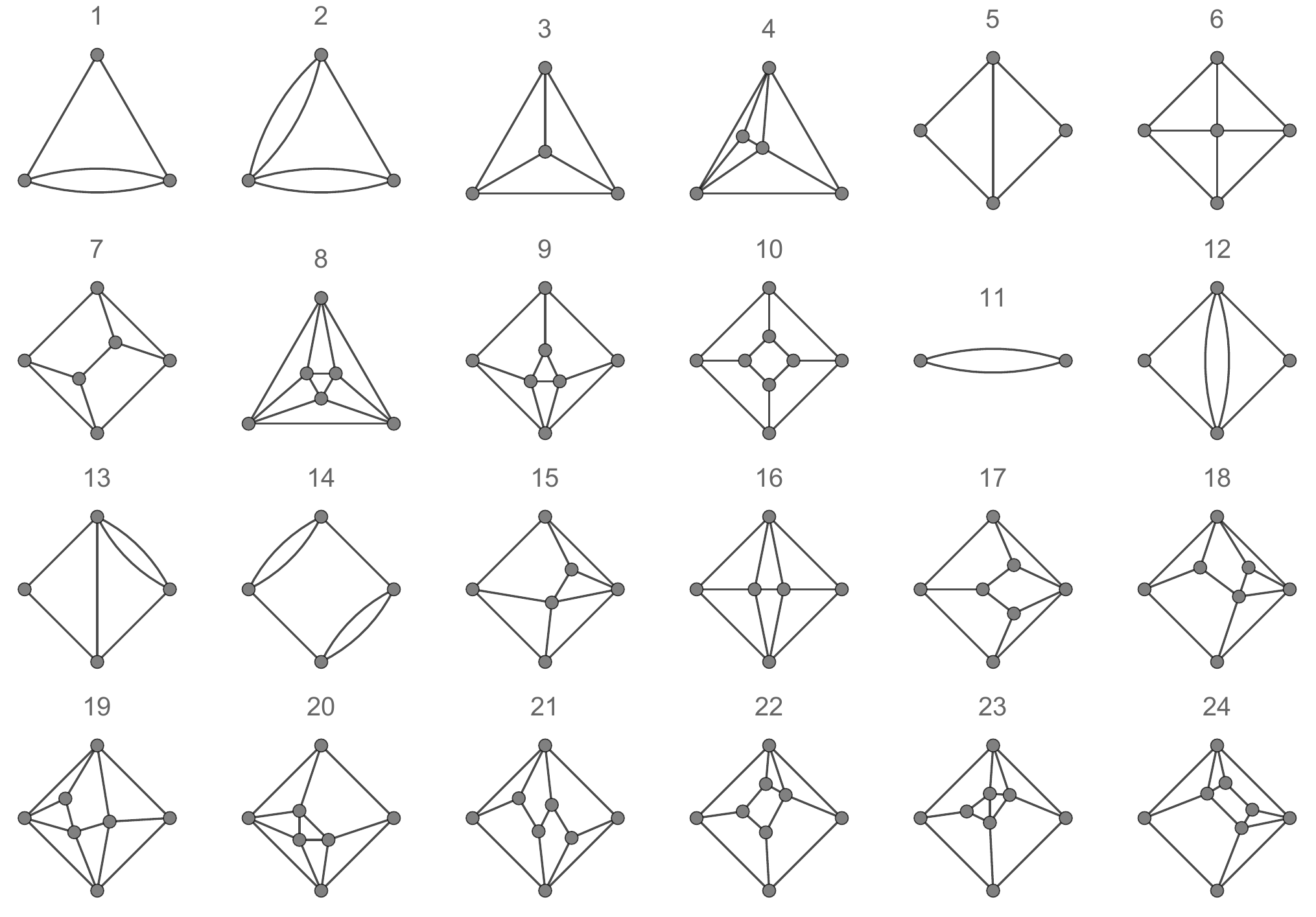}}
  \caption{Hultman's classification of the length four face posets $\euF(S^2)$ of for all finite Weyl groups~\cite{hultman2003combinatorial} in terms of the 24 $S^2$ regular CW complexes. Vertices are 0-cells, edges are 1-cells and faces (including an `ambient' cell for each CW complex because the figures are stereographic projections) are 2-cells.}
  \label{fig:hultman}
\end{figure}

Before summarizing of what is known it is helpful to depict the face poset structure of the three lowest dimensional spheres as the main tools used in this paper, see~\cref{fig:S012}. For $S^0$, being the boundary of a line, the face poset of the $S^0$ CW complex is simply two points (two 0-cells). If a common bottom and top element $\hat{b}$ and $\hat{t}$ are added I will call the resulting 3-layer poset the \emph{diamond} poset (graph). For $S^1$, the number of CW complexes is infinite but they are easy to characterize. They are all polygons with $k$ sides for $k\geq2$ (also called $k$-gons) so except for $k=2$ they are all polyhedral complexes (a special type of CW). An $S^1$ face poset $\euF(S^1)$ consists of $k$ 0-cells and $k$ 1-cells. The face-incident relations therefore result in a  $2k$-cycle -- a potentially useful observation for the QEC code properties. As a graded poset with a bottom and top element added its Hasse diagram is depicted in the middle row of~\cref{fig:S012} resulting in a 4-layer poset. $\euF(S^1)$ was nicknamed a $k$-crown in~\cite{bjorner2006combinatorics} and I use this terminology extensively in the main text. Finally, for $S^2$, a generic CW complex consisting of 0-,1- and 2-cells  takes form of a planar graph, which is possible because every cellulated $S^2$ can be stereographically projected onto a two-dimensional plane (bottom row of~\cref{fig:S012}). I depict a simple example already constructed earlier in~\cref{fig:S2CW} (left).  Again, a common bottom and top  element is added promoting it to a 5-layer poset. While the $S_1$ complex is a polyhedral complex, the $S^2$ complex is a bona fide CW complex as can be seen from its 1-skeleton which is a digon.

The aforementioned small interval classification is summarized as follows:
\begin{prop}\label{prop:sphereClassification}
Let $(W,S)$ be a Coxeter system and $u,v\in W$ such that $u<v$. Then
\begin{enumerate}
  \item All intervals $[u,v]$ of length two  are diamond posets,
  \item All $\ell([u,v])=3$ intervals are isomorphic to $k$-crown posets,
  \item All $\ell([u,v])=4$ intervals are isomorphic to one of the 24 spheres classified in~\cite{hultman2003combinatorial,hultman2003bruhat} if $W$ is a finite Weyl group.
\end{enumerate}
\end{prop}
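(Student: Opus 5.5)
We split the statement into its three items. Items (1) and (2) follow from Theorem~\ref{thm:CWcellulationBruhat} combined with the classification of regular CW structures on $S^0$ and $S^1$. Item (3) rests on Hultman's enumeration, which we only need to cite and match.

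\textbf{Item (1).} Let $\ell([u,v])=2$. By Theorem~\ref{thm:CWcellulationBruhat}, the open interval $(u,v)$ is the face poset of a regular CW complex homeomorphic to $S^{0}$. A CW decomposition of $S^0$ has no cells of positive dimension, because any $1$-cell would produce a connected arc. Hence it consists of exactly two $0$-cells, and its face poset is an antichain of two elements. Adjoining $\hat b=u$ and $\hat t=v$ gives the diamond.

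A purely combinatorial argument is also available and avoids topology. Write $v=s_1\cdots s_r$ reduced. By the subword property, the elements of $(u,v)$ are the $w$ with $u\lessdot w\lessdot v$. The strong exchange property shows that their number is positive. The fact that it is exactly two is the classical Bruhat lemma, proved via the reflection-length/Eulerian argument (the Möbius function $\mu(u,v)=(-1)^{\ell(v)-\ell(u)}$). I would cite Theorem~\ref{thm:CWcellulationBruhat} as the primary route.

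\textbf{Item (2).} Let $\ell([u,v])=3$. Theorem~\ref{thm:CWcellulationBruhat} makes $(u,v)$ the face poset of a regular CW decomposition of $S^1$. We show this decomposition is a $k$-gon in three steps.
\begin{enumerate}
\item Regularity forces each $1$-cell to have two distinct $0$-cell endpoints.
\item Since $S^1$ is a $1$-manifold, each $0$-cell is incident to exactly two $1$-cells. Combinatorially this is item (1) applied to the length-two subintervals $[x,v]$ with $x$ of rank one.
\item Connectedness then forces the $1$-skeleton to be a single cycle with $k\ge 2$ vertices.
\end{enumerate}
Its face poset, with $\hat b=u$ and $\hat t=v$ adjoined, is precisely the $k$-crown. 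The same conclusion follows poset-theoretically: item (1) says every element of rank one lies below exactly two elements of rank two, and dually. The Hasse diagram between the middle ranks is therefore $2$-regular bipartite, hence a disjoint union of even cycles. Connectedness of a length-three Bruhat interval (equivalently, $\mu(u,v)=-1$, so $\chi=0$ for a single circle) rules out more than one cycle.

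\textbf{Item (3).} Theorem~\ref{thm:CWcellulationBruhat} only says that a length-four interval is the face poset of some regular CW $2$-sphere. There are infinitely many of these, so the restriction to $24$ types genuinely needs Weyl-group input. We therefore reduce to Hultman's theorem~\cite{hultman2003combinatorial,hultman2003bruhat}. For finite Weyl $W$, every length-four interval is isomorphic to an interval occurring in a rank-bounded parabolic setting, and the finitely many such isomorphism types were enumerated, yielding the $24$ complexes of Fig.~\ref{fig:hultman}.

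This third item is the main obstacle. No self-contained argument is feasible here, so the plan is simply to quote the classification. We would then check that each listed complex is indeed a regular CW $S^2$, consistent with items (1)–(2): every edge has two vertices and lies on two faces, and each face boundary is a crown.
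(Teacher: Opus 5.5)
Your proof is correct, but it does not follow the paper. The paper gives no argument for this proposition: it quotes Bj\"orner--Brenti for items (1) and (2) and Hultman for item (3).

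\textbf{Item (3).} You do the same as the paper here. You also correctly explain why no shortcut exists: Theorem~\ref{thm:CWcellulationBruhat} allows infinitely many regular CW $2$-spheres, so a finite list needs Weyl-specific input.

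\textbf{Items (1) and (2).} You deduce these from Theorem~\ref{thm:CWcellulationBruhat} together with the elementary classification of regular CW decompositions of $S^0$ (two points) and $S^1$ ($k$-gons, $k\ge 2$). This is a clean, uniform derivation, and it shows that (1)--(2) are just the low-dimensional case of the sphere theorem.

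Be aware, though, that in the literature the logical dependence runs the other way. Thinness of Bruhat order, i.e.\ your item (1), is one of the inputs (together with Bj\"orner--Wachs shellability) from which Theorem~\ref{thm:CWcellulationBruhat} is derived. Your argument is therefore a valid deduction from a black-boxed theorem, not an independent proof. Quoting the direct combinatorial proof of thinness, as the paper does, avoids that inversion.

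\textbf{An error in your alternative argument for (2).} In your poset-theoretic route, connectedness of the middle two ranks is \emph{not} equivalent to $\mu(u,v)=-1$. By thinness, the Hasse diagram between the middle ranks is a disjoint union of $c$ even cycles, with $k$ atoms $x$ and $k$ coatoms $y$ in total. Then $\mu(u,x)=-1$ and $\mu(u,y)=1$, so $\mu(u,v)=-(1-k+k)=-1$ for every $c$. Topologically, a disjoint union of $c$ circles has Euler characteristic $0$ whatever $c$ is. So the M\"obius function cannot exclude several cycles. Connectedness must come from Theorem~\ref{thm:CWcellulationBruhat} (the open interval is homeomorphic to $S^1$) or from the cited literature.

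Your primary, topological argument for (2) does not use this claim, so the proof stands once that parenthetical is dropped.
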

The 24  regular CW $S^2$ spheres are depicted in~\cref{fig:hultman}. As planar graphs, there are pairs of dual graphs (such as 12 and 14) and specimens of self-dual graphs such as number 11 (which is also my $S^2$ example from~\cref{fig:S2CW} (in 3D)).

The Weyl group classification is not an exhaustive list of all possible $S^2$ CW complexes. The reason is another result from~\cite[p.~52, Example~2.7.9]{bjorner2006combinatorics} showing that all possible $k$-crowns can appear for infinite-dimensional Coxeter groups such as the hyperbolic triangle group family,~\cref{eq:triangleGroup}. Since the $k$-crowns are the $k$-gons appearing in the 1-skeleta of the $S^2$ complexes (cf.~\cref{fig:hultman}) it follows that the list is not complete (by inspecting that there are no vertices or faces (dual vertices)) of weight five and more.

\bibliographystyle{unsrt}

%\bibliography{bruhat.bib}

\end{document}